\newtheorem{theorem}{Theorem}
\newtheorem{corollary}[theorem]{Corollary}
\newtheorem{lemma}[theorem]{Lemma}
\theoremstyle{remark}
\newtheorem*{remark}{Remark}
\newcommand{\be}{\begin{equation}}
\newcommand{\ee}{\end{equation}}
\newcommand{\bc}{\begin{center}}
\newcommand{\ec}{\end{center}}
\newcommand{\bmt}{\begin{pmatrix}}
\newcommand{\emt}{\end{pmatrix}}
\newcommand{\dr}{\rangle\!\rangle}
\newcommand{\dl}{\langle\!\langle}
\newcommand{\ggrm}{P}
\newcommand{\grmJ}{J_P}
\newcommand{\pty}{U_{\mathrm{P}}}
\newcommand{\ptyD}{U^{\vphantom{\dagger}}_{\mathrm{P}}}
\newcommand{\ev}{e}
\newcommand{\vv}{v_0}
\newcommand{\gv}{g}
\newcommand{\hv}{h}
\newcommand{\av}{a}
\newcommand{\bv}{b}
\newcommand{\ebv}{\bar{e}}
\newcommand{\vpu}[1]{^{\vphantom{#1}}}
\newcommand{\ph}[1]{\phantom{#1}}
\newcommand{\mt}{K}
\newcommand{\str}{C}
\newcommand{\stra}{\tilde{C}}
\newcommand{\mtb}{M}
\newcommand{\lie}{L}
\newcommand{\liet}{l}
\newcommand{\cls}{\kappa}
\newcommand{\liq}{P}
\newcommand{\slGen}{B}
\newcommand{\ad}{\mathrm{ad}}
\newcommand{\sign}{\epsilon}
\DeclareMathOperator{\rel}{Re}
\DeclareMathOperator{\Tr}{Tr}
\DeclareMathOperator{\rnk}{rank}
\DeclareMathOperator{\ua}{u}
\DeclareMathOperator{\gl}{gl}
\DeclareMathOperator{\sla}{sl}
\begin{document}

\begin{titlepage}
\begin{center}
\bfseries {\Large The Lie Algebraic Significance of\\  Symmetric Informationally Complete Measurements}
\end{center}
\vspace{1 cm}
\begin{center} D.M.~Appleby, Steven T.~Flammia and Christopher A.~Fuchs
\end{center}
\begin{center} Perimeter Institute for Theoretical Physics \\ Waterloo, Ontario N2L 2Y5, Canada
 \end{center}

\vspace{.2 in}

\begin{center} December 30, 2009
\end{center}
\vspace{1.5 cm}

\begin{center} \textbf{Abstract}

\vspace{0.1 in}

\parbox{12 cm }{
Examples of symmetric informationally complete positive operator valued measures (SIC-POVMs) have been constructed in every  dimension $\le 67$.  However, it remains an open question whether they exist in all finite dimensions. A SIC-POVM is usually thought of as a highly symmetric structure in quantum state space.  However, its elements can equally well be regarded as a basis for the Lie algebra $\gl(d,\mathbb{C})$.  In this paper we examine the resulting structure constants, which are calculated from  the traces of the triple products of the SIC-POVM elements and which, it turns out, characterize the SIC-POVM up to unitary equivalence.  We show that the structure constants have numerous remarkable properties.   In particular we show that the existence of a SIC-POVM in dimension $d$ is equivalent to the existence of a certain structure in the adjoint representation of $\gl(d,\mathbb{C})$.  We hope that transforming the problem in this way, from a question about quantum state space to a question about Lie algebras, may help to make the existence problem tractable. 
}
\end{center}
\vspace{.9 cm}

\begin{center}
\parbox{12 cm }{
\setcounter{tocdepth}{1}
\tableofcontents
}
\end{center}

\end{titlepage}

{\allowdisplaybreaks
\section{Introduction}
Symmetric informationally complete positive operator-valued measures (SIC-POVMs) present us with what is, simultaneously,  one of the most interesting, and one of the most difficult and tantalizing problems in quantum information~\cite{Hoggar,Zauner,Caves,FuchsA,RBSC,SanigaPlanatRosu,FuchsB,Rehacek,WoottersA,Ingemar1,GrasslA,Renes,Ingemar2,Koenig,ZimanBuzek,selfA,KlappA,KlappB,GrasslB,Ballester,Gross,Colin,GodsilRoy,Howard,Scott,DurtA,Flammia,GrasslC,Kim,selfB,WaldronBos,RoyScott,KiblerA,selfC,Khat,Bodmann,KiblerB,Ingemar3,GrasslD,GrasslE,Fickus,selfD,FuchsSchack,selfE,Ericcson,ScottGrassl}. SIC-POVMs are important practically, with applications to quantum tomography and cryptography~\cite{FuchsA,Rehacek,Renes,ZimanBuzek,Ballester,Kim}, and to classical signal processing~\cite{Howard,Bodmann}.  However, without in any way wishing to impugn the significance of  the applications which have so far been proposed,  it appears to us that the interest of SIC-POVMs stems less from these particular proposed uses than from rather broader, more general considerations:  the sense one gets that SICs are telling us something deep, and hitherto unsuspected about the structure of quantum state space.  In spite of its being the central object about which the rest of quantum mechanics rotates, and notwithstanding the efforts of numerous investigators~\cite{BengtssonBook}, the geometry of quantum state space continues to be surprisingly ill-understood.  The hope which inspires our  efforts  is that a solution to the SIC problem will prove to be the key, not just to SIC-POVMs narrowly conceived, but to the geometry of state space in general. Such things are, by nature, unpredictable.  However,  it is not unreasonable to speculate that a better theoretical understanding of the geometry of quantum state space might have important practical consequences:  not only the applications listed above, but perhaps other applications which have yet to be conceived.   On a more foundational level one may hope that it will lead to a much improved understanding of the conceptual message of quantum mechanics~\cite{FuchsB,FuchsSchack,Ericcson,FuchsC}.   

Having said why we describe the problem as interesting, let us now explain why we describe it as tantalizing.  The trouble is that, although there is an abundance of reasons for suspecting that SIC-POVMs exist in every finite dimension (exact and high-precision numerical examples~\cite{Hoggar,Zauner,RBSC,GrasslA,selfA,GrasslB,GrasslC,GrasslD,ScottGrassl} having now been constructed in every dimension up to $67$), and in spite of the intense efforts of many people~\cite{Hoggar,Zauner,Caves,FuchsA,RBSC,SanigaPlanatRosu,FuchsB,Rehacek,WoottersA,Ingemar1,GrasslA,Renes,Ingemar2,Koenig,ZimanBuzek,selfA,KlappA,KlappB,GrasslB,Ballester,Gross,Colin,GodsilRoy,Howard,Scott,DurtA,Flammia,GrasslC,Kim,selfB,WaldronBos,RoyScott,KiblerA,selfC,Khat,Bodmann,KiblerB,Ingemar3,GrasslD,GrasslE,Fickus,selfD,FuchsSchack,selfE,Ericcson,ScottGrassl} extending over a period of more than ten years, a general existence proof continues to elude us.     In their seminal paper on the subject, published in $2004$, Renes \emph{et al}~\cite{RBSC} say ``A rigorous proof of existence of SIC-POVMs in all finite dimensions seems tantalizingly close, yet remains somehow distant.''   They could have said the same if they were writing today.

The purpose of this paper is to try to take our understanding of SIC mathematics (as it might be called) a little further forward.  
The research we report began with a chance numerical discovery made while we were working on a different problem.  Pursuing that initial numerical hint we uncovered a rich and interesting  set of connections between SIC-POVMs in dimension $d$ and the Lie Algebra  $\gl(d,\mathbb{C})$. The existence of these connections came as a surprise to us.  However, in retrospect it is, perhaps, not so surprising.  Interest in SIC-POVMs has, to date,  focused on the fact that an arbitrary density matrix can be expanded in terms of  a SIC-POVM.  However, a SIC-POVM in dimension $d$ does in fact provide a basis, not  just for the space of density matrices, but for the space of all $d\times d$ complex matrices---\emph{i.e.} the Lie algebra $\gl(d,\mathbb{C})$.  Boykin \emph{et al}~\cite{boykin} have recently shown that there is a connection between the existence problem for maximal sets of MUBs (mutually unbiased bases) and  the theory of Lie algebras.  Since SIC-POVMs share with MUBs the property of being highly symmetrical structures in quantum state space it might have been anticipated that there are also some interesting connections between SIC-POVMs and Lie algebras.

Our main result (proved in Sections~\ref{sec:QQTProp1}, \ref{sec:QQTpropGen} and~\ref{sec:QQTprop2}) is that the proposition, that a SIC-POVM exists in dimension $d$, is equivalent to a proposition about the adjoint representation of $\gl(d,\mathbb{C})$.  Our hope is that transforming the problem in this way, from a question about quantum state space to a question about Lie algebras, may help to make the SIC-existence problem tractable.  But even if this hope fails to materialize we feel that this result, along with the many other results we obtain, provides some additional insight into these structures.  

In $d$ dimensional Hilbert space $\mathcal{H}_d$ a SIC-POVM is a set of $d^2$ operators $E_1$, \dots ,$E_{d^2}$ of the form
\be
E_r = \frac{1}{d} \Pi_r
\ee
where the $\Pi_r$ are rank-$1$ projectors  with the property
\be
\Tr(\Pi_r\Pi_s) = \begin{cases} 1 \qquad & r=s \\ \frac{1}{d+1} \qquad & r\neq s \end{cases}
\ee
We will refer to the $\Pi_r$  as SIC projectors, and we will say that $\{\Pi_r\colon r=1,\dots, d^2\}$ is a SIC set. 

It follows from this definition that the $E_r$ satisfy 
\be
\sum_{r=1}^{d^2} E_r = I
\ee
(so they constitute a POVM), and that they are linearly independent (so the POVM is informationally complete).

It is an open question whether SIC-POVMs exist for all values of $d$.  However, examples have been constructed analytically in dimensions $2$--$15$ inclusive~\cite{Hoggar,Zauner,GrasslA,selfA,GrasslB,GrasslC,GrasslD,ScottGrassl}, and in dimensions $19$, $24$, $35$ and $48$~\cite{selfA,ScottGrassl}.  Moreover, high precision numerical solutions have been constructed in dimensions $2$--$67$ inclusive~\cite{RBSC,ScottGrassl}.   This lends some plausibility to the speculation that they exist in all dimensions.  For a comprehensive account of the current state of knowledge in this regard, and many new results, see the recent study by Scott and Grassl~\cite{ScottGrassl}.

All known SIC-POVMs have a group covariance property.  In other words, there exists
\begin{enumerate}
\item a  group $G$ having $d^2$ elements
\item a projective unitary representation of $G$ on $\mathcal{H}_d$: \emph{i.e.}\ a map $g\to U_g$ from $G$ to the set of unitaries such that $U_{g_1}U_{g_2} \sim U_{g_1 g_2}$ for all $g_1$, $g_2$ (where the notation ``$\sim$'' means ``equals up to a phase'')
\item a normalized vector $|\psi\rangle$ (the fiducial vector)
\end{enumerate}
such that the SIC-projectors are given by
\be
\Pi^{\vphantom{\dagger}}_g = U^{\vphantom{\dagger}}_{g} |\psi \rangle \langle \psi | U^{\dagger}_g
\ee
(where we label the  projector by the group element $g$, rather than the integer  $r$ as above). 

 Most  known SIC-POVMs  are covariant under the action of the Weyl-Heisenberg group (though not all---see Renes \emph{et al}~\cite{RBSC} and, for an explicit example of a non Weyl-Heisenberg SIC-POVM, Grassl~\cite{GrasslB}).  Here the group is $\mathbb{Z}_d \times \mathbb{Z}_d$, and the projective representation is $\mathbf{p}\to D_{\mathbf{p}}$, where $\mathbf{p} = (p_1,p_2)\in \mathbb{Z}_d \times \mathbb{Z}_d$ and $D_{\mathbf{p}}$ is the corresponding  Weyl-Heisenberg displacement operator
\be
D_{\mathbf{p}} = \sum_{r}^{d-1} \tau^{(2 r+ p_1) p_2} |r+p_1\rangle \langle r|
\ee
In this expression $\tau =  e^{\frac{i\pi (d+1) }{d}}$, the vectors  $|0\rangle, \dots |d-1\rangle$ are an orthonormal basis, and the addition in $|r+p_1\rangle$ is modulo $d$.  For more details see, for example, ref.~\cite{selfA}.

One should not attach too much weight to the fact that all known SIC-POVMs have a group covariance property as this may only  reflect the fact that group covariant SIC-POVMs are much easier to construct. So in this paper we will try to prove as much as we can without assuming such a property.  One potential benefit of this attitude is that, by accumulating enough facts about SIC-POVMs in general, we may eventually get to the point where we can  answer the question, whether  all  SIC-POVMs actually do   have a group covariance property.

The fact that the $d^2$ operators $\Pi_r$ are linearly independent means that they form a basis for the complex Lie algebra $\gl(d,\mathbb{C})$ (the set of all operators acting on $\mathcal{H}_d$).  Since the $\Pi_r$ are Hermitian, then  $i \Pi_r$ forms a basis also for the real Lie algebra $\ua(d)$ (the set of all anti-Hermitian operators acting on $\mathcal{H}_d$).   So for any operator $A \in \gl(d,\mathbb{C})$  there is a unique set of expansion coefficients $a_r$  such that
\be
A = \sum_{r=1}^{d^2} a_r \Pi_r
\ee
To find the expansion coefficients we can use the fact that
\be
\sum_{s=1}^{d^2} \Tr(\Pi_r \Pi_s) \left(\frac{d+1}{d} \delta_{st}-\frac{1}{d^2} \right) = \delta_{rt}
\ee 
from which it  follows 
\be
a_r =  \frac{d+1}{d} \Tr(\Pi_r A) - \frac{1}{d}\Tr(A) \
\label{eq:Aexpansion}
\ee
Specializing to the case $A=\Pi_r\Pi_s$ we find
\be
\Pi_r \Pi_s = \frac{d+1}{d} \left(\sum_{t=1}^{d^2} T_{rst} \Pi_t\right) - \frac{d\delta_{rs}+1}{d+1} I
\label{eq:PirPisProduct}
\ee
where
\be
T_{rst} = \Tr\left( \Pi_r \Pi_s \Pi_t\right)
\label{eq:TripProdDefA} 
\ee
To a large extent this paper consists in an exploration of the properties of these important quantities, which we will  refer to as the triple products.  They are intimately related to the geometric phase, in which context they are usually referred to as $3$-vertex Bargmann invariants (see Mukunda~\emph{et al}~\cite{mukunda}, and references cited therein).      We have, as an immediate consequence of the definition, 
\be
T^{\vphantom{*}}_{rst}=T^{\vphantom{*}}_{trs}=T^{\vphantom{*}}_{str} = T^{*}_{rts} = T^{*}_{tsr} = T^{*}_{srt}
\ee
It is convenient to define
\begin{align}
J^{\vphantom{*}}_{rst} & = \frac{d+1}{ d} (T^{\vphantom{*}}_{rst}-T^{*}_{rst})
\label{eq:JmatrixDef}
\\
R^{\vphantom{*}}_{rst} & = \frac{d+1}{ d} (T^{\vphantom{*}}_{rst}+T^{*}_{rst})
\label{eq:RmatrixDef}
\end{align}
So $J_{rst}$ is imaginary and completely anti-symmetric; $R_{rst}$ is real and completely symmetric.  Both  these quantities play a significant role in the theory.  It follows from Eq.~(\ref{eq:PirPisProduct}) that 
\be
[\Pi_r,\Pi_s] = \sum_{t=1}^{d^2} J_{rst} \Pi_t
\ee
So the $J_{rst}$ are  structure constants for the Lie algebra $\gl(d,\mathbb{C})$.  As an immediate consequence of this they satisfy the Jacobi identity:
\be
\sum_{b=1}^{d^2}\bigl(J_{rsb}J_{tba} + J_{stb}J_{rba} +J_{trb}J_{sba}\bigr) = 0
\ee
for all $r,s,t,a$.  The Jacobi identity holds for any representation of the structure constants.  In the following sections we will derive many other identities which are specific to this particular representation.

Turning to the quantities $R_{rst}$, it follows from Eq.~(\ref{eq:PirPisProduct}) that they feature in the expression for the anti-commutator
\be
\{\Pi_r,\Pi_s\}  =  \sum_{t} R_{rst} \Pi_t - \frac{2(d\delta_{rs}+1)}{d+1}I
\ee
They also play an important role in the description of quantum state space.  Let $\rho$ be any density matrix and let $p_r=\frac{1}{d}\Tr(\Pi_r \rho)$ be the probability of obtaining  outcome $r$ in the measurement described by the POVM with elements $\frac{1}{d} \Pi_r$.  Then it follows from Eq.~(\ref{eq:Aexpansion}) that $\rho$ can be reconstructed from the probabilities by
\be
\rho = \sum_{r=1}^{d^2} \left( (d+1)p_r - \frac{1}{d}\right) \Pi_r
\ee
Suppose, now, that the $p_r$ are \emph{any} set of $d^2$ real numbers.   So we do not assume that the $p_r$ are even probabilities, let alone the probabilities coming from a density matrix according to the prescription $p_r = \frac{1}{d}\Tr(\Pi_r\rho )$.  Then it is shown in ref.~\cite{selfC} that the $p_r$ are  in fact the probabilities coming from a pure state if and only if they satisfy the two conditions
\begin{align}
\sum_{r=1}^{d^2} p^2_r & = \frac{2}{d(d+1)}
\\
\sum_{r,s,t=1}^{d^2} R_{rst} p_rp_sp_t  & = \frac{2(d+7)}{d(d+1)^2}
\end{align}

Let us look at the quantities $J_{rst}$ and $R_{rst}$ in a little more detail.  For each $r$ choose a unit vector $|\psi_r\rangle$ such that $\Pi_r = |\psi_r\rangle \langle \psi_r |$.  Then the Gram matrix for these vectors is of the form
\be
G_{rs} = \langle \psi_r | \psi_s \rangle = \mt_{rs} e^{i\theta_{rs}}
\ee
where the matrix $\theta_{rs}$ is anti-symmetric and 
\be
\mt_{rs} = \sqrt{\frac{d\delta_{rs} +1}{d+1}}
\label{eq:KrsDef}
\ee
  Note that the SIC-POVM does not determine the angles $\theta_{rs}$ uniquely since making the replacements $|\psi_r\rangle \to e^{i\phi_r} |\psi_r \rangle$ leaves the SIC-POVM unaltered, but changes the angles $\theta_{rs}$ according to the prescription $\theta_{rs} \to \theta_{rs}-\phi_r + \phi_s$.  This freedom to rephase the vectors $|\psi_r\rangle$ is not usually important.  However, it sometimes has interesting consequences (see Section~\ref{sec:GGstarProp}).    It can be thought of as a kind of gauge freedom.

The Gram matrix satisfies an important identity.  Every SIC-POVM has the  $2$-design property~\cite{RBSC,KlappA}
\be
\sum_{r=1}^{d^2} \Pi_r \otimes \Pi_r = \frac{2d}{d+1} P_{\mathrm{sym}}
\ee
where $P_{\mathrm{sym}}$ is the projector onto the symmetric subspace of $\mathcal{H}_d\otimes \mathcal{H}_d$.  Expressed in terms of the Gram matrix this becomes
\be
\sum_{r=1}^{d^2} G_{s_1 r}G_{s_2 r} G_{r t_1} G_{r t_2} = 
\frac{d}{d+1} \bigl( G_{s_1t_1}G_{s_2t_2} + G_{s_1t_2}G_{s_2t_1}
\bigr)
\label{eq:2designProp}
\ee

Turning to the triple products we have
\be
T_{rst} = G_{rs} G_{st} G_{tr} 
=
\mt_{rs}\mt_{st}\mt_{tr} e^{i\theta_{rst}}
\label{eq:TripProdDefB} 
\ee
where
\be
\theta_{rst} = \theta_{rs}+\theta_{st}+\theta_{tr} 
\ee
Note that the tensor $\theta_{rst}$ is completely anti-symmetric. In particular $\theta_{rst}=0$ if any two of the indices are the same.   Note also that re-phasing the vectors $|\psi_r\rangle$ leaves the tensors $T_{rst}$ and $\theta_{rst}$ unchanged.  They are in that sense gauge invariant.

Finally, we have the following expressions for  $J_{rst}$ and $R_{rst}$:
\begin{align}
J_{rst} & = \frac{2 i}{d\sqrt{d+1}} \sin \theta_{rst}
\\
R_{rst} & = 
\frac{2(d+1)}{d} \mt_{rs}\mt_{st}\mt_{tr}\cos\theta_{rst}
\end{align}
Like the triple products, $J_{rst}$ and $R_{rst}$ are gauge invariant.

For later reference let us note that the matrix $J_r$, with matrix elements
\be
(J_r)_{st} = J_{rst}
\label{eq:JrMatsDef}
\ee
is the adjoint representative of $\Pi_r$ in the SIC-projector basis:
\be
\ad_{\Pi_r} \Pi_s = [\Pi_r,\Pi_s] = \sum_{t=1}^{d^2} J_{rst} \Pi_t
\ee

It can be seen that all the interesting features of the tensor $G_{rs}$ (respectively, the tensors $T_{rst}$, $J_{rst}$ and $R_{rst}$) are contained in the order-$2$ angle tensor $\theta_{rs}$ (respectively, the order-$3$ angle tensor $\theta_{rst}$).  It is also easy to see that, for any unitary $U$, the transformation 
\be
\Pi_r \to U \Pi_r U^{\dagger}
\ee
leaves the angle tensors invariant.  This suggests that we shift our focus from individual SIC-POVMs to  families of unitarily equivalent SIC-POVMs---SIC-families, as we will call them for short.

We begin our investigation in Section~\ref{sec:AngleTensors} by giving  necessary and sufficient conditions for an arbitrary tensor $\theta_{rs}$ (respectively $\theta_{rst}$) to be the rank-$2$ (respectively rank-$3$) angle tensor corresponding to a SIC-family.  We also show that either angle tensor uniquely determines the corresponding SIC-family.   Finally we describe a method for reconstructing the SIC-family, starting from a knowledge of either of the two angle tensors.

In Sections~\ref{sec:QQTProp1}, \ref{sec:QQTpropGen} and~\ref{sec:QQTprop2} we prove the central result of this paper:  namely, that the existence of a SIC-POVM in dimension $d$ is equivalent to the existence of a certain very special set of  matrices in the adjoint representation of $\gl(d,\mathbb{C})$. In Section~\ref{sec:QQTProp1} we show that, for any SIC-POVM, the adjoint matrices $J_r$ have the spectral decomposition
\be
J^{\vphantom{\mathrm{T}}}_r = Q^{\vphantom{\mathrm{T}}}_{r} - Q^{\mathrm{T}}_{r}
\ee 
where $Q_r$ is a rank $d-1$ projector which has the remarkable property of being orthogonal to its own transpose:
\be
Q^{\vphantom{\mathrm{T}}}_r Q^{\mathrm{T}}_r =  0
\ee
We refer to this feature of the adjoint  matrices as the $Q$-$Q^{\mathrm{T}}$ property. In Section~\ref{sec:QQTProp1} we also show that from a knowledge of the $J$ matrices it is possible to reconstruct the corresponding SIC-family.  
In Section~\ref{sec:QQTpropGen} we characterize the general class of projectors which have the property of being orthogonal to their own transpose. Then, in Section~\ref{sec:QQTprop2}, we prove a converse of the result established in Section~\ref{sec:QQTProp1}.   The $Q$-$Q^{\mathrm{T}}$ property is not completely equivalent to the property of being a SIC set. However, it turns out that it is, in a certain sense, very nearly equivalent.   To be more specific:  let $\lie_r$ be any set of $d^2$ Hermitian operators which constitute a basis for $\gl(d,\mathbb{C})$ and let $\str_r$ be the adjoint representative of $L_r$ in this basis.
Then the necessary and sufficient condition for  the $\str_r$ to have the spectral decomposition
\be
\str^{\vphantom{\mathrm{T}}}_r = Q^{\vphantom{\mathrm{T}}}_{r} - Q^{\mathrm{T}}_{r}
\ee
where $Q_r$ is a rank $d-1$ projector such that $Q^{\vphantom{\mathrm{T}}}_r Q^{\mathrm{T}}_r =  0$ is that there exists a SIC set $\Pi_r$ such that $\lie_r = \sign_r (\Pi_r + \alpha I)$ for some fixed number $\alpha\in\mathbb{R}$ and signs $\sign_r=\pm 1$. In particular, the existence of an Hermitian basis for $\gl(d,\mathbb{C})$ having the $Q$-$Q^{\mathrm{T}}$ property is both necesary and sufficient  for the existence of a SIC-POVM in dimension $d$. 

In Section~\ref{eq:SICSandsldC} we digress briefly, and consider  $\sla(d,\mathbb{C})$ (the Lie algebra consisting of all trace-zero $d\times d$ complex matrices).  As we have explained, this paper is motivated by the hope that a Lie algebraic perspective will cast light on the SIC-existence problem, rather than by an interest in Lie algebras as such.     We focus on $\gl(d,\mathbb{C})$ because that is the case where the connection with SIC-POVMs seems most straightforward.  However a SIC-POVM also gives rise to an interesting geometrical structure in $\sla(d,\mathbb{C})$, as we show in Section~\ref{eq:SICSandsldC}.

In Section~\ref{sec:AdjRepFurther} we derive a number of additional identities satisfied by  the $J$ and $Q$ matrices.

The complex projectors $Q\vpu{T}_r$, $Q^{\mathrm{T}}_r$ and the real projector $Q\vpu{T}_r + Q^{\mathrm{T}}_r$ define three families of subspaces.  It turns out that there are some interesting geometrical relationships between these subspaces, which we study in Section~\ref{sec:geometry}.
  
Finally, in Section~\ref{sec:GGstarProp} we show that, with the appropriate choice of gauge,   the Gram matrix corresponding to a Weyl-Heisenberg covariant SIC-family has a feature analogous to the $Q$-$Q^{\mathrm{T}}$ property, which we call the $\ggrm$-$\ggrm^{\mathrm{T}}$ property.  It is an open question whether this result generalizes to other SIC-families, not covariant with respect to the Weyl-Heisenberg group.

\section{The Angle Tensors}
\label{sec:AngleTensors}
The purpose of this section is to establish the necessary and sufficient conditions  for an arbitrary tensor $\theta_{rs}$ (respectively $\theta_{rst}$) to be the order-$2$ (respectively order-$3$) angle tensor for a SIC-family.  We will also show that either one of the angle tensors is enough to uniquely determine the SIC-family.  Moreover, we will describe  explicit procedures for reconstructing the family,  starting from a knowledge of one of the angle tensors.

We begin by considering the general class of  POVMs  (not just SIC-POVMs) which consist of $d^2$ rank-$1$ elements. A POVM of this type is thus defined by a set of $d^2$ vectors $|\xi_1\rangle , \dots, |\xi_{d^2}\rangle$ with the property
\be
\sum_{r=1}^{d^2}|\xi_r\rangle \langle \xi_r | = I
\ee
Note that $\sum_{r=1}^{d^2} \bigl\| |\xi_r \rangle \bigr\|^2 = d$, so the vectors $|\xi_r\rangle$ cannot all be normalized. In the particular case of a SIC-POVM the vectors all have the same norm $\bigl\| |\xi_r \rangle \bigr\| = \frac{1}{\sqrt{d}}$.  However in the general case they may have different norms.

   Given a set of such vectors consider the Gram matrix
\be
\ggrm_{rs} = \langle \xi_r | \xi_s \rangle
\ee
Clearly the Gram matrix cannot determine the POVM uniquely since if $U$ is any unitary operator then the vectors $U|\xi_r\rangle$ will define another POVM having the same Gram matrix.  However, the theorem we now prove shows that this is the only freedom.  In other words, the Gram matrix fixes the POVM up to unitary equivalence.  The theorem also provides us with  a criterion for deciding whether an arbitrary $d^2 \times d^2$ matrix $\ggrm$ is  the Gram matrix corresponding to a POVM of the specified type.  As a corollary this will give us a criterion for deciding whether an arbitrary tensor $\theta_{rs}$ is specifically the order-$2$ angle tensor for a SIC-family.

\begin{theorem}
\label{thm:GramSICcondition}
Let $\ggrm$ be any  $d^2 \times d^2$ Hermitian matrix.  Then the following conditions are equivalent:
\begin{enumerate}
\item[(1)] $\ggrm$ is a rank $d$ projector.
\item[(2)] $\ggrm$ satisfies the trace identities
\be
\Tr(\ggrm) = \Tr(\ggrm^2) = \Tr(\ggrm^3) = \Tr(\ggrm^4) = d
\ee
\item[(3)] $\ggrm$ is the Gram matrix for a set of $d^2$  vectors $|\xi_r\rangle$ (not all normalized) such that $ |\xi_r\rangle \langle \xi_r|$ is a POVM:
\begin{align}
 \langle \xi_r | \xi_s \rangle & = \ggrm_{rs}
\\
\sum_{r=1}^{d^2}  |\xi_r \rangle \langle \xi_r | & =  I
\end{align}
\end{enumerate}

Suppose  $\ggrm$ satisfies these conditions. To  construct a  POVM corresponding to $\ggrm$  let the $d$ column vectors 
\be
\bmt \xi_{11} \\ \xi_{12} \\ \vdots \\ \xi_{1d^2}
\emt,
\bmt \xi_{21} \\ \xi_{22} \\ \vdots \\ \xi_{2d^2}
\emt,
\dots,
\bmt \xi_{d1} \\ \xi_{d2} \\ \vdots \\ \xi_{dd^2}
\emt
\ee
be any orthonormal basis for the subspace onto which $\ggrm$ projects.  Define
\be
|\xi^{\vphantom{*}}_r\rangle =\sum_{a=1}^{d} \xi^{*}_{ar}|a\rangle
\label{eq:thm1PsiVecDef}
\ee
where the vectors $|a\rangle$ are  any orthonormal basis for $\mathcal{H}_d$.  Then 
$\ggrm$ is the Gram matrix for the vectors $|\xi_1\rangle, \dots , |\xi_{d^2}\rangle$.  Moreover, the necessary and sufficient condition for any other  set of vectors $|\eta_1\rangle, \dots, |\eta_{d^2}\rangle$ to have Gram matrix $\ggrm$ is that there exist  a unitary operator $U$ such that
\be
|\eta_r \rangle = U |\xi_r \rangle
\ee
for all $r$.
\end{theorem}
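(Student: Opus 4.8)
The plan is to prove the three equivalences by establishing (1)$\Leftrightarrow$(2) and (1)$\Leftrightarrow$(3), and then to extract the explicit reconstruction and the uniqueness-up-to-unitary statement directly from the proof of (1)$\Rightarrow$(3); the whole argument is elementary linear algebra. For (1)$\Leftrightarrow$(2): if $\ggrm$ is a rank $d$ projector its eigenvalues are $d$ copies of $1$ and $d^2-d$ copies of $0$, so $\Tr(\ggrm^k)=d$ for all $k\ge1$, giving (2). For the converse I would use that $\ggrm$ is Hermitian, write its eigenvalues as real numbers $\lambda_1,\dots,\lambda_{d^2}$, and form
\be
\sum_i \lambda_i^2(\lambda_i-1)^2 \;=\; \Tr(\ggrm^4)-2\Tr(\ggrm^3)+\Tr(\ggrm^2)\;=\;d-2d+d\;=\;0 .
\ee
Each summand is non-negative, so every $\lambda_i\in\{0,1\}$, and then $\Tr(\ggrm)=\sum_i\lambda_i=d$ forces exactly $d$ of them to equal $1$; thus $\ggrm$ is a rank $d$ projector.

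For (1)$\Leftrightarrow$(3) the idea is to package the vectors into a matrix. Fix an orthonormal basis $|a\rangle$ of $\mathcal{H}_d$ and let $X$ be the $d\times d^2$ matrix with entries $X_{ar}=\langle a|\xi_r\rangle$. Then the Gram condition is exactly $X^{\dagger}X=\ggrm$, and the resolution of the identity $\sum_r|\xi_r\rangle\langle\xi_r|=I$ is exactly $XX^{\dagger}=I_d$. Given (3), these two identities give $\ggrm^{\dagger}=\ggrm$ and $\ggrm^2=X^{\dagger}(XX^{\dagger})X=X^{\dagger}X=\ggrm$, so $\ggrm$ is an orthogonal projector, and $\rnk\ggrm=\rnk X=d$ since $XX^{\dagger}=I_d$; this proves (3)$\Rightarrow$(1). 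Conversely, given a rank $d$ projector $\ggrm$, taking the $d$ column vectors displayed in the statement to be an orthonormal basis of the range of $\ggrm$ encodes precisely $\sum_r\xi^{*}_{ar}\xi_{br}=\delta_{ab}$ (orthonormality) and $\ggrm_{rs}=\sum_a\xi_{ar}\xi^{*}_{as}$ (the spectral form of $\ggrm$), and then a one-line computation with $|\xi_r\rangle=\sum_a\xi^{*}_{ar}|a\rangle$ yields $\langle\xi_r|\xi_s\rangle=\ggrm_{rs}$ and $\sum_r|\xi_r\rangle\langle\xi_r|=\sum_{a,b}\bigl(\sum_r\xi^{*}_{ar}\xi_{br}\bigr)|a\rangle\langle b|=I$. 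This is simultaneously (1)$\Rightarrow$(3) and a verification of the reconstruction recipe in the statement; the freedom in choosing $|a\rangle$ and the orthonormal basis of the range only produces unitarily equivalent outcomes, consistent with the uniqueness clause.

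For uniqueness, suppose $|\eta_r\rangle\in\mathcal{H}_d$ also has Gram matrix $\ggrm$, with associated $d\times d^2$ matrix $Y$, so $Y^{\dagger}Y=\ggrm$. The one step that needs a moment's thought is that $\ggrm$ being a rank $d$ projector \emph{forces} $YY^{\dagger}=I_d$: by the singular value decomposition, $Y^{\dagger}Y=\ggrm$ makes every singular value of $Y$ equal to $0$ or $1$ with exactly $d$ of them equal to $1$, so $YY^{\dagger}=I_d$. Hence any vector set with Gram matrix $\ggrm$ is automatically a POVM, and in particular spans $\mathcal{H}_d$. Now set $U=YX^{\dagger}$; from $XX^{\dagger}=YY^{\dagger}=I_d$ and $X^{\dagger}X=Y^{\dagger}Y=\ggrm$ one gets $X\ggrm=X$ and $Y\ggrm=Y$, whence $U^{\dagger}U=X\ggrm X^{\dagger}=XX^{\dagger}=I_d$ and $UU^{\dagger}=Y\ggrm Y^{\dagger}=YY^{\dagger}=I_d$, so $U$ is unitary, and $UX=Y\ggrm=Y$, i.e.\ $|\eta_r\rangle=U|\xi_r\rangle$ for all $r$. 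The converse is immediate, since $U^{\dagger}U=I$ gives $\langle\eta_r|\eta_s\rangle=\langle\xi_r|\xi_s\rangle$.

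The main obstacle, such as it is, is just the observation in the uniqueness step that a rank $d$ projective Gram matrix automatically enforces $YY^{\dagger}=I_d$ — this is what guarantees that the intertwiner $U$ is a genuine unitary on all of $\mathcal{H}_d$ rather than merely a partial isometry supported on the span of the $|\xi_r\rangle$. Everything else is routine, the only mild nuisance being consistent handling of the conjugation convention built into the definition $|\xi_r\rangle=\sum_a\xi^{*}_{ar}|a\rangle$.
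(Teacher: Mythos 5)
Your proof is correct and follows essentially the same route as the paper's: the eigenvalue identity $\sum_i\lambda_i^2(\lambda_i-1)^2=\Tr(\ggrm^4)-2\Tr(\ggrm^3)+\Tr(\ggrm^2)=0$ for (2)$\Rightarrow$(1), an orthonormal basis of the range of $\ggrm$ for (1)$\Rightarrow$(3), and the observation that any two vector systems with Gram matrix $\ggrm$ encode orthonormal bases of the same $d$-dimensional subspace and are therefore related by a unitary. The one place you are more explicit than the paper is in the uniqueness step, where you justify (via the singular value decomposition) that $Y^{\dagger}Y=\ggrm$ being a rank-$d$ projector forces $YY^{\dagger}=I_d$, i.e.\ that the $|\eta_r\rangle$ automatically constitute a POVM --- a fact the paper invokes only parenthetically.
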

\begin{proof}
We begin by showing that $(3) \implies (1)$.
Suppose  $|\xi_1\rangle, \dots |\xi_{d^2}\rangle$ is any set of $d^2$ vectors such that  $|\xi_r\rangle \langle \xi_r | $ is  a POVM.  So
\be
\sum_{r=1}^{d^2} |\xi_r \rangle \langle \xi_r |  = I
\label{eq:thm1InterC}
\ee
Let
\be
\ggrm_{rs} =  \langle \xi_r | \xi_s \rangle
\ee
be the Gram matrix. 
Then $\ggrm$ is Hermitian. Moreover, $P^2 = P$ since 
\begin{align}
\sum_{t=1}^{d^2} \ggrm_{rt}\ggrm_{ts}& =  \langle \xi_r | \left( \sum_{t=1}^{d^2} |\xi_t \rangle \langle \xi_s | 
\right) |\xi_r \rangle 
\nonumber
\\
& =  \langle \xi_r | \xi_s \rangle
\nonumber
\\
& = \ggrm_{rs}
\end{align}
Also
\be
\Tr(\ggrm) = \sum_{r=1}^{d^2}  \langle \xi_r | \xi_r \rangle = d
\ee
(as can be seen by  taking the trace on both sides of Eq.~(\ref{eq:thm1InterC})).  
So $\ggrm$ is a rank-$d$ projector.

We next show that $(1) \implies (3)$.  Let $\ggrm$ be a rank-$d$ projector, and let the $d$ column vectors  
\be
\bmt \xi_{11} \\ \xi_{12} \\ \vdots \\ \xi_{1d^2}
\emt,
\bmt \xi_{21} \\ \xi_{22} \\ \vdots \\ \xi_{2d^2}
\emt,
\dots,
\bmt \xi_{d1} \\ \xi_{d2} \\ \vdots \\ \xi_{dd^2}
\emt
\ee
be an orthonormal basis for the subspace onto which it projects.  So
\begin{align}
\sum_{r=1}^{d^2} \xi^{*}_{ar} \xi^{\vpu{*}}_{br} &= \delta_{ab}
\label{eq:thm1InterA}
\\
\intertext{for all $a,b$, and}
\sum_{a=1}^{d^2} \xi^{\vpu{*}}_{ar} \xi^{*}_{as} & = \ggrm_{rs} 
\label{eq:thm1InterB}
\end{align}
for all $r,s$.  
Now let $|\xi_1\rangle, \dots |\xi_{d^2}\rangle$ be the vectors defined by Eq.~(\ref{eq:thm1PsiVecDef}).  Then it follows from Eq.~(\ref{eq:thm1InterA})  that 
\begin{align}
\sum_{r=1}^{d^2} |\xi_r \rangle \langle \xi_r | & =  \sum_{a,b=1}^{d} \left(\sum_{r=1}^{d^2} \xi^{*}_{ar} \xi^{\vpu{*}}_{br}\right) |a\rangle \langle b|
\nonumber
\\
& =  \sum_{a=1}^{d} |a\rangle \langle a |
\nonumber
\\
& =  I
\\
\intertext{implying that $ |\xi_r\rangle \langle \xi_r | $ is POVM.  Also, it follows from Eq.~(\ref{eq:thm1InterB}) that}
\langle \xi^{\vpu{*}}_r | \xi^{\vpu{*}}_s \rangle & = \sum_{a=1}^{d} \xi^{\vpu{*}}_{ar} \xi^{*}_{as}= \ggrm^{\vpu{*}}_{rs}
\end{align}
implying that the $|\xi_r\rangle$ have Gram matrix $\ggrm$.

We next turn to condition (2).  The fact that $(1) \implies (2)$ is immediate.  To prove the reverse implication observe that condition $(2)$ implies
\be
\Tr(\ggrm^4) - 2 \Tr (\ggrm^3) + \Tr(\ggrm^2) = 0
\label{eq:thm1InterD}
\ee
Let $\lambda_1, \dots, \lambda_{d^2}$ be the eigenvalues of $\ggrm$.  Then  Eq.~(\ref{eq:thm1InterD}) implies
\be
\sum_{r=1}^{d^2} \lambda_r^2 (\lambda_r-1)^2 = 0
\ee
It follows that each eigenvalue is either $0$ or $1$.  Since $\Tr(\ggrm)=d$ we must have $d$ eigenvalues $=1$ and the rest all $=0$.  So $\ggrm$ is a rank-$d$ projector.  

It remains to show that the POVM corresponding to a given rank-$d$ projector is unique up to unitary equivalence. To prove this let $\ggrm$ be a rank-$d$ projector,  let $|\xi_r\rangle$ be the vectors defined by  Eq.~(\ref{eq:thm1PsiVecDef}), and let  $|\eta_1\rangle, \dots , |\eta_{d^2}\rangle$ be any other set of vectors such that
\be
\langle \eta_r| \eta_s \rangle = P_{rs}
\ee
for all $r,s$.  Define
\be
\eta_{ar} = \langle \eta_r | a\rangle
\ee
Then
\begin{align}
\sum_{r=1}^{d^2} \eta^{*}_{ar} \eta^{\vpu{*}}_{br}  & = \langle a | \left(\sum_{r=1}^{d^2} |\eta_r \rangle \langle \eta_r | 
\right) |b\rangle
=\delta_{ab}
\\
\intertext{(because $|\eta_r\rangle \langle \eta_r |$ is a POVM) and}
\sum_{a=1}^{d} \eta^{\vpu{*}}_{ar} \eta^{*}_{as} & = \ggrm_{rs}
\end{align}
(because the $|\eta_r\rangle$ have Gram matrix $P$).  So the $d$ column vectors
\be
\bmt \eta_{11} \\ \eta_{12} \\ \vdots \\ \eta_{1d^2}
\emt,
\bmt \eta_{21} \\ \eta_{22} \\ \vdots \\ \eta_{2d^2}
\emt,
\dots,
\bmt \eta_{d1} \\ \eta_{d2} \\ \vdots \\ \eta_{dd^2}
\emt
\ee
are an orthonormal basis for the subspace onto which $\ggrm$ projects.  But the column vectors
\be
\bmt \xi_{11} \\ \xi_{12} \\ \vdots \\ \xi_{1d^2}
\emt,
\bmt \xi_{21} \\ \xi_{22} \\ \vdots \\ \xi_{2d^2}
\emt,
\dots,
\bmt \xi_{d1} \\ \xi_{d2} \\ \vdots \\ \xi_{dd^2}
\emt
\ee
are also an orthonormal basis for this subspace.  So there must exist a $d\times d$ unitary matrix $U_{ab}$ such that
\be
\eta_{ar} = \sum_{b=1}^{d} U_{ab} \xi_{br}
\ee
for all $a, r$.  Define
\be
U= \sum_{a,b=1}^{d} U^{*}_{ab} |a \rangle \langle b|
\ee
Then
\be
|\eta_r \rangle = U | \xi_r \rangle
\ee
for all $r$.
\end{proof}
In the case of a SIC-POVM we have 
\be
|\xi_r\rangle = \frac{1}{\sqrt{d}} |\psi_r\rangle
\ee
where the vectors $|\psi_r\rangle$ are normalized, and
\be
\ggrm_{rs} = \frac{1}{d} G_{rs} = \frac{1}{d} K_{rs} e^{i\theta_{rs}}
\label{eq:SICGramProjectorDef}
\ee
where $G$ is the Gram matrix of the vectors $|\psi_r\rangle$ and $\theta_{rs}$ is the order-$2$ angle tensor.  In the sequel we will distinguish these matrices by referring to $G$ as the Gram matrix and $\ggrm$ as the Gram projector.

We have
\begin{corollary}
\label{cor:order2AngleTensor}
Let $\theta_{rs}$ be a real anti-symmetric tensor.  Then the following statements are equivalent:
\begin{enumerate}
\item $\theta_{rs}$ is an order-$2$ angle tensor corresponding to a SIC-family.
\item $\theta_{rs}$ satisfies
\be
\sum_{t=1}^{d^2} K_{rt}K_{ts} e^{i(\theta_{rt}+\theta_{ts})} = d K_{rs} e^{i\theta_{rs}}
\label{eq:sicProjCond}
\ee
for all $r,s$.
\item $\theta_{rs}$ satisfies  
\begin{align}
\sum_{r,s,t=1}^{d^2} K_{rs}K_{st}K_{tr} e^{i (\theta_{rs} + \theta_{st} + \theta_{tr})} & = d^4
\label{eq:sicTraceCondA}
\\
\intertext{and}
\sum_{r,s,t,u=1}^{d^2} K_{rs} K_{st}K_{tu}K_{ur} e^{i(\theta_{rs}+\theta_{st} + \theta_{tu} + \theta_{ur})} & = d^5
\label{eq:sicTraceCondB}
\end{align}
\end{enumerate} 

Let $\Pi\vpu{'}_r$,  $\Pi'_r$ be two different SIC-sets, and let $\theta\vpu{'}_{rs}$, $\theta'_{rs}$ be  corresponding  order-$2$ angle tensors.  Then there exists a unitary $U$ such that
\be
\Pi'_r = U \Pi\vpu{'}_r U^{\dagger}
\ee
for all $r$    if and only if 
\be
\theta'_{rs} = \theta\vpu{'}_{rs} - \phi_r + \phi_s 
\ee
for some arbitrary set of phase angles $\phi_r$ (in other words two SIC-sets are unitarily equivalent if and only if their order-$2$ angle tensors are gauge equivalent).  

A SIC-family can be reconstructed from its order-$2$ angle tensor $\theta_{rs}$ by calculating an orthonormal basis for the subspace onto which the Gram projector
\be
\ggrm_{rs} = \frac{1}{d} K_{rs} e^{i\theta_{rs}}
\ee
projects, as described in Theorem~~\ref{thm:GramSICcondition}.
\end{corollary}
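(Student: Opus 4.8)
The plan is to reduce every clause to Theorem~\ref{thm:GramSICcondition} applied to the Gram projector $\ggrm_{rs} = \frac{1}{d} K_{rs} e^{i\theta_{rs}}$. Since $K$ is real symmetric and $\theta$ real antisymmetric, this $\ggrm$ is automatically Hermitian, so the theorem applies. The first thing I would record is that two of the four trace conditions in part~(2) of Theorem~\ref{thm:GramSICcondition} hold \emph{automatically} for this $\ggrm$, for \emph{any} antisymmetric $\theta$: using $K_{rr}=1$ gives $\Tr(\ggrm) = \frac{1}{d}\sum_r K_{rr} = d$, and using $K_{rs}^2 = \frac{d\delta_{rs}+1}{d+1}$ gives $\Tr(\ggrm^2) = \frac{1}{d^2}\sum_{r,s} K_{rs}^2 = d$. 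Hence for our $\ggrm$ the hypothesis of Theorem~\ref{thm:GramSICcondition}(2) is equivalent to the single pair $\Tr(\ggrm^3) = \Tr(\ggrm^4) = d$, and writing these traces out in components and clearing the powers of $d$ gives exactly Eqs.~(\ref{eq:sicTraceCondA}) and~(\ref{eq:sicTraceCondB}). So statement~(3) of the corollary is literally condition~(2) of Theorem~\ref{thm:GramSICcondition} for this $\ggrm$.

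Next I would identify statement~(2) of the corollary with the assertion ``$\ggrm$ is a rank-$d$ projector.'' Expanding $\ggrm^2 = \ggrm$ in components, $\sum_t \ggrm_{rt}\ggrm_{ts} = \ggrm_{rs}$ becomes Eq.~(\ref{eq:sicProjCond}) after clearing $d$'s; and since $\ggrm$ is Hermitian, idempotence already makes it an orthogonal projector, whose rank is $\Tr(\ggrm)=d$. Combining this with the previous paragraph, statements~(2) and~(3) of the corollary are each equivalent to condition~(1) of Theorem~\ref{thm:GramSICcondition}. To link these to statement~(1) I would use condition~(3) of that theorem: $\ggrm$ is a rank-$d$ projector iff it is the Gram matrix of a rank-$1$ POVM $\{|\xi_r\rangle\langle\xi_r|\}$. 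For such a POVM the diagonal entries $\ggrm_{rr}=\frac{1}{d}$ force $\bigl\||\xi_r\rangle\bigr\| = \frac{1}{\sqrt d}$, so $|\xi_r\rangle = \frac{1}{\sqrt d}|\psi_r\rangle$ with $|\psi_r\rangle$ normalized, and then $|\langle\psi_r|\psi_s\rangle|^2 = d^2|\ggrm_{rs}|^2 = K_{rs}^2 = \frac{1}{d+1}$ for $r\neq s$; that is, $\Pi_r = |\psi_r\rangle\langle\psi_r|$ is a SIC set and, by construction, $\theta_{rs}$ is one of its order-$2$ angle tensors. Conversely the Gram projector of any SIC-family is a rank-$d$ projector by the implication $(3)\Rightarrow(1)$ of Theorem~\ref{thm:GramSICcondition}. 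This closes the loop $(1)\Leftrightarrow(2)\Leftrightarrow(3)$.

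For the unitary-equivalence clause I would work with phased vector systems. Given SIC sets $\Pi\vpu{'}_r$, $\Pi'_r$ with selected order-$2$ angle tensors $\theta\vpu{'}_{rs}$, $\theta'_{rs}$, fix unit vectors $|\psi_r\rangle$, $|\chi_r\rangle$ realizing those tensors, so that $\frac{1}{\sqrt d}|\psi_r\rangle$ and $\frac{1}{\sqrt d}|\chi_r\rangle$ are rank-$1$ POVMs with Gram projectors $\frac{1}{d}K_{rs}e^{i\theta_{rs}}$ and $\frac{1}{d}K_{rs}e^{i\theta'_{rs}}$ respectively. If $\theta'_{rs} = \theta\vpu{'}_{rs} - \phi_r + \phi_s$, then the rephased POVM $\frac{1}{\sqrt d}e^{i\phi_r}|\psi_r\rangle$ has exactly the same Gram projector as $\frac{1}{\sqrt d}|\chi_r\rangle$, so the uniqueness clause of Theorem~\ref{thm:GramSICcondition} supplies a unitary $U$ with $|\chi_r\rangle = e^{i\phi_r}U|\psi_r\rangle$, hence $\Pi'_r = U\Pi\vpu{'}_r U^{\dagger}$. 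Conversely, if $\Pi'_r = U\Pi\vpu{'}_r U^{\dagger}$, then $U|\psi_r\rangle$ and $|\chi_r\rangle$ both generate $\Pi'_r$ and so differ only by phases, $|\chi_r\rangle = e^{i\phi_r}U|\psi_r\rangle$; comparing overlaps, $\theta'_{rs} = \theta\vpu{'}_{rs} - \phi_r + \phi_s$. The reconstruction claim is then immediate: given a $\theta_{rs}$ satisfying the equivalent conditions, $\ggrm = \frac{1}{d}Ke^{i\theta}$ is a rank-$d$ projector, and the explicit recipe in Theorem~\ref{thm:GramSICcondition}---pick an orthonormal basis of the range of $\ggrm$, form the $|\xi_r\rangle$, rescale to $|\psi_r\rangle = \sqrt d\,|\xi_r\rangle$---yields a SIC-family with $\theta_{rs}$ as an angle tensor, unique up to unitary equivalence by the same clause.

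I do not expect a genuine obstacle here: the corollary is Theorem~\ref{thm:GramSICcondition} specialized to $\ggrm = \frac{1}{d}Ke^{i\theta}$ and unwound. The two places that require care are (i) verifying the ``free'' identities $\Tr(\ggrm)=\Tr(\ggrm^2)=d$ from the explicit form~(\ref{eq:KrsDef}) of $K_{rs}$, which is what lets us drop two of the four trace conditions, and (ii) the phase bookkeeping in the unitary-equivalence clause, where one must keep track of the fact that the order-$2$ angle tensor of a given SIC set is only defined up to the gauge transformations $\theta_{rs}\mapsto\theta_{rs}-\phi_r+\phi_s$, so the precise statement concerns gauge-equivalence classes of tensors.
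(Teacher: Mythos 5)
Your proposal is correct and follows essentially the same route as the paper: both reduce every clause to Theorem~\ref{thm:GramSICcondition} applied to the Gram projector $\ggrm = \frac{1}{d}Ke^{i\theta}$, using the automatic identities $\Tr(\ggrm)=\Tr(\ggrm^2)=d$ to pare condition (2) of that theorem down to Eqs.~(\ref{eq:sicTraceCondA})--(\ref{eq:sicTraceCondB}), and handling the unitary-equivalence clause by rephasing the vectors and invoking the uniqueness part of the theorem. The only difference is that you spell out explicitly the check that a rank-$d$ projector of this form yields vectors of norm $1/\sqrt{d}$ with the SIC overlaps, which the paper leaves implicit.
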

\begin{remark}
The sense in which we are using the term ``gauge equivalence'' is explained   in the passage immediately following Eq.~(\ref{eq:KrsDef}).

Note that condition $(2)$ imposes $d^2(d^2-1)/2$ independent constraints  (taking account of the anti-symmetry of $\theta_{rs}$).  Condition $(3)$, by contrast, only imposes $2$ independent constraints.  It is to be observed, however, that the price we pay for the reduction in the number of equations is that Eqs.~(\ref{eq:sicTraceCondA}) and~(\ref{eq:sicTraceCondA}) are respectively cubic and quartic in the phases, whereas Eq.~(\ref{eq:sicProjCond}) is only quadratic.
\end{remark}
\begin{proof}
Let $\theta_{rs}$ be an arbitrary anti-symmetric tensor, and define
\be
\ggrm_{rs} = \frac{1}{d} K_{rs} e^{i\theta_{rs}}
\ee
The anti-symmetry of $\theta_{rs}$ means that $\ggrm$ is automatically Hermitian.  So it follows from Theorem~\ref{thm:GramSICcondition} that a necessary and sufficient condition for $P_{rs}$ to be a rank-$d$ projector, and for $\theta_{rs}$ to be the order-$2$ angle tensor of a SIC-family, is that
\be
\sum_{t=1}^{d^2} K_{rt}K_{ts} e^{i(\theta_{rt}+\theta_{ts})} = d K_{rs} e^{i\theta_{rs}}
\ee
for all $r,s$. 

To prove the equivalence of conditions $(1)$ and $(3)$ note that the conditions $\Tr(\ggrm) = \Tr(\ggrm^2)=d$ are an automatic consequence of $\ggrm$ having the specified form.  So it follows from Theorem~\ref{thm:GramSICcondition} that $\theta_{rs}$ is the order-$2$ angle tensor of a SIC-family if and only if Eqs.~(\ref{eq:sicTraceCondA}) and~(\ref{eq:sicTraceCondB}) are satisfied. 

Now let $\Pi\vpu{'}_r$, $\Pi'_r$ be two SIC-sets and let $\theta\vpu{'}_{rs}$, $\theta'_{rs}$ be  order-$2$ angle tensors corresponding to them.  Then there exist normalized vectors $|\psi\vpu{'}_r\rangle$, $|\psi'_{r}\rangle$ such that
\begin{align}
\Pi\vpu{'}_r & = |\psi\vpu{'}_r\rangle \langle \psi\vpu{'}_r |   &  \Pi'_r & = |\psi'_r \rangle \langle \psi'_r | 
\\
\intertext{for all $r$, and}
\langle \psi\vpu{'}_r | \psi\vpu{'}_s \rangle & = K_{rs} e^{i \theta_{rs}}  & \langle \psi'_r | \psi'_s \rangle & = K_{rs} e^{i\theta'_{rs}}
\end{align}
for all $r,s$.  

Suppose, first of all, that there exists a unitary $U$ such that
\be
\Pi'_r = U \Pi\vpu{'}_r U^{\dagger}
\ee
Then there exist phase angles $\phi_r$ such that 
\be
|\psi'_r \rangle = e^{i\phi_r} U |\psi\vpu{'}_r \rangle
\ee
for all $r$, which  is easily seen to imply that 
\be
\theta'_{rs} = \theta_{rs} - \phi_r + \phi_s
\ee
for all $r,s$.  So $\theta\vpu{'}_{rs}$, $\theta'_{rs}$  are gauge equivalent.

Conversely, suppose  there exist phase angles $\phi_r$ such that
\be
\theta'_{rs} = \theta\vpu{'}_{rs} -\phi_r + \phi_s
\ee
Define
\be
|\psi''_{r}\rangle = e^{-i \phi_r} |\psi'_r\rangle
\ee
Then
\be
\langle \psi''_r | \psi''_s\rangle = K_{rs} e^{i\theta_{rs}} = \langle \psi\vpu{'}_r | \psi\vpu{'}_s\rangle
\ee
for all $r,s$.  So it follows from Theorem~\ref{thm:GramSICcondition} that there exists a unitary $U$ such that
\be
|\psi''_r \rangle = U |\psi\vpu{'}_r \rangle
\ee
for all $r$.  Consequently
\be
\Pi'_r = |\psi''_r \rangle \langle \psi''_r | =  U \Pi\vpu{'}_r U^{\dagger}
\ee
for all $r$.  So $\Pi\vpu{'}_r$ and $\Pi'_r$ are unitarily equivalent. 
\end{proof}

We now turn to the order-$3$ angle tensors.  We have
\begin{theorem}
\label{thm:TripProdSICcondition}
Let $\theta_{rst}$ be a real completely anti-symmetric tensor.  Then the following conditions are equivalent:
\begin{enumerate}
\item $\theta_{rst}$ is the order-$3$ angle tensor for a  SIC-family
\item For some fixed $a$ and all $r,s,t$
\begin{align}
\theta_{a
rs}+\theta_{ast} + \theta_{atr} & = \theta_{rst}
\label{eq:thm2SICcondA} 
\\
\intertext{and for all $r,s$}
\sum_{t=1}^{d^2} K_{rt} K_{ts} e^{i\theta_{rst}} &= d K_{rs}
\label{eq:thm2SICcondB}  
\end{align}
\item For some fixed $a$ and all $r,s,t$
\begin{align}
\theta_{ars}+\theta_{ast} + \theta_{atr} & = \theta_{rst} 
\label{eq:thm2SICcondC} 
\\
\intertext{and}
\sum_{r,s,t=1}^{d^2} K_{rs}K_{st}K_{tr} e^{i\theta_{rst}} & = d^4
\label{eq:thm2SICcondD} 
\\
\sum_{r,s,t,u=1}^{d^2} K_{rs} K_{st}K_{tu} K_{ur} e^{i(\theta_{rst} + \theta_{tur})} & = d^5
\label{eq:thm2SICcondE} 
\end{align}
\end{enumerate}

Let $\Pi\vpu{'}_r$, $\Pi'_r$ be two different SIC-sets and let $\theta\vpu{'}_{rst}$, $\theta'_{rst}$ be the corresponding order-$3$ angle tensors.  Then the necessary and sufficient condition for there to exist a unitary $U$ such that
\be
\Pi'_r = U \Pi\vpu{'}_r U^{\dagger}
\ee 
for all $r$ is that $\theta'_{rst} = \theta\vpu{'}_{rst}$ for all $r,s,t$ (in other words two SIC-sets are unitarily equivalent if and only if their order-$3$ angle tensors are identical).

Let $\theta_{rst}$ be the order-$3$ angle tensor corresponding to a SIC-family. Then the order-$2$ angle tensor is given by (up to gauge freedom)
\be
\theta_{rs} = \theta_{ars}
\ee
for any fixed $a$, from which the SIC-family can be reconstructed using the method described in Theorem~\ref{thm:GramSICcondition}.
\end{theorem}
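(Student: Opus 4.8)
The plan is to reduce the entire theorem to Corollary~\ref{cor:order2AngleTensor} by noting that the consistency relation~\eqref{eq:thm2SICcondA} is exactly an integrability (``coboundary'') condition, under which one can recover a candidate order-$2$ angle tensor from $\theta_{rst}$ by a single contraction. Explicitly, given a real completely anti-symmetric $\theta_{rst}$ for which~\eqref{eq:thm2SICcondA} holds for some fixed $a$, set $\theta_{rs}:=\theta_{ars}$; this is automatically real and anti-symmetric, and~\eqref{eq:thm2SICcondA} says precisely that $\theta_{rst}=\theta_{rs}+\theta_{st}+\theta_{tr}$ for all $r,s,t$ (and hence, by cancellation of the repeated-index terms, for every other choice of $a$ as well).

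The first main step is to translate~\eqref{eq:thm2SICcondB} and the pair~\eqref{eq:thm2SICcondD}--\eqref{eq:thm2SICcondE} into conditions~(2) and~(3) of Corollary~\ref{cor:order2AngleTensor} applied to $\theta_{rs}=\theta_{ars}$, using only anti-symmetry of $\theta$ and reality of $K$ and $d$. From $\theta_{rst}=\theta_{rs}+\theta_{st}+\theta_{tr}$ one gets $\theta_{rt}+\theta_{ts}=\theta_{rs}-\theta_{rst}$, so that $\sum_t K_{rt}K_{ts}\,e^{i(\theta_{rt}+\theta_{ts})}=e^{i\theta_{rs}}\,\overline{\sum_t K_{rt}K_{ts}\,e^{i\theta_{rst}}}$, which makes~\eqref{eq:thm2SICcondB} equivalent to condition~(2) of Corollary~\ref{cor:order2AngleTensor}; likewise $\theta_{rst}=\theta_{rs}+\theta_{st}+\theta_{tr}$ turns~\eqref{eq:thm2SICcondD} into~\eqref{eq:sicTraceCondA}, and the identity $\theta_{rst}+\theta_{tur}=\theta_{rs}+\theta_{st}+\theta_{tu}+\theta_{ur}$ (valid since $\theta_{tr}=-\theta_{rt}$) turns~\eqref{eq:thm2SICcondE} into~\eqref{eq:sicTraceCondB}. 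Invoking the equivalence of conditions~(1), (2), (3) in Corollary~\ref{cor:order2AngleTensor} then gives $(2)\iff(3)$ of the present theorem and shows, in either case, that $\theta_{rs}$ is an order-$2$ angle tensor of a SIC-family; since that family has order-$3$ angle tensor $\theta_{rs}+\theta_{st}+\theta_{tr}=\theta_{rst}$, this also establishes~(1). For the remaining implication $(1)\implies(2)$: starting from a SIC-family, pick a SIC-set $\Pi_r=|\psi_r\rangle\langle\psi_r|$ with order-$2$ tensor $\hat\theta_{rs}$ and rephase $|\psi_r\rangle\to e^{-i\hat\theta_{ar}}|\psi_r\rangle$, so that in the new gauge $\theta_{ar}=0$ for all $r$ and hence $\theta_{ars}=\theta_{rs}$; then~\eqref{eq:thm2SICcondA} is just the gauge-independent identity $\theta_{rst}=\theta_{rs}+\theta_{st}+\theta_{tr}$, and~\eqref{eq:thm2SICcondB} holds because $\theta_{rs}$ satisfies condition~(2) of Corollary~\ref{cor:order2AngleTensor}.

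Next I would handle the uniqueness and reconstruction claims. If $\Pi'_r=U\Pi^{\vphantom{'}}_r U^{\dagger}$, then by Corollary~\ref{cor:order2AngleTensor} the order-$2$ tensors are gauge equivalent, $\theta'_{rs}=\theta^{\vphantom{'}}_{rs}-\phi_r+\phi_s$, and the $\phi$-terms cancel in $\theta'_{rs}+\theta'_{st}+\theta'_{tr}$, giving $\theta'_{rst}=\theta^{\vphantom{'}}_{rst}$. Conversely, if $\theta'_{rst}=\theta^{\vphantom{'}}_{rst}$, put $\psi_{rs}=\theta^{\vphantom{'}}_{rs}-\theta'_{rs}$; the hypothesis gives $\psi_{rs}+\psi_{st}+\psi_{tr}=0$ for all $r,s,t$, so with $\phi_r:=\psi_{ar}$ the case $t=a$ yields $\psi_{rs}=\phi_s-\phi_r$, i.e.\ the two order-$2$ tensors are gauge equivalent, whence Corollary~\ref{cor:order2AngleTensor} supplies the unitary $U$. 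The reconstruction statement is then immediate: $\theta_{rs}=\theta_{ars}$ is an order-$2$ angle tensor for the family in some gauge, and the family is recovered from it by the procedure of Theorem~\ref{thm:GramSICcondition} together with Corollary~\ref{cor:order2AngleTensor}.

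I expect the one step that requires an actual idea, rather than bookkeeping, to be the recognition that the single contraction $\theta_{ars}$ is the ``right'' candidate order-$2$ angle tensor and that~\eqref{eq:thm2SICcondA} is exactly the condition making that reconstruction self-consistent. Everything else is careful tracking of index orderings, plus one complex conjugation (needed to match~\eqref{eq:thm2SICcondB} with condition~(2) of Corollary~\ref{cor:order2AngleTensor}, which is quadratic rather than manifestly Hermitian in the phases).
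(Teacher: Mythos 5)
Your proposal is correct and follows essentially the same route as the paper: define $\theta_{rs}=\theta_{ars}$, use the anti-symmetry of $\theta_{rst}$ together with one complex conjugation to convert Eq.~(\ref{eq:thm2SICcondB}) (respectively Eqs.~(\ref{eq:thm2SICcondD})--(\ref{eq:thm2SICcondE})) into condition (2) (respectively (3)) of Corollary~\ref{cor:order2AngleTensor}, and handle the uniqueness claim by observing that equality of order-$3$ tensors forces the order-$2$ tensors to differ by a coboundary $\phi_s-\phi_r$ with $\phi_r=\theta_{ar}-\theta'_{ar}$. The only cosmetic difference is your explicit gauge-fixing $\theta_{ar}=0$ in the $(1)\implies(2)$ direction, which the paper treats as immediate from the identity $\theta_{ars}+\theta_{ast}+\theta_{atr}=\theta_{rst}$.
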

\begin{remark}
Unlike the order-$2$ tensor, the order-$3$ angle tensor is gauge invariant.  This means that it provides what is, in many ways, a more useful characterization of the SIC-family.  For that reason we will be almost exclusively concerned with the order-$3$ tensor in the remainder of this paper.
\end{remark}
\begin{proof}
The fact  that $(1)\implies (2)$ is an immediate consequence of the definition of the order-$3$ angle tensor and condition (2) of Corollary~\ref{cor:order2AngleTensor}.  To prove that $(2)\implies (1)$  let $\theta_{rst}$ be a completely anti-symmetric tensor such that condition $(2)$ holds.  Define
\be
\theta_{rs} = \theta_{ars}
\ee
for all $r,s$.  Then Eq.~(\ref{eq:thm2SICcondB}) implies
\be
\sum_{t=1}^{d^2} K_{rt} K_{ts} e^{i(\theta_{rt}+ \theta_{ts})}
 = e^{i\theta_{rs}}\left(\sum_{t=1}^{d^2} K_{rt} K_{ts} e^{i\theta_{rst}} \right)^{*}= d K_{rs} e^{i\theta_{rs}}
\ee
for all $r,s$.    It follows from Corollary~\ref{cor:order2AngleTensor} that $\theta_{rs}$ is the order-$2$ and $\theta_{rst}$  the order-$3$ angle tensor of a SIC-family.

The equivalence of conditions $(1)$ and $(3)$ is proved similarly.  

It remains to show that two SIC-sets are unitarily equivalent if and only if their order-$3$ angle tensors are identical.
To see this let $\Pi\vpu{'}_r = |\psi\vpu{'}_r\rangle \langle \psi\vpu{'}_r|$ and $\Pi'_r = |\psi'_r \rangle \langle \psi'_r |$ be two different SIC-sets having the same order-$3$ angle tensor $\theta_{rst}$.  Let $\theta\vpu{'}_{rs}$ (respectively $\theta'_{rs}$) be the order-$2$ angle tensor corresponding to the vectors $|\psi\vpu{'}_r\rangle$ (respectively $|\psi'_r\rangle$).  Choose some fixed index $a$.  We have
\begin{align}
\theta'_{ar}+\theta'_{sa} + \theta'_{rs} & =\theta\vpu{'}_{ar} + \theta\vpu{'}_{sa} + \theta\vpu{'}_{rs}
\\
\intertext{for all $r,s$.  Consequently}
\theta'_{rs} & = \theta_{rs}+\phi_{r}-\phi_{s}
\\
\intertext{for all $r,s$, where}
\phi\vpu{'}_{r} & = \theta\vpu{'}_{ar} - \theta'_{ar}  
\end{align}
So $\theta'_{rs}$ and $\theta\vpu{'}_{rs}$ are gauge equivalent.  It follows from Corollary~\ref{cor:order2AngleTensor} that $\Pi\vpu{'}_r$ and $\Pi'_r$ are unitarily equivalent.  Conversely, suppose that $\Pi\vpu{'}_r$ and $\Pi'_r$ are unitarily equivalent, and let $\theta\vpu{'}_{rs}$, $\theta'_{rs}$ be  order-$2$ angle tensors corresponding to them.     It follows from Corollary~\ref{cor:order2AngleTensor}  that $\theta\vpu{'}_{rs}$ and $\theta'_{rs}$ are gauge equivalent.  It is then immediate that the order-$3$ angle tensors are identical.
\end{proof}
Finally, let us note that when expressed in terms of the triple products  Eq.~(\ref{eq:thm2SICcondB}) reads
\begin{align}
\sum_{t=1}^{d^2} T^{\vpu{2}}_{rst} & = d K^2_{rs}
\label{eq:sec2FinalA}
\\
\intertext{while Eq.~(\ref{eq:thm2SICcondD}) reads}
\sum_{r,s,t=1}^{d^2} T_{rst} & = d^4
\label{eq:sec2FinalB}
\\
\intertext{For Eq.~(\ref{eq:thm2SICcondE}) we have to work a little harder.  We have}
\sum_{r,s,t,u=1}^{d^2} \frac{1}{K^2_{rt}} T_{rst}T_{tur} &= d^5
\end{align}
from which it follows
\begin{align}
 d^5 & = \sum_{r,s,t,u=1}^{d^2} \bigl( -d\delta_{rt} + d+1\bigr)T_{rst}T_{tur} 
\nonumber
\\
& =(d+1)\sum_{r,s,t,u=1}^{d^2} T_{rst}T_{tur}  -d \sum_{r,s,u=1}^{d^2}  K^2_{rs}K^2_{ru} 
\nonumber
\\
& = (d+1)\sum_{r,s,t,u=1}^{d^2} T_{rst}T_{tur} - d^5
\end{align}
Consequently
\be
\sum_{s,u=1}^{d^2} \Tr\bigl(T_s T_u\bigr) = \sum_{r,s,t,u=1}^{d^2} T_{rst}T_{tur} = \frac{2d^5}{d+1}
\label{eq:sec2FinalC}
\ee
 This equation be alternatively written
\be
\sum_{r,s=1}^{d^2} \Tr\bigl(T_r T_s\bigr) = \frac{2d^5}{d+1}
\ee
where  $T_r$ is the matrix with matrix elements $(T_r)_{uv} = T_{ruv}$.

When they are written like this, in terms of the triple products, the fact that Eq.~(\ref{eq:sec2FinalA}) implies Eqs.~(\ref{eq:sec2FinalB}) and~(\ref{eq:sec2FinalC}) becomes almost obvious.  The reverse implication, by contrast, is rather less obvious.

\section{Spectral Decompositions}
\label{sec:QQTProp1}

Let $T_r$, $J_r$, $R_r$ be the $d^2\times d^2$ matrices whose matrix elements are
\begin{align}
(T_r)_{st} & = T_{rst} & (J_r)_{st} & = J_{rst} & (R_r)_{st} & = R_{rst} 
\end{align}
where $J_{rst}$, $R_{rst}$ are the quantities defined by Eqs.~(\ref{eq:JmatrixDef}) and~(\ref{eq:RmatrixDef}).
So $J_r$ is the adjoint representation matrix of $\Pi_r$. 
In this section we derive the spectral decompositions of  these matrices.  To avoid  confusion we will use the notation $|\psi\rangle$ to denote a ket in $d$ dimensional Hilbert space  $\mathcal{H}_d$, and  $\| \psi \dr$ to denote a ket in $d^2$ dimensional Hilbert space $\mathcal{H}_{d^2}$. In terms of this notation the spectral decompositions will turn out to be:  
\begin{align}
T_r & = \frac{d}{d+1} Q_r + \frac{2d}{d+1} \| e_r \dr \dl e_r \| 
\label{eq:TrSpecDecomp}
\\
J^{\vpu{T}}_r & = Q^{\vpu{T}}_r - Q^{\mathrm{T}}_r
\label{eq:JtermsQQT}
\\
R^{\vpu{T}}_r & = Q^{\vpu{T}}_r + Q^{\mathrm{T}}_r + 4 \| e_r \dr \dl e_r \|
\label{eq:RtermsQ}
\end{align}
In these expressions the vector $\| e_r \dr$ is normalized, and its components in the standard basis are all real.    $Q_r$ is a rank $d-1$ projector such that
\be
Q^{\vpu{T}}_r \| e_r \dr = Q^{\mathrm{T}}_r\| e_r \dr = 0
\ee
and which has, in addition, the remarkable property of being orthogonal to its own transpose (also a rank $d-1$ projector):
\be
Q^{\vpu{T}}_r Q^{\mathrm{T}}_r = 0
\ee
Explicit expressions for $\| e_r \dr$ and $Q_r$ will be given below.

It will be convenient to define the rank $2(d-1)$ projector
\begin{align}
\bar{R}_r &= Q^{\vphantom{\mathrm{T}}}_r + Q^{\mathrm{T}}_r
\label{eq:RbarDef}
\\
\intertext{We  have}
\bar{R}_r & = J^2_r
\\
\intertext{and}
R_r & = \bar{R}_r + 4 \| e_r \dr \dl e_r \| 
\end{align}
Since $Q_r$ is Hermitian we have 
\be
Q^{\mathrm{T}}_r = Q^{*}_r
\ee
where $Q^{*}_r$ is the matrix whose elements are the complex conjugates of the corresponding elements of $Q_r$.  So $\bar{R}_r$ is twice the real part of $Q_r$ and $- iJ_r$ is twice the imaginary part. 
 
In Section~\ref{sec:QQTprop2} we will show that Eq.~(\ref{eq:JtermsQQT}) is essentially definitive of a SIC-POVM.  To be more specific, let $\lie_r$ be any set of $d^2$ Hermitian matrices which constitute a basis for $\gl(d,\mathbb{C})$, and let $\str_r$ be the adjoint representative of $\lie_r$ in that basis.  Then we will show that $\str_r$ has the spectral decomposition
\be
\str^{\vphantom{\mathrm{T}}}_r = Q^{\vphantom{\mathrm{T}}}_{r} - Q^{\mathrm{T}}_{r}
\ee
where $Q_r$ is a rank $d-1$ projector which is orthogonal to its own transpose if and only if the $\lie_r$ are a family of SIC projectors up to multiplication by a sign and shifting by a  multiple of the identity.

Having stated our results let us now turn to the task of proving them.  We begin by deriving the spectral decomposition of $T_r$.  Multiplying both sides of the equation   
\be
\Pi_r \Pi_s = \frac{d+1}{d} \sum_{t=1}^{d^2} T_{rst} \Pi_t - \mt^{2}_{rs} I
\label{eq:QQTnecessityProd}
\ee
by $\Pi_r$ we find
\begin{align}
\Pi_r \Pi_s & = \frac{d+1}{d} \sum_{t=1}^{d^2} T_{rst}\Pi_r\Pi_t - \mt^2_{rs} \Pi_r
\nonumber
\\
& = \frac{(d+1)^2}{d^2} \sum_{t=1}^{d^2} (T_r)^2_{st} \Pi_t - \frac{d+1}{d}\sum_{t=1}^{d^2} T_{rst} \mt^2_{rt} I - \mt^2_{rs} \Pi_r
\end{align}
We have
\begin{align}
\sum_{t=1}^{d^2} T_{rst} \mt^2_{rt}
& = \frac{1}{d+1} \sum_{t=1}^{d^2} T_{rst} (d\delta_{rt} +1)
\nonumber
\\
& = \frac{1}{d+1}\left( d T_{rsr} +  \sum_{t=1}^{d^2} T_{rst}\right)
\nonumber
\\
& = \frac{2d}{d+1} T_{srr}
\nonumber
\\
& =\frac{2d}{d+1} \mt^2_{rs}
\label{eq:QQT1TKsumrel}
\end{align}
Consequently
\be
\Pi_r \Pi_s  =
\frac{d+1}{d}\sum_{t=1}^{d^2} \left(\frac{d+1}{d} (T_r)^2_{st}- \mt^2_{rs}\mt^{2}_{rt}
\right) \Pi_t - \mt^2_{rs} I
\ee
Comparing with Eq.~(\ref{eq:QQTnecessityProd}) we deduce
\be
(T_r)^2_{rs}  = \frac{d}{d+1} T_{rst} + \frac{d}{d+1} \mt^2_{rs}\mt^2_{rt}
\label{eq:QQT1T2form}
\ee
Now define
\be
\|e_r\dr = \sqrt{\frac{d+1}{2d}}\sum_{s=1}^{d^2} \mt^2_{rs} \| s\dr
\label{eq:erVecDef}
\ee
where the basis kets $\|s\dr$ are given by (in column vector form)
\be
\|1\dr = \bmt 1 \\ 0 \\  \vdots \\ 0\emt,
\|2\dr = \bmt 0 \\ 1 \\  \vdots \\ 0\emt,
\dots,
\|d^2\dr = \bmt 0 \\ 0 \\ \vdots \\ 1\emt
\label{eq:hd2StandardBasis}
\ee
It is easily verified that $\|e_r\dr$ is normalized.  Eq.~(\ref{eq:QQT1T2form}) then becomes
\be
T^2_r = \frac{d}{d+1}T^{\vphantom{2}}_r + \frac{2d^2}{(d+1)^2} \| e_r\dr \dl e_r \|
\label{eq:TsqRel}
\ee
Using  Eq.~(\ref{eq:QQT1TKsumrel}) we find
\begin{align}
\dl s \| T_r \| e_r \dr & = 
\sqrt{\frac{d+1}{2d}}\sum_{t=1}^{d^2} T_{rst} \mt^2_{rt}
\nonumber
\\
& = \sqrt{\frac{2d}{d+1}} \mt^2_{rs}
\nonumber
\\
& = \frac{2d}{d+1} \dl s \|  e_r \dr
\label{eq:erEvecOfQr}
\end{align}
So $\|e_r\dr$ is an eigenvector of $T_r$ with eigenvalue $\frac{2d}{d+1}$.  

Also define
\begin{align}
Q_r &= \frac{d+1}{d}T_r -2  \|e_r\dr \dl e_r \|
\label{eq:Qdef}
\\
\intertext{So in terms of the order-$3$ angle tensor the matrix elements of $Q_r$ are}
 Q_{rst} &= \frac{d+1}{d}\mt_{rs}\mt_{rt} \left(\mt_{st} e^{i\theta_{rst}} -  \mt_{rs}\mt_{rt}\right)
\label{eq:Qexplicit}
\end{align}
$Q_r$ is Hermitian (because $T_r$ is Hermitian).  Moreover
\be
Q^2_r  = \frac{(d+1)^2}{d^2} T^2_r -4 \|e_r\dr \dl e_r\| = Q_r
\ee
So $Q_r$ is a projection operator.  Since
\be
\Tr(T_r) = \sum_{u} T_{ruu} = \sum_{u=1}^{d^2} \mt^2_{ru} = d
\ee
we have
\be
\Tr(Q_r) = d-1
\ee
We have thus proved that the spectral decomposition of $T_r$ is
\be
T_r = \frac{d}{d+1} Q_r + \frac{2d}{d+1} \|e_r\dr \dl e_r \|
\ee
where $Q_r$ is a rank $d-1$ projector, as claimed. 

We next prove that $Q^{\mathrm{T}}_r\| e_r\dr = 0$. The fact that the components of $\|e_r\dr$ in the standard basis are all real means
\be
\dl s \| T^{\mathrm{T}}_r \| e_r\dr 
= \dl e_r \| T_r \| s\dr
=\frac{2d}{d+1} \dl s \|e_r\dr
\label{eq:erEvecTtrans}
\ee
So $\|e_r\dr$ is an eigenvector of $T^{\mathrm{T}}_r$ as well as $T^{\vphantom{\mathrm{T}}}_r$, again with the eigenvalue $\frac{2d}{d+1}$.  In view of Eq.~(\ref{eq:Qdef}) it follows that $Q^{\mathrm{T}}_r \| e_r \dr = 0$.

Turning to the problem of showing that $Q_r$ is orthogonal to its own transpose.  We have 
\begin{align}
Q^{\vphantom{\mathrm{T}}}_rQ^{\mathrm{T}}_r
& = 
\left(\frac{d+1}{d} T^{\vphantom{\mathrm{T}}}_r - 2 \|e_r\dr \dl e_r \|
\right)
\left(\frac{d+1}{d} T^{\mathrm{T}}_r - 2 \|e_r\dr \dl e_r \|
\right)
\nonumber
\\
&= \frac{(d+1)^2}{d^2} T^{\vphantom{\mathrm{T}}}_r T^{\mathrm{T}}_r - 4 \| e_r\dr \dl e_r \|
\end{align}
It follows from Eq.~(\ref{eq:TripProdDefB}) that
\begin{align}
\dl s \| T^{\vphantom{\mathrm{T}}}_r T^{\mathrm{T}}_{r} \|t\dr
& = \sum_{u=1}^{d^2} T_{rsu} T_{rtu} 
\nonumber
\\
& = G_{rs}G_{rt}\sum_{u=1}^{d^2} G_{su}G_{tu} G_{ur}G_{ur}
\\
\intertext{In view of Eq.~(\ref{eq:2designProp}) (\emph{i.e.}\ the fact that every SIC-POVM is a $2$-design) this implies}
\dl s \| T^{\vphantom{\mathrm{T}}}_r T^{\mathrm{T}}_{r} \|t\dr
& = \frac{2d}{d+1} |G_{rs}|^2|G_{rt}|^2
\nonumber
\\
& = \frac{2d}{d+1} \mt^2_{rs} \mt^2_{rt}
\nonumber
\\
& = \frac{4d^2}{(d+1)^2} \dl s \| \ev_r\dr \dl \ev_r \| t\dr
\\
\intertext{So}
T^{\vphantom{\mathrm{T}}}_r T^{\mathrm{T}}_{r}  &= \frac{4d^2}{(d+1)^2} \| \ev_r \dr \dl \ev_r \|
\\
\intertext{and consequently}
Q^{\vphantom{\mathrm{T}}}_r Q^{\mathrm{T}}_r & = 0
\end{align}  

Eqs.~(\ref{eq:JtermsQQT}) and~(\ref{eq:RtermsQ}) are immediate consequences of the results already proved and the definitions of $J_r$, $R_r$.

We defined the  $J$ matrices  to be the adjoint representatives of the SIC-projectors, considered as a basis for the Lie algebra $\gl (d,\mathbb{C})$, and that is certainly a most important fact about them. However, the results of this section show that, along with the vectors $\| e_r \dr$, they  actually determine the whole structure. Specifically, we have
\begin{align}
Q_r & = \frac{1}{2} \bigl( J_r + J^2_r\bigr)
\\
R_r & = J^2_r + 4 \|\ev_r \dr \dl \ev_r \| 
\\
T_r & = \frac{d}{2(d+1)} \Bigl( J_r + J^2_r + 4 \| \ev_r \dr \dl \ev_r \| \Bigr)
\end{align}
Moreover, if we know the $T$ matrices then we know the order-$3$ angle tensor, which in view of Theorem~\ref{thm:TripProdSICcondition} means we can reconstruct the SIC-projectors.  Since the vectors $\| \ev_r \dr$ are given, once and for all, this means that the problem of proving the existence of a SIC-POVM in dimension $d$ is  equivalent to the problem of proving the existence of a certain remarkable structure in the adjoint representation of $\gl(d,\mathbb{C})$ (as we will see in more detail in Section~\ref{sec:QQTprop2}).

In the Introduction we began with the concept of a SIC-POVM, and then defined the $J$ matrices in terms of it.  However,  one could, if one wished, go in the opposite direction, and  take the Lie algebraic structure to be primary, with the SIC-POVM being the secondary, derivative entity.

\section{The \texorpdfstring{$Q$-$Q^{\mathrm{T}}$}{Q-QT} Property}
\label{sec:QQTpropGen}

The next five sections are devoted to a study of the $J$ matrices which, as we will see, have numerous interesting properties. 
We begin our investigation  by trying to get some additional insight into what we will call the $Q$-$Q^{\mathrm{T}}$ property:  namely, the fact that the $J$ matrices have the spectral decomposition
\be
J^{\vpu{T}}_r = Q^{\vpu{T}}_r -Q^{\mathrm{T}}_r
\ee
where $Q_r$ is a rank $d-1$ projector which is orthogonal to its own transpose.  We wish to characterize the general class of matrices which are of this type. 
The following theorem provides one such characterization.
\begin{theorem}
\label{thm:QQstarPropA}
Let $A$ be a Hermitian matrix.  Then the following statements are equivalent:
\begin{enumerate}
\item $A$ has the spectral decomposition
\be
A = P - P^{\mathrm{T}}
\label{eq:thm4EqA}
\ee
where $P$ is a projector which is orthogonal to its own transpose.
\item $A$ is pure imaginary and $A^2$ is a projector.
\end{enumerate}
\end{theorem}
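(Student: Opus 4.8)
The plan is to prove the two implications $(1)\Rightarrow(2)$ and $(2)\Rightarrow(1)$ separately, with the latter carrying the real content.

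\textbf{Proof of $(1)\Rightarrow(2)$.} Suppose $A = P - P^{\mathrm{T}}$ with $P$ a projector and $PP^{\mathrm{T}}=0$. Taking the transpose of $PP^{\mathrm{T}}=0$ gives $P^{\mathrm{T}}P=0$ as well. Since $A$ is assumed Hermitian we have $A^{\dagger}=A$, i.e. $\bar P^{\mathrm{T}} - \bar P = P - P^{\mathrm{T}}$; combined with the fact that $P$ is a projector (so $P^{\dagger}=P$, hence $\bar P = P^{\mathrm{T}}$ and $\bar P^{\mathrm{T}} = P$) this is automatically consistent, and moreover $\bar A = \bar P - \bar P^{\mathrm{T}} = P^{\mathrm{T}} - P = -A$, so $A$ is pure imaginary. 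For $A^2$: using $P^2=P$, $(P^{\mathrm{T}})^2 = P^{\mathrm{T}}$, and $PP^{\mathrm{T}} = P^{\mathrm{T}}P = 0$ we get $A^2 = (P-P^{\mathrm{T}})^2 = P^2 - PP^{\mathrm{T}} - P^{\mathrm{T}}P + (P^{\mathrm{T}})^2 = P + P^{\mathrm{T}}$, which is a sum of two mutually orthogonal projectors and hence itself a projector. This disposes of one direction.

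\textbf{Proof of $(2)\Rightarrow(1)$.} Now suppose $A$ is Hermitian, pure imaginary, and $A^2$ is a projector. The idea is to reconstruct $P$ explicitly as
\be
P = \tfrac{1}{2}\bigl(A^2 + A\bigr).
\ee
First I would check $P$ is a projector. Since $A$ is Hermitian and pure imaginary, $A$ is antisymmetric as a real matrix, so $A^{\mathrm{T}} = -A$ and $(A^2)^{\mathrm{T}} = A^2$, i.e. $A^2$ is a real symmetric projector. Because $A$ is Hermitian its square $A^2$ commutes with $A$; also $A^3 = A\cdot A^2$, and I claim $A^3 = A$. This follows because $A^2$ is the orthogonal projector onto $(\ker A)^{\perp} = (\ker A^2)^{\perp}$ — indeed $\ker A = \ker A^2$ for Hermitian $A$ — so $A^2$ acts as the identity on the range of $A$, giving $A^2 A = A$. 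Hence $A^3 = A$. Then $P^2 = \tfrac14(A^2+A)^2 = \tfrac14(A^4 + A^3 + A^3 + A^2) = \tfrac14(A^2 + A + A + A^2) = \tfrac12(A^2+A) = P$, using $A^4 = A^2$ and $A^3 = A$. So $P$ is idempotent, and since $A^2$ and $A$ are both Hermitian, $P$ is Hermitian, hence a projector. Next, $P - P^{\mathrm{T}} = \tfrac12(A^2 + A) - \tfrac12(A^2 - A) = A$, using $(A^2)^{\mathrm{T}} = A^2$ and $A^{\mathrm{T}} = -A$. Finally I must verify $PP^{\mathrm{T}} = 0$: $PP^{\mathrm{T}} = \tfrac14(A^2+A)(A^2-A) = \tfrac14(A^4 - A^3 + A^3 - A^2) = \tfrac14(A^2 - A^2) = 0$, again by $A^4 = A^2$ and $A^3 = A$. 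This completes the construction, and uniqueness of this decomposition is not asserted so nothing further is needed.

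\textbf{Where the work is.} There is no deep obstacle here; the one step that deserves care is the identity $A^3 = A$, which is the bridge between the hypothesis ``$A^2$ is a projector'' and the explicit formula for $P$. The cleanest way to see it is spectral: a Hermitian pure-imaginary matrix has purely imaginary eigenvalues occurring in conjugate pairs $\pm i\mu$ (plus possibly $0$), so $A^2$ has eigenvalues $-\mu^2 \le 0$; for $A^2$ to be a projector its eigenvalues must lie in $\{0,1\}$, forcing $\mu^2 \in \{0,1\}$, i.e. every eigenvalue of $A$ lies in $\{0, i, -i\}$, whence $A^3 = A$ and $A^4 = A^2$ follow at once from the minimal polynomial $\lambda(\lambda-i)(\lambda+i) = \lambda^3+\lambda$ dividing... wait, more precisely $\lambda^3 = \lambda$ holds on the spectrum. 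I would present the argument in this spectral form since it makes transparent that $P = \tfrac12(A^2+A)$ is exactly the spectral projector onto the $+i$ eigenspace of $-iA$ (equivalently the eigenvalue-$i$ eigenspace of $A$), and that $P^{\mathrm{T}} = \bar P$ is the projector onto the $-i$ eigenspace, so that their orthogonality $PP^{\mathrm{T}}=0$ is just the orthogonality of distinct eigenspaces of the Hermitian operator $A$.
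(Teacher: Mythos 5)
Your algebraic core is correct and complete, and it takes a genuinely different route from the paper. The paper proves $(2)\Rightarrow(1)$ by invoking the spectral theorem to write $A=P-P'$ with $P,P'$ the orthogonal projectors onto the $\pm 1$ eigenspaces, and then shows $P'=P^{\mathrm{T}}$ by an inequality argument on normalized eigenvectors of $P^{\mathrm{T}}$. You instead exhibit $P=\tfrac12(A^2+A)$ explicitly and verify idempotency, $P-P^{\mathrm{T}}=A$ and $PP^{\mathrm{T}}=0$ by direct computation from $A^{\mathrm{T}}=-A$ and $A^3=A$; your justification of $A^3=A$ via $\ker A=\ker A^2$ and the identification of $A^2$ as the orthogonal projector onto $\mathrm{ran}\,A$ is sound. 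This is more constructive than the paper's argument and has the added virtue of producing exactly the formula $Q_r=\tfrac12(J_r+J_r^2)$ that the paper uses later. (One small slip in $(1)\Rightarrow(2)$: the transpose of $PP^{\mathrm{T}}=0$ is $PP^{\mathrm{T}}=0$ again; to get $P^{\mathrm{T}}P=0$ you should take the adjoint, or the complex conjugate, using $P^{*}=P^{\mathrm{T}}$.)

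However, your closing ``spectral'' paragraph is wrong, and since you say you would present the argument in that form, it matters. A Hermitian matrix has \emph{real} eigenvalues; the pure-imaginary-entry condition makes $A=iB$ with $B$ real antisymmetric, so it is $B$, not $A$, whose eigenvalues are $\pm i\mu$, and the eigenvalues of $A$ itself are real and come in $\pm$ pairs. Your version, with spectrum $\{0,i,-i\}$, gives $A^2$ eigenvalues $0$ and $-1$ (so $A^2$ could never be a projector unless $A=0$), and the minimal polynomial $\lambda(\lambda-i)(\lambda+i)=\lambda^3+\lambda$ yields $A^3=-A$, not $A^3=A$ --- you noticed the discrepancy mid-sentence but then asserted the wrong conclusion anyway. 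The correct spectral statement is that the hypothesis forces the spectrum of $A$ into $\{0,\pm1\}$, and $P=\tfrac12(A^2+A)$ is the spectral projector onto the $+1$ eigenspace of $A$ (with $P^{\mathrm{T}}=P^{*}$ the projector onto the $-1$ eigenspace). With that correction your spectral gloss becomes essentially the paper's proof; as written it would not compile into a valid argument, so you should rely on the kernel/range justification of $A^3=A$ that you already gave.
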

\begin{proof}
To show that $(1)\implies (2)$ observe that the fact that $P$ is Hermitian means 
\be
P^{\mathrm{T}} = P^{*}
\ee
where $P^{*}$ is the matrix whose elements are the complex conjugates of the corresponding elements of $P$.  So Eq.~(\ref{eq:thm4EqA}) implies that the components of $A$ are pure imaginary.  Since $PP^{\mathrm{T}}=0$   it also implies that $A^2$ is a projector.

To show that $(2) \implies (1)$ observe that the fact that $A^2$ is a projector means that the eigenvalues of $A$   $=\pm 1$ or $0$.  So
\be
A = P - P'
\ee
where $P$, $P'$ are orthogonal projectors.  Since $A$ is pure imaginary we must have
\be
P^{\mathrm{T}} - (P')^{\mathrm{T}} = A^{\mathrm{T}} = A^{*} = - A = P' - P
\ee
$P^{\mathrm{T}}$ and $(P')^{\mathrm{T}}$ are also orthogonal projectors.  So if  $P^{\mathrm{T}}|\psi\rangle = |\psi\rangle$, and $|\psi\rangle$ is normalized, we must have
\begin{align}
1 & = \langle \psi | P^{\mathrm{T}} |\psi\rangle 
\nonumber
\\
& = \big\langle \psi \bigl| \bigl(P^{\mathrm{T}} - (P')^{\mathrm{T}}\bigr) \bigr| \psi\big\rangle
\nonumber
\\
& = \langle\psi | P' |\psi \rangle - \langle\psi | P |\psi \rangle
\end{align}
Since
\begin{align}
0 &\le \langle\psi | P' |\psi \rangle \le 1
\\
0 &\le  \langle\psi | P |\psi \rangle \le 1
\end{align}
we must have $\langle\psi | P' |\psi \rangle=1$, implying $P'|\psi \rangle = |\psi \rangle$.  Similarly $P'|\psi\rangle = |\psi \rangle $ implies $P^{\mathrm{T}}|\psi\rangle = |\psi \rangle$.  So
\be
P' = P^{\mathrm{T}}
\ee
\end{proof}
We also have the following statement, inspired in part by Ref.~\cite{Goodson09},
\begin{theorem}
\label{thm:QQstarPropB}
The necessary and sufficient condition for a matrix $P$ to be a projector which is orthogonal to its own transpose is that 
\be
P = S D S^{\mathrm{T}}
\ee
where $S$ is an any real orthogonal matrix and $D$ has the block-diagonal form
\be
D = 
\bmt 
\sigma &  \dots & 0 & 0 &\dots & 0 \\
\vdots  & &\vdots & \vdots & & \vdots \\
0  &  \dots & \sigma & 0 & \dots & 0 \\
0 &  \dots & 0 & 0 & \dots & 0 \\
\vdots  & & \vdots & \vdots & & \vdots\\
0 & \dots & 0 & 0 & \dots & 0 
\emt           
\label{eq:thm5DmtDef}
\ee
with 
\be
\sigma = \frac{1}{2} \bmt 1 & - i \\ i & 1 \emt
\ee
In other words $D$ has $n$ copies of $\sigma$ on the diagonal, where $n = \rnk(P)$, and $0$ everywhere else.
\end{theorem}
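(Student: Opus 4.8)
The plan is to reduce $P$ to its real and imaginary parts and then apply the classical orthogonal normal form for real skew-symmetric matrices. The direction ``$\Leftarrow$'' is a direct computation: $\sigma$ is a rank-one Hermitian projector satisfying $\sigma\sigma^{\mathrm{T}}=0$ (a $2\times2$ check), so the block matrix $D$ satisfies $D^{2}=D$, $D^{\dagger}=D$ and $DD^{\mathrm{T}}=0$; since $S$ is real orthogonal we then get $P^{2}=SD^{2}S^{\mathrm{T}}=P$, $P^{\dagger}=SD^{\dagger}S^{\mathrm{T}}=P$ and $PP^{\mathrm{T}}=SDD^{\mathrm{T}}S^{\mathrm{T}}=0$, while $\rnk(P)=\rnk(D)=n$. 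So the work is in ``$\Rightarrow$''.

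Assume $P$ is a Hermitian projector (as in Theorem~\ref{thm:QQstarPropA}, ``projector'' means Hermitian idempotent) with $PP^{\mathrm{T}}=0$. Write $P=X+iY$ with $X,Y$ real. Hermiticity forces $X^{\mathrm{T}}=X$ and $Y^{\mathrm{T}}=-Y$, hence $P^{\mathrm{T}}=X-iY=\bar P$, so the hypothesis reads $P\bar P=0$; conjugating gives $\bar P P=0$ as well. The first key point is that $E:=2X=P+\bar P$ is itself a real symmetric projector: expanding $(P+\bar P)^{2}$ and using $P^{2}=P$, $\bar P^{2}=\bar P$, $P\bar P=\bar P P=0$ gives $(P+\bar P)^{2}=P+\bar P$. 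Comparing real and imaginary parts of $P^{2}=P$ and of $P\bar P=0$ yields the routine identities $2X^{2}=X$, $2XY=2YX=Y$ and $Y^{2}=-\tfrac12 X$; the first three say $EX=XE=X$ and $EY=YE=Y$, hence $EP=PE=P$, i.e.\ $P=EPE$. Therefore $P$ annihilates $V^{\perp}$ and maps into $V$, where $V:=\mathrm{range}(E)$, and $V$ is invariant under both $X$ and $Y$.

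Now restrict to $V$, with $m:=\dim V=\rnk(E)$. There $E|_{V}=I_{V}$, so $X|_{V}=\tfrac12 I_{V}$, and from $Y^{2}=-\tfrac12 X$ we get $(2Y|_{V})^{2}=-I_{V}$. Since $Y|_{V}$ is the compression of the skew matrix $Y$ to the $E$-invariant subspace $V$ it is real skew-symmetric, and having square $-I_{V}$ its eigenvalues are $\pm i$ with equal multiplicities; in particular $m=2n$ is even, and by the orthogonal normal form for real skew-symmetric matrices there is an orthonormal basis of $V$ in which $2Y|_{V}$ is block-diagonal with every $2\times2$ block equal to $\begin{pmatrix}0&-1\\1&0\end{pmatrix}$. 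Extending this to an orthonormal basis of the whole space by any orthonormal basis of $V^{\perp}$, and letting $S$ be the corresponding real orthogonal matrix, we obtain $S^{\mathrm{T}}XS=\mathrm{diag}(\tfrac12 I_{2n},0)$ and $S^{\mathrm{T}}YS=\mathrm{diag}(\tfrac12\Omega_{0},0)$ with $\Omega_{0}$ block-diagonal of $\begin{pmatrix}0&-1\\1&0\end{pmatrix}$-blocks. Hence $S^{\mathrm{T}}PS=S^{\mathrm{T}}XS+iS^{\mathrm{T}}YS$ is block-diagonal with every nonzero $2\times2$ block equal to $\tfrac12\begin{pmatrix}1&-i\\i&1\end{pmatrix}=\sigma$, i.e.\ $S^{\mathrm{T}}PS=D$, giving $P=SDS^{\mathrm{T}}$ with $n=\rnk(D)=\rnk(P)$.

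The step I expect to be the main (though modest) obstacle is pinning down the sign inside each $\sigma$-block. The normal form only produces skew $2\times2$ blocks $\begin{pmatrix}0&\pm1\\\mp1&0\end{pmatrix}$, so one must observe that the two orientations are interchanged by the determinant-$(-1)$ swap of the two basis vectors of a block; absorbing that transposition into $S$ lets us force every block of $2Y|_{V}$ to be $\begin{pmatrix}0&-1\\1&0\end{pmatrix}$, which is exactly what makes the corresponding block of $S^{\mathrm{T}}PS$ equal to $\sigma$ rather than $\sigma^{\mathrm{T}}$. Everything else is the decomposition $P=X+iY$, the elementary consequences of $P^{2}=P$ and $P\bar P=\bar P P=0$, and the standard spectral theory of real skew-symmetric matrices.
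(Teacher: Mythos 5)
Your proof is correct, and it takes a genuinely different route from the paper's. You decompose $P=X+iY$ into real and imaginary parts, extract the algebraic identities $2X^{2}=X$, $2XY=2YX=Y$, $Y^{2}=-\tfrac12X$ from $P^{2}=P$ and $P\bar P=\bar PP=0$, identify $E=2X$ as a real symmetric projector whose range carries the whole structure, and then invoke the orthogonal normal form for real skew-symmetric matrices (together with the observation that $(2Y|_V)^{2}=-I_V$ forces all blocks to be $\pm$ rotations, with the sign fixable by a swap of basis vectors). The paper instead works directly with an orthonormal basis $|a_1\rangle,\dots,|a_n\rangle$ of $\mathrm{range}(P)$: since $PP^{\mathrm{T}}=0$ the conjugate vectors $|a_r^{*}\rangle$ are orthogonal to the $|a_r\rangle$, so the real combinations $\tfrac{1}{\sqrt2}(|a_r^{*}\rangle-|a_r\rangle)$ and $\tfrac{i}{\sqrt2}(|a_r^{*}\rangle+|a_r\rangle)$ are real orthonormal vectors in which $\sum_r|a_r\rangle\langle a_r|$ visibly splits into $\sigma$-blocks. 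The paper's construction is more explicit and self-contained (no appeal to the skew normal form, and the pairing of basis vectors into blocks is automatic rather than needing a sign adjustment), while yours is more structural and makes transparent \emph{why} the rank of $P+P^{\mathrm{T}}$ is even and why the answer is governed by a complex structure on $\mathrm{range}(P+P^{\mathrm{T}})$; it also connects more directly to the $\sigma_y$ reformulation in the corollary that follows. Both arguments are complete.
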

\begin{proof}
Sufficiency is an immediate consequence of the fact that $\sigma$ is a rank $1$ projector such that $\sigma \sigma^{\mathrm{T}} = 0$.

To prove necessity let $d$ be the dimension of the space and $n$ the rank of $P$.  It will be convenient to define
\be
|1\rangle  = \bmt 1 \\ 0 \\ \vdots \\ 0 \emt , \hspace{0.2 in}  |2\rangle  = \bmt 0 \\ 1 \\ \vdots \\ 0 \emt , \hspace{0.2 in} \dots   \hspace{0.2 in}
|d\rangle  = \bmt 0 \\ 0 \\ \vdots \\ 1 \emt
\ee
In terms of these basis vectors we have
\be
P = \sum_{r,s=1}^{d} P_{rs} |r\rangle \langle s |
\ee
Now let $|\av_1\rangle , \dots, |\av_n\rangle$ be an orthonormal basis for the subspace onto which $P$ projects, and let $|\av^{*}_r\rangle$ be the column vector which is obtained from $|\av_r\rangle$ by taking the complex conjugate of each of its components.  Taking complex conjugates on each side of the equation
\begin{align}
P |\av_r\rangle & = |\av_r\rangle
\\
\intertext{gives}
P^{*} | \av^{*}_r \rangle & = |\av^{*}_r\rangle
\end{align}
So $|\av^{*}_1\rangle, \dots , |\av^{*}_n\rangle$ is an orthonormal basis for the subspace onto which $P^{\mathrm{T}} = P^{*}$ projects.  Since $P^{\mathrm{T}}$ is orthogonal to $P$ we conclude that
\be
\langle \av^{\vpu{*}}_r | \av^{*}_s \rangle = 0
\ee
for all $r,s$.

Next define  vectors $|\bv_1\rangle , \dots , |\bv_{2n}\rangle$ by
\begin{align}
|\bv_{2r-1}\rangle & = \frac{1}{\sqrt{2}} \bigl( |\av^{*}_r\rangle - |\av^{\vpu{*}}_r\rangle \bigr)
\\
|\bv_{2r} \rangle & = \frac{i}{\sqrt{2}} \bigl( |\av^{*}_r \rangle + | \av^{\vpu{*}}_r \rangle \bigr)
\end{align}
By construction these vectors are orthonormal and real.  So we can extend them to an orthonormal basis for the full space by adding a further $d-2n$ vectors $|\bv_{2n+1}\rangle, \dots, |\bv_d\rangle$, which can also be chosen to be real.  We have
\begin{align}
P &= \sum_{r=1}^{n} |\av_r\rangle \langle \av_r |
\nonumber
\\
& = \frac{1}{2}\sum_{r=1}^{n} \Bigl(|\bv_{2r-1}\rangle \langle \bv_{2r-1}| -i |\bv_{2r-1} \rangle \langle \bv_{2r}| + i |\bv_{2r} \rangle \langle \bv_{2r-1} | + |\bv_{2r} \rangle \langle \bv_{2r} |\Bigr)
\end{align}
So if we define
\begin{align}
S &= \sum_{r=1}^{d} |\bv_r \rangle \langle r | 
\\
\intertext{then $S$ is a real orthogonal matrix such that}
P &= S D S^{\mathrm{T}}
\end{align}
where
\be
D = \frac{1}{2} \sum_{r=1}^{n} \Bigl( |2r-1\rangle \langle 2r-1 | - i | 2r-1\rangle \langle 2r | + i | 2r \rangle \langle 2r-1| + |2r\rangle \langle 2r | \Bigr) 
\ee
is the matrix defined by Eq.~(\ref{eq:thm5DmtDef}).
\end{proof}
This result implies the following alternative characterization of the class of matrices to which the $J$ matrices belong
\begin{corollary}
Let $A$ be a Hermitian matrix.  Then the following statements are equivalent:
\begin{enumerate}
\item $A$ has the spectral decomposition
\be
A = P - P^{\mathrm{T}}
\label{eq:corr6EqA}
\ee
where $P$ is a projector which is orthogonal to its own transpose.
\item There exists a real orthogonal matrix $S$ such that
\be
A = S D S^{\mathrm{T}}
\ee 
where $D$ has the block diagonal form
\be
D = 
\bmt 
\sigma_y &  \dots & 0 & 0 &\dots & 0 \\
\vdots  & &\vdots & \vdots & & \vdots \\
0  &  \dots & \sigma_y & 0 & \dots & 0 \\
0 &  \dots & 0 & 0 & \dots & 0 \\
\vdots  & & \vdots & \vdots & & \vdots\\
0 & \dots & 0 & 0 & \dots & 0 
\emt           
\label{eq:corr6DmtDef}
\ee
$\sigma_y$ being the Pauli matrix
\be
\sigma_y = \bmt 0 & -i \\ i & 0 \emt
\ee
In other words $D$ has $n$ copies of $\sigma_y$ on the diagonal, where $n = \frac{1}{2}\rnk(A)$, and $0$ everywhere else (note that a matrix of this type must have even rank).
\end{enumerate}
\end{corollary}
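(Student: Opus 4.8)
The plan is to derive this corollary directly from the two theorems just proved, since conditions (1) and (2) have each already been ``decoded'': Theorem~\ref{thm:QQstarPropA} rewrites (1) as ``$A$ pure imaginary and $A^2$ a projector,'' and Theorem~\ref{thm:QQstarPropB} rewrites ``$P$ a projector orthogonal to its transpose'' as ``$P = S D' S^{\mathrm{T}}$ with $D'$ a block-diagonal matrix of $\sigma$-blocks.'' The only arithmetic needed is the single identity
\be
\sigma - \sigma^{\mathrm{T}} = \tfrac12\bmt 1 & -i \\ i & 1\emt - \tfrac12\bmt 1 & i \\ -i & 1\emt = \bmt 0 & -i \\ i & 0 \emt = \sigma_y ,
\ee
together with $\sigma_y^2 = I_2$.

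For $(1)\implies(2)$: assume $A = P - P^{\mathrm{T}}$ with $P$ a projector orthogonal to its transpose. By Theorem~\ref{thm:QQstarPropB} there is a real orthogonal $S$ with $P = S D' S^{\mathrm{T}}$, where $D'$ is block-diagonal with $n=\rnk(P)$ copies of $\sigma$ and zeros elsewhere. Since $S$ is real, $P^{\mathrm{T}} = S (D')^{\mathrm{T}} S^{\mathrm{T}}$, so
\be
A = P - P^{\mathrm{T}} = S\bigl(D' - (D')^{\mathrm{T}}\bigr)S^{\mathrm{T}} = S D S^{\mathrm{T}},
\ee
where $D$ is block-diagonal with $n$ copies of $\sigma - \sigma^{\mathrm{T}} = \sigma_y$ and zeros elsewhere, i.e.\ exactly the matrix in Eq.~(\ref{eq:corr6DmtDef}). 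Finally $\rnk(A) = \rnk(D) = 2n$, which gives $n = \tfrac12\rnk(A)$ and in particular forces $\rnk(A)$ to be even.

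For $(2)\implies(1)$: given $A = S D S^{\mathrm{T}}$ with $S$ real orthogonal and $D$ as in Eq.~(\ref{eq:corr6DmtDef}), observe that $D$ is pure imaginary (because $\sigma_y$ is) and $D^2$ is block-diagonal with $\sigma_y^2 = I_2$ in the first $2n$ slots and zeros elsewhere, hence a projector. Since $S$ is real, $A = S D S^{\mathrm{T}}$ is pure imaginary and $A^2 = S D^2 S^{\mathrm{T}}$ is a projector, so Theorem~\ref{thm:QQstarPropA} yields $A = P - P^{\mathrm{T}}$ for some projector $P$ orthogonal to its transpose. (Equivalently, one can exhibit $P$ outright: set $P = S D' S^{\mathrm{T}}$ with $D'$ the $\sigma$-block matrix of rank $n = \tfrac12\rnk(A)$; Theorem~\ref{thm:QQstarPropB} gives that $P$ is a projector orthogonal to its transpose, and $P - P^{\mathrm{T}} = S\bigl(D'-(D')^{\mathrm{T}}\bigr)S^{\mathrm{T}} = S D S^{\mathrm{T}} = A$.)

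I do not expect a real obstacle here, since the substance has been front-loaded into Theorems~\ref{thm:QQstarPropA} and~\ref{thm:QQstarPropB} and the corollary is essentially a repackaging of them. The one point meriting a line of care is checking that the expression $A = P - P^{\mathrm{T}}$ produced in $(2)\implies(1)$ genuinely is a \emph{spectral} decomposition --- that $P$ and $P^{\mathrm{T}}$ are mutually orthogonal projectors, so the eigenvalues of $A$ are $\pm1$ and $0$ --- but this follows at once from $P P^{\mathrm{T}} = 0$ and is in any case built into the statement of Theorem~\ref{thm:QQstarPropA}.
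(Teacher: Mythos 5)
Your proof is correct and is exactly the argument the paper intends: the paper dismisses the corollary as an ``immediate consequence of Theorem~\ref{thm:QQstarPropB},'' and your computation $\sigma-\sigma^{\mathrm{T}}=\sigma_y$ together with conjugation by the real orthogonal $S$ (plus the appeal to Theorem~\ref{thm:QQstarPropA} or the explicit construction of $P$ for the converse) is precisely the omitted detail.
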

\begin{proof}
Immediate consequence of Theorem~\ref{thm:QQstarPropB}.
\end{proof}

\section{Lie Algebraic Formulation of the Existence Problem}
\label{sec:QQTprop2}

This section is the core of the paper.  We show that the problem of proving the existence of a SIC-POVM in dimension $d$ is equivalent to the problem of proving the existence of an Hermitian basis for $\gl(d,\mathbb{C})$ all of whose elements have the $Q$-$Q^{\mathrm{T}}$ property.  We hope that this new way of thinking will help make the SIC-existence problem more amenable to solution.

The result we prove is the following:
\begin{theorem}
\label{thm:QQTEqualsSICness}
Let $\lie_r$ be a set of $d^2$ Hermitian matrices forming a basis for $\gl(d,\mathbb{C})$.  Let $\str_{rst}$ be the structure constants relative to this basis, so that
\be
[\lie_r, \lie_s] = \sum_{t=1}^{d^2} \str_{rst} \lie_t
\label{eq:thm7strConstDef}
\ee
and let $\str_r$ be the matrix with matrix elements $(\str_r)_{st} = \str_{rst}$.  Then the following statements are equivalent 
\begin{enumerate}
\item Each $\str_r$ has the spectral decomposition
\be
\str\vpu{T}_r = \liq\vpu{T}_r - \liq^{\mathrm{T}}_r
\label{eq:thm7QQTpropDef}
\ee
where $\liq_r$ is a rank $d-1$ projector which is orthogonal to its own transpose.
\item There exists a SIC-set $\Pi_r$, a set of signs $\sign_r = \pm 1$ and a real constant $\alpha \neq - \frac{1}{d}$ such that
\be
\lie_r = \sign_r (\Pi_r + \alpha I)
\ee
\end{enumerate}
\end{theorem}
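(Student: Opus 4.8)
The two implications are of quite different character. The direction $(2)\Rightarrow(1)$ is a short calculation: if $\lie_r=\sign_r(\Pi_r+\alpha I)$ for a SIC set $\Pi_r$, then since $I$ is central $[\lie_r,\lie_s]=\sign_r\sign_s[\Pi_r,\Pi_s]=\sign_r\sign_s\sum_t J_{rst}\Pi_t$, and substituting $\Pi_t=\sign_t\lie_t-\alpha I$ together with the identity $\sum_t J_{rst}=0$---valid because $\sum_t T_{rst}=dK^2_{rs}$ is real by Eq.~(\ref{eq:sec2FinalA})---gives $\str_{rst}=\sign_r\sign_s\sign_t J_{rst}$, i.e.\ $\str_r=\sign_r D J_r D$ with $D=\mathrm{diag}(\sign_1,\dots,\sign_{d^2})$. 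Since $D$ is a real symmetric orthogonal involution, conjugating the spectral decomposition $J_r=Q_r-Q^{\mathrm{T}}_r$ of Section~\ref{sec:QQTProp1} by $D$ (and interchanging $Q_r$ with $Q^{\mathrm{T}}_r$ when $\sign_r=-1$) exhibits $\str_r=\liq_r-\liq^{\mathrm{T}}_r$ with $\liq_r$ a rank $d-1$ projector satisfying $\liq_r\liq^{\mathrm{T}}_r=0$.

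For $(1)\Rightarrow(2)$ I would first reduce each $\lie_r$ to the required shape. By Theorem~\ref{thm:QQstarPropA}, condition $(1)$ says every $\str_r$ is Hermitian, pure imaginary, and has $\str_r^2$ a rank-$2(d-1)$ projector. Being Hermitian and pure imaginary makes $\str_r$ antisymmetric, so $\str_{rst}$ is completely antisymmetric; equivalently, the Hermitian basis $\{\lie_r\}$ is orthonormal, $B(\lie_r,\lie_s)=\delta_{rs}$, with respect to some invariant symmetric bilinear form $B$ on $\gl(d,\mathbb{C})$. Every such form equals $B(X,Y)=\lambda\Tr(XY)+\mu\Tr(X)\Tr(Y)$ with real $\lambda,\mu$, and nondegeneracy (forced by $B(\lie_r,\lie_s)=\delta_{rs}$) gives $\lambda\neq0$ and $\lambda+\mu d\neq0$. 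Next, $\rnk(\ad_{\lie_r})=\rnk(\str_r)=2(d-1)$; since the centraliser of a Hermitian matrix with eigenvalue multiplicities $m_1,\dots,m_k$ has dimension $\sum_i m_i^2$, and $\sum_i m_i^2=(d-1)^2+1$ with $\sum_i m_i=d$ forces the multiset $\{d-1,1\}$ when $d\ge2$, each $\lie_r$ has exactly two distinct eigenvalues and can be written $\lie_r=a_r\pi_r+b_r I$ with $\pi_r$ a rank-one projector and real $a_r\neq0$, $b_r$. Then $\str_r^2=a_r^2\,\ad_{\pi_r}^2$; as $\ad_{\pi_r}^2$ is already a rank-$2(d-1)$ projector and $\str_r^2$ is a projector, $a_r^2=1$, so setting $\sign_r:=a_r$ and $\beta_r:=\sign_r b_r$ we obtain $\lie_r=\sign_r(\pi_r+\beta_r I)$.

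It then remains to identify the $\pi_r$ as a SIC set sharing a single shift. Substituting $\lie_r=\sign_r(\pi_r+\beta_r I)$ into $B(\lie_r,\lie_r)=1$ makes each $\beta_r$ a root of one $r$-independent quadratic with root sum $-2/d$; hence the $\beta_r$ take at most two values $p,q$, with $1+dq=-(1+dp)$, so that $(1+d\beta_r)^2$ is a constant $u^2$. Substituting this back into $B(\lie_r,\lie_r)=1$ and into $B(I,I)=\sum_r B(I,\lie_r)^2$ (using $B(I,\lie_r)=(\lambda+\mu d)\Tr\lie_r$) pins down $\lambda=\tfrac{d+1}{d}$. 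The equations $B(\lie_r,\lie_s)=0$ for $r\neq s$ then force $\Tr(\pi_r\pi_s)=\tfrac1{d+1}$ when $\beta_r=\beta_s$ and $\Tr(\pi_r\pi_s)=\tfrac{d+2}{d(d+1)}$ when $\beta_r\neq\beta_s$. If all $\beta_r$ equal a common value $\alpha$, then the $\pi_r$ are $d^2$ distinct rank-one projectors with $\Tr(\pi_r\pi_s)=\tfrac1{d+1}$ for $r\neq s$---that is, a SIC set---and $\alpha\neq-\tfrac1d$, because traceless matrices cannot span $\gl(d,\mathbb{C})$; taking $\Pi_r:=\pi_r$ then gives the conclusion. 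So everything comes down to excluding the case in which $\beta_r$ genuinely takes two values.

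I expect this exclusion to be the main obstacle. For $d=2$ the two-valued case does arise, but only apparently: since $I-\pi$ is again a rank-one projector when $d=2$, one may replace $\pi_r$ by $I-\pi_r$ and flip $\sign_r$ for the indices in the minority $\beta$-class, converting the two-shift description into a one-shift one. For $d\ge3$ one has to show that no such configuration exists: a family of $d^2$ rank-one projectors $\pi_r$ whose shifts $\pi_r+\beta_r I$ form a basis of $\gl(d,\mathbb{C})$, carrying the above two-distance Gram data and the bracket relations $[\pi_r,\pi_s]=\sum_t\sign_r\sign_s\sign_t\,\str_{rst}\,\pi_t+c_{rs}I$ (for suitable scalars $c_{rs}$), cannot have $p\neq q$. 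Proving this incompatibility is the technical heart of the converse, and the natural route is to play the rank-$d$ constraint on the Gram matrix of the underlying unit vectors against the Jacobi identity.
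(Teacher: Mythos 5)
Your argument for $(2)\Rightarrow(1)$ is correct and coincides with the paper's. For $(1)\Rightarrow(2)$, everything you do up to the point of writing $\lie_r=\sign_r(\pi_r+\beta_r I)$ with $\beta_r$ taking at most two values $p,q$ satisfying $1+dq=-(1+dp)$ checks out, and is essentially a repackaging of the paper's Lemmas~\ref{lem:elem}, \ref{lem:lem9ChermProp}, \ref{lem:rankLprime} and Corollary~\ref{corr:HSforQQTproperties}: your invariant form $B$ with $B(\lie_r,\lie_s)=\delta_{rs}$ is exactly the paper's statement that complete antisymmetry of $\str_{rst}$ forces $\Tr(\lie_r\lie_s)=\beta\delta_{rs}+\gamma\,\Tr(\lie_r)\Tr(\lie_s)$ (and the classification of invariant forms you invoke without proof requires the same simplicity-of-$\sla(d,\mathbb{C})$ plus Schur's lemma input that the paper supplies in Lemma~\ref{lem:lem9ChermProp}); your centraliser-dimension count is Lemma~\ref{lem:rankLprime} in different clothing; and your determination of $\lambda=\frac{d+1}{d}$ and of the two possible overlaps $\frac{1}{d+1}$ and $\frac{d+2}{d(d+1)}$ reproduces Corollary~\ref{corr:HSforQQTproperties}. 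I verified these computations and they are right.

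The genuine gap is the one you flag yourself and then leave open: you do not exclude the configuration in which $\beta_r$ takes both values $p\neq q$ when $d\ge 3$, and without that exclusion the theorem is not proved, since pairs of indices in different $\beta$-classes would have $\Tr(\pi_r\pi_s)=\frac{d+2}{d(d+1)}\neq\frac{1}{d+1}$ and the $\pi_r$ would not be a SIC-set. Your proposed strategy (the rank-$d$ Gram constraint against the Jacobi identity) is neither carried out nor what the paper does. The paper closes this in Lemma~12 by an integrality argument: after normalising to $\lie'_r$ with $\Tr(\lie'_r)=1$ and $\Tr(\lie'_r\lie'_s)=(d\delta_{rs}+1)/(d+1)$, the construction of Section~\ref{sec:QQTProp1} applied to the triple products $\tilde T_{rst}=\Tr(\lie'_r\lie'_s\lie'_t)$ produces a matrix $\tilde Q_1$ which is provably a projector and hence must have integer trace; a direct computation gives $\Tr(\tilde Q_1)=d+1-\bigl(4d^2+2m(d-2)\bigr)/d^3$, where $m$ is the size of one $\beta$-class, and for $1\le m\le d^2-1$ and $d>2$ the subtracted fraction lies strictly between $\frac{4}{d}$ and $2$, forcing it to equal $1$, i.e.\ $m=d^2(d-4)/\bigl(2(d-2)\bigr)$; the lone integer solution $d=6$, $m=9$ is then killed by running the same argument on $\lie''_r=\frac{2}{d}I-\lie'_{d^2+1-r}$, which would force $d^2-m$ to satisfy the same equation. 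You would need to supply an argument of this kind (or a genuinely different one) before the converse is complete.
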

\begin{remark}
The restriction to values of $\alpha \neq - \frac{1}{d}$ is needed to ensure that the matrices $\lie_r$ are linearly independent, and therefore constitute a basis for $\gl(d,\mathbb{C})$ (otherwise they would all have trace $=0$).  The $Q$-$Q^{\mathrm{T}}$ property continues to hold even if $\alpha$ does $= - \frac{1}{d}$.
\end{remark}

It will be seen that it is not only SIC-sets which have the $Q$-$Q^{\mathrm{T}}$ property, but also any set of operators obtained from a SIC-set by shifting by a constant and multiplying by an $r$-dependent sign.  So the $Q$-$Q^{\mathrm{T}}$ property is not strictly equivalent to the property of being a SIC-set.  However, it could be said that the properties are almost equivalent.  In particular, the existence of an Hermitian basis for $\gl(d,\mathbb{C})$ having the $Q$-$Q^{\mathrm{T}}$ property implies the existence of a SIC-POVM in dimension $d$, and conversely.  

\subsection*{Proof that (2)$\implies$(1)}
Taking the trace on both sides of 
\be
[\Pi_r , \Pi_s] =   \sum_{t=1}^{d^2} J_{rst} \Pi_t
\ee
we deduce that
\be
\sum_{t=1}^{d^2} J_{rst} = 0 
\label{eq:Jsum}
\ee
Then from the definition of $L_r$ in terms of $\Pi_r$ we find
\be
C_{rst} = \sign_r \sign_s \sign_t J_{rst}
\ee
Consequently
\be
C\vpu{T}_r = P\vpu{T}_r - P^{\mathrm{T}}_r
\ee
where
\be
P_r = \sign_r S Q_r S
\ee
 $S$ being the symmetric orthogonal matrix
\be
S = \bmt \sign_1 & 0 & \dots & 0 \\ 0 & \sign_2 & \dots & 0 \\ \vdots & \vdots &  & \vdots\\ 0 & 0 & \dots & \sign_{d^2} 
\emt
\ee
The claim is now immediate.
\subsection*{Proof that (1)$\implies$(2)}  For this we need to work harder.  Since the proof is rather lengthy we will break it into a number of lemmas.
 We first collect a few elementary facts which will be needed in the sequel:
\begin{lemma} 
\label{lem:elem}
Let $\lie_r$ be any Hermitian basis for $\gl(d,\mathbb{C})$, and let $\str_{rst}$ and $\str_r$ be the structure constants and adjoint representatives as defined in the statement of Theorem~\ref{thm:QQTEqualsSICness}.  Let $\liet_r=\Tr(L_r)$.  Then
\begin{enumerate}
\item The $\liet_r$ are not all zero.  
\item The $\str_{rst}$ are pure imaginary and antisymmetric in the first pair of indices.
\item The $\str_{rst}$ are completely antisymmetric if and only  if the $\str_r$ are Hermitian.
\item In every case
\be
\sum_{t=1}^{d^2} \str_{rst} \liet_t = 0
\label{eq:lem8A}
\ee
for all $r,s$.
\item In the special case that the $\str_r$ are Hermitian
\begin{align}
\sum_{r=1}^{d^2} \liet_r \lie_r &  = \cls I 
\\
\intertext{where}
\cls &= \frac{1}{d} \left( \sum_{r=1}^{d^2} \liet^2_r \right) > 0
\end{align}
\end{enumerate}
\end{lemma}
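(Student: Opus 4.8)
The plan is to establish the five claims in the order listed, since each builds naturally on the previous ones. For claim (1), I would argue by contradiction: if every $\liet_r = \Tr(\lie_r)$ vanished, then the $\lie_r$ would all lie in the $(d^2-1)$-dimensional subspace of traceless matrices, contradicting the hypothesis that the $d^2$ matrices $\lie_r$ form a basis for $\gl(d,\mathbb{C})$. For claim (2), I would take the trace of Eq.~(\ref{eq:thm7strConstDef}) paired against a dual basis, or more directly use that $\str_{rst}$ are the expansion coefficients of $[\lie_r,\lie_s]$; antisymmetry in $r,s$ is immediate from $[\lie_r,\lie_s] = -[\lie_s,\lie_r]$. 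Pure-imaginariness follows because $\lie_r$ Hermitian implies $[\lie_r,\lie_s]$ is anti-Hermitian, so $[\lie_r,\lie_s] = \sum_t \str_{rst}\lie_t$ with $\lie_t$ Hermitian forces each $\str_{rst}$ to be pure imaginary (taking the adjoint of both sides and using the linear independence of the $\lie_t$).

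For claim (3), I would introduce a dual basis $\{\lie^r\}$ and express $\str_{rst}$ in terms of traces; the key is that $(\str_r)_{st}$ Hermitian, i.e. $\str_{rst} = \overline{\str_{rts}}$, combined with claim (2) ($\str_{rst}$ pure imaginary, so $\overline{\str_{rts}} = -\str_{rts}$), gives $\str_{rst} = -\str_{rts}$, which together with antisymmetry in the first pair yields complete antisymmetry; conversely complete antisymmetry plus pure-imaginariness gives $\str_{rst} = -\str_{rts} = \overline{\str_{rts}}$, i.e. $\str_r$ Hermitian. Claim (4) is the Lie-algebraic statement that the linear functional $\Tr$ annihilates commutators: since $\Tr([\lie_r,\lie_s]) = 0$, applying $\Tr$ to Eq.~(\ref{eq:thm7strConstDef}) gives $\sum_t \str_{rst}\liet_t = 0$ directly, valid in every case with no Hermiticity assumption.

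Claim (5) is the only substantive step. Assuming the $\str_r$ are Hermitian (equivalently, by claim (3), the $\str_{rst}$ completely antisymmetric), I would consider the operator $X = \sum_r \liet_r \lie_r$ and show it commutes with every $\lie_s$: compute $[\lie_s, X] = \sum_r \liet_r [\lie_s,\lie_r] = \sum_{r,t} \liet_r \str_{srt}\lie_t$, and the coefficient of $\lie_t$ is $\sum_r \liet_r \str_{srt} = -\sum_r \liet_r \str_{str}$, which vanishes by claim (4). Since $X$ commutes with a basis of $\gl(d,\mathbb{C})$, Schur's lemma gives $X = \cls I$ for some scalar $\cls$. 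Taking traces, $\cls d = \Tr(X) = \sum_r \liet_r \Tr(\lie_r) = \sum_r \liet_r^2$, so $\cls = \frac{1}{d}\sum_r \liet_r^2$, which is nonnegative; it is strictly positive because by claim (1) not all $\liet_r$ vanish. I expect claim (5) — specifically the verification that $X$ is central, which hinges on invoking claim (4) with the indices in exactly the right slots and on the complete antisymmetry from claim (3) — to be the main obstacle, though it is still routine; the remaining claims are essentially bookkeeping with traces and Hermitian conjugation.
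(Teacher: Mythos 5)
Your proposal is correct and follows essentially the same route as the paper's proof: the same basis/trace arguments for (1)--(4), and for (5) the same verification that $\sum_r \liet_r\lie_r$ commutes with every $\lie_s$ (via complete antisymmetry and claim (4)) followed by Schur's lemma and a trace computation. No gaps.
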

\begin{proof}
To prove (1) observe that if the $\liet_r$ were all zero it would mean that the identity was not in the span of the $\lie_r$---contrary to the assumption that they form a basis.

To prove  (2) observe that taking  Hermitian conjugates on both sides of Eq.~(\ref{eq:thm7strConstDef}) gives
\be
-[\lie_r , \lie_s] = \sum_{t=1}^{d^2} \str^{*}_{rst} \lie_t
\ee
from which it follows that $\str^{*}_{rst}= - \str\vpu{*}_{rst}$.  The fact that $\str\vpu{*}_{srt} = - \str\vpu{*}_{rst}$ is an immediate consequence of the definition.

(3) is now  immediate.  

(4) is proved in the same way as Eq.~(\ref{eq:Jsum}).

To prove  (5) observe that if the $\str_r$ are Hermitian it follows from (2) and (3) that  
\be
\sum_{r=1}^{d^2} \liet_r \str_{rst} = 0 
\ee
for all $s$, $t$.  Consequently the matrix
\be
\sum_{r=1}^{d^2} \liet_r \lie_r 
\ee
commutes with everything.  But the only matrices for which that is true are multiples of the identity.   It follows that 
\be
\sum_{r=1}^{d^2} \liet_r \lie_r = \cls I 
\ee 
for some real $\cls$.
Taking the trace on both sides of this equation we deduce
\be
\sum_{r=1}^{d^2} \liet^2_r = d \cls 
\ee  
The fact that $\cls > 0$ is a consequence of this and statement (1).  
\end{proof}
 We next observe that if the $\str_r$ have the $Q$-$Q^{\mathrm{T}}$ property they must, in particular, be Hermitian.  It turns out that that is, by itself, already  a very strong constraint.

Before stating the result it may be helpful if we explain the essential idea on which it depends.  Although we have not done so before, and will not do so again, it will be convenient to make use of  the covariant/contravariant  index notation which is often used to describe the structure constants.  Define the metric tensor
\be
\mtb_{rs} = \Tr(\lie_r\lie_s)
\ee
and let $\mtb^{rs}$ be its inverse.  So
\be
\sum_{t=1}^{d^2} \mtb^{rt}\mtb_{ts}
= 
\mtb^{r}_{\ph{r}s}
=
\begin{cases}
1 \qquad & r= s \\
0 \qquad & r \neq s
\end{cases}
\ee
We can use these tensors to raise and lower indices
(we use the Hilbert-Schmidt inner product for this purpose because the fact that $\gl(d,\mathbb{C})$ is not semi-simple means that its Killing form is degenerate~\cite{Lie1,Lie2,Lie3,Lie4}).  In particular, the matrices
\be
\lie^r = \sum_{t=1}^{d^2} \mtb^{rs} \lie_s
\ee
are the basis dual to the $\lie_r$:
\be
\Tr(\lie^r \lie_s ) = \mtb^{r}_{\ph{r} s} 
\ee
Suppose we now define structure constants $\stra_{rst}$ by
\be
[\lie_r, \lie_s] = \sum_{t=1}^{d^2} \stra_{rst} \lie^t
\label{eq:altStrCnstDef}
\ee
(so in terms of the $\str_{rst}$ we have $\stra_{rs}^{\ph{rs}t}=\str^{\vpu{t}}_{rst}$).
It follows from the relation
\be
\stra_{rst}  = \Tr\bigl( [\lie_r,\lie_s]\lie_t\bigr) = \Tr\bigl( \lie_r[\lie_s,\lie_t]\bigr) 
\ee
that the $\stra_{rst}$ are completely antisymmetric for any choice of the $\lie_r$.  If we now require that the matrices $\str_r$ be Hermitian it means that, not only the $\stra_{rst}$, but also the $\str_{rst}$ must be completely antisymmetric.  Since the two quantities are related by
\be
\stra_{rst} = \sum_{u=1}^{d^2} \str_{rsu} \mtb_{ut} 
\label{eq:ctwidTermsc}
\ee
this is a very strong requirement.  It means that the $\lie_r$ must, in a certain sense, be close to orthonormal (relative to the Hilbert-Schmidt inner product).  More precisely, it means we have the following lemma: 
\begin{lemma}
\label{lem:lem9ChermProp}
Let $\lie_r$,  $\str_{rst}$  and $\str_r$ be defined as in the statement of Theorem~\ref{thm:QQTEqualsSICness}, and let $\liet_r = \Tr(\lie_r)$.  Then the  $\str_r$ are Hermitian if and only if
\be
\Tr(\lie_r \lie_s) = \beta \delta_{rs} + \gamma \liet_r \liet_s
\label{eq:lem9HSinner}
\ee
where $\beta,\gamma$ are real constants such that $\beta> 0 $ and $\gamma < \frac{1}{d}$.  

If this condition is satisfied we also have
\begin{align}
\sum_{r=1}^{d^2} \liet_r \lie_r & = \frac{\beta}{1-d\gamma} I
\label{eq:lem9IdTermsLr}
\\
\sum_{r=1}^{d^2} \liet^2_r & = \frac{d\beta}{1-d\gamma}
\label{eq:lem9SqrdTraceSum}
\end{align}
\end{lemma}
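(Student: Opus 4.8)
The plan is to prove the two implications separately. The direction $(2)\implies(1)$, together with Eqs.~(\ref{eq:lem9IdTermsLr}) and~(\ref{eq:lem9SqrdTraceSum}), should fall straight out of Lemma~\ref{lem:elem}; the direction $(1)\implies(2)$ is the substantive one, and I would obtain it from the observation that Hermiticity of the $\str_r$ forces the metric tensor $\mtb_{rs}=\Tr(\lie_r\lie_s)$ to commute with the whole adjoint representation of $\gl(d,\mathbb{C})$.

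For $(2)\implies(1)$: start from $\mtb_{rs}=\beta\delta_{rs}+\gamma\liet_r\liet_s$ with $\beta>0$, substitute it into the relation $\stra_{rst}=\sum_u\str_{rsu}\mtb_{ut}$ of Eq.~(\ref{eq:ctwidTermsc}), and use Lemma~\ref{lem:elem}(4) to kill the $\gamma\liet_r\liet_s$ term; this gives $\stra_{rst}=\beta\,\str_{rst}$. Since the $\stra_{rst}$ are completely antisymmetric for \emph{every} basis and $\beta\ne0$, the $\str_{rst}$ are completely antisymmetric too, hence the $\str_r$ are Hermitian by Lemma~\ref{lem:elem}(3). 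I would then feed this into Lemma~\ref{lem:elem}(5), getting $\sum_r\liet_r\lie_r=\cls I$ with $\cls=\tfrac{1}{d}\sum_r\liet_r^2>0$; taking $\Tr(\lie_s\,\cdot\,)$ of this identity and using the assumed form of $\mtb$ yields $\cls\liet_s=\liet_s(\beta+d\gamma\cls)$, so, since the $\liet_s$ are not all zero (Lemma~\ref{lem:elem}(1)), $\cls(1-d\gamma)=\beta>0$. Hence $\gamma<\tfrac{1}{d}$, $\cls=\beta/(1-d\gamma)$, and Eqs.~(\ref{eq:lem9IdTermsLr}) and~(\ref{eq:lem9SqrdTraceSum}) follow.

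For $(1)\implies(2)$, the crux: assume the $\str_r$ Hermitian. By Lemma~\ref{lem:elem}(2),(3) each $\str_{rst}$ is pure imaginary and completely antisymmetric, so each $\str_r$ is antisymmetric, $\str_r^{\mathrm T}=-\str_r$. I would then use the $\ad$-invariance of the Hilbert--Schmidt form — equivalently, the complete antisymmetry of $\stra_{rst}$, which by Eq.~(\ref{eq:ctwidTermsc}) says that the matrix $\str_r\mtb$ is antisymmetric — to get $\str_r\mtb+\mtb\str_r^{\mathrm T}=0$, which with $\str_r^{\mathrm T}=-\str_r$ becomes $[\str_r,\mtb]=0$ for every $r$. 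Set $\ell=(\liet_1,\dots,\liet_{d^2})$: it is nonzero (Lemma~\ref{lem:elem}(1)), lies in the common kernel of the $\str_r$ (Lemma~\ref{lem:elem}(4)), and satisfies $\mtb\ell=\cls\ell$ (apply $\Tr(\lie_s\,\cdot\,)$ to $\sum_r\liet_r\lie_r=\cls I$). Because $\str_r^{\mathrm T}=-\str_r$, the hyperplane $\ell^{\perp}=\{v:\ell^{\mathrm T}v=0\}$ is $\str_r$-invariant, and since $\mtb$ is symmetric with $\ell$ an eigenvector it preserves $\ell^{\perp}$ too.

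The finish, where the real content lies, is the decomposition of the adjoint representation of $\gl(d,\mathbb{C})$ on $\mathbb{C}^{d^2}$: for $d\ge2$ it splits as the trivial representation — the line of $\ad$-invariant vectors, which is exactly $\mathbb{C}\ell$ here — plus the irreducible $(d^2-1)$-dimensional adjoint representation of $\sla(d,\mathbb{C})$, carried by $\ell^{\perp}$. Since $\mtb$ commutes with the whole representation and preserves both summands, Schur's lemma forces it to act as a scalar $\beta$ on $\ell^{\perp}$ and as $\cls$ on $\mathbb{C}\ell$; writing the equivariant projection onto $\mathbb{C}\ell$ as $\ell\ell^{\mathrm T}/(\ell^{\mathrm T}\ell)$, this reads $\Tr(\lie_r\lie_s)=\beta\delta_{rs}+\gamma\liet_r\liet_s$ with $\gamma=(\cls-\beta)/\sum_u\liet_u^2$. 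Positive-definiteness of the Gram matrix $\mtb$ gives $\beta>0$, and $\sum_u\liet_u^2=d\cls$ (Lemma~\ref{lem:elem}(5)) gives $\gamma=\tfrac{1}{d}-\beta/\sum_u\liet_u^2<\tfrac{1}{d}$; Eqs.~(\ref{eq:lem9IdTermsLr}) and~(\ref{eq:lem9SqrdTraceSum}) follow as before. I expect the main obstacle to be precisely this $(1)\implies(2)$ step, and within it the only genuinely non-elementary input: the irreducibility of the $\sla(d,\mathbb{C})$-adjoint representation on $\ell^{\perp}$. The case $d=1$ is trivial and can be handled by direct computation.
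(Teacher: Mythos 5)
Your proof is correct and takes essentially the same route as the paper's: the same substitution of the assumed form of $\Tr(\lie_r\lie_s)$ into Eq.~(\ref{eq:ctwidTermsc}) together with Lemma~\ref{lem:elem}(4) for one direction, and for the converse the same reduction to $[\mtb,\str_r]=0$ followed by Schur's lemma on the decomposition $\gl(d,\mathbb{C})=\mathbb{C}I\oplus\sla(d,\mathbb{C})$ (the paper phrases the irreducibility input as simplicity of $\sla(d,\mathbb{C})$, i.e.\ the absence of proper ideals, which is the same fact you invoke as irreducibility of the adjoint representation on $\ell^{\perp}$). The derivations of the bounds $\beta>0$, $\gamma<\tfrac{1}{d}$ and of the two supplementary identities likewise match the paper's.
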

\begin{proof}
To prove sufficiency observe that, in view of Eq.~(\ref{eq:ctwidTermsc}), the condition means
\be
\stra_{rst} = \beta \str_{rst} + \gamma \liet_t \sum_{u=1}^{d^2} \str_{rsu}\liet_u
\ee
In view of Lemma~\ref{lem:elem}, and the fact that $\beta\neq 0$,  this implies
\be
\str_{rst} = \frac{1}{\beta} \stra_{rst}
\ee
Since the $\stra_{rst}$ are completely antisymmetric we conclude that the $\str_{rst}$ must be also.  It follows that the $\str_r$ are Hermitian.

To prove necessity let $\stra_r$ (respectively $\mtb$) be the matrix whose matrix elements are $\stra_{rst}$ (respectively $\mtb_{st}$).  Then   Eq.~(\ref{eq:ctwidTermsc}) can be written
\begin{align}
\stra_r &= \str_r M
\\
\intertext{Taking the transpose (or, equivalently, the Hermitian conjugate) on both sides of this equation we find}
\stra_r& =  M \str_r 
\\
\intertext{implying }
[M,\str_r] & = 0
\\
\intertext{for all $r$.  Since the $\lie_r$ are a basis for $\gl(d,\mathbb{C})$ we deduce }
\left[M, \ad_A
\right]
&= 0
\end{align}
for all $A\in \gl(d,\mathbb{C})$. 
Eq.~(\ref{eq:lem9HSinner}) is a straightforward consequence of this, the fact that $\gl(d,\mathbb{C})$ has the direct sum decomposition $ \mathbb{C} I \oplus \sla(d,\mathbb{C})$, the fact that $\sla(d,\mathbb{C})$ is simple, and Schur's lemma~\cite{Lie1,Lie2,Lie3,Lie4}.  However, for the benefit of the reader who is not so familiar with the theory of Lie algebras we will give the argument in a little more detail.

Given  arbitrary  $A = \sum_{r=1}^{d^2} a_r L_r$, let $\| A\dr$ denote the column vector
\be
\|A\dr = \bmt a_1 \\ a_2 \\ \vdots \\ a_{d^2} \emt
\ee
So
\begin{align}
\| \lie_r \dr  &= \bmt 1 \\ 0 \\ \vdots \\ 0 \emt & 
\| \lie_2 \dr &= \bmt 0 \\ 1 \\ \vdots \\ 0 \emt &
\| \lie_{d^2} \dr & = \bmt 0 \\ 0 \\ \vdots \\ 1 \emt
\end{align}
In view of Lemma~\ref{lem:elem} we then have
\be
\| I \dr = \frac{1}{\cls} \sum_{r=1}^{d^2} \liet_r \|\lie_r\dr
\ee
Since
\be
\Tr(A) = \sum_{r=1}^{d^2} a_r \liet_r = \cls \dl I \| A\dr 
\label{eq:lem9TraceTermsI}
\ee
we have that $A\in \sla(d,\mathbb{C})$ if and only if $\dl I \| A\dr = 0$.

Now observe that it follows from Lemma~\ref{lem:elem} and the definition of $\mtb$ that 
\be
\mtb \| I \dr = \cls \| I \dr 
\ee
If $\mtb$ is a multiple of the identity we have $\mtb_{rs} = \cls \delta_{rs}$ and the lemma is proved.  Otherwise $\mtb$ has at least one more eigenvalue, $\beta$ say.  Let $\mathcal{E}$ be the corresponding eigenspace.  Since $\mathcal{E}$ is orthogonal to $\| I\dr$ it follows from Eq.~(\ref{eq:lem9TraceTermsI}) that $\mathcal{E} \subseteq \sla(d,\mathbb{C})$.  Since  $\mtb$ commutes with every adjoint representation matrix we have
\be
\ad_A \mathcal{E} \subseteq \mathcal{E}
\ee
for all $A\in \sla(d,\mathbb{C})$.  So $\mathcal{E}$ is an ideal of $\sla(d,\mathbb{C})$.  However $\sla(d,\mathbb{C})$ is a simple Lie algebra, meaning it has no proper ideals~\cite{Lie1,Lie2,Lie3,Lie4}.  So we must have $\mathcal{E} = \sla(d,\mathbb{C})$.  
  It follows that if we define
\begin{align}
\tilde{\lie}_r &= \lie_r - \frac{\liet_r}{d} I
\\
\intertext{then}
M \| \lie_r \dr & =\frac{\liet_r}{d}  M \| I \dr + M \|\tilde{\lie}_r\dr 
\\
& = \frac{\cls \liet_r}{d}\| I \dr + \beta \| \tilde{\lie}_r \dr
\\
& = \sum_{s=1}^{d^2} \left(\beta \delta_{rs} + \gamma \liet_r \liet_s\right) \|\lie_s\dr
\end{align}
where $\gamma = \frac{1}{d} \left( 1 - \frac{\beta}{\cls}\right)$.  Eqs.~(\ref{eq:lem9HSinner}), (\ref{eq:lem9IdTermsLr}) and~(\ref{eq:lem9SqrdTraceSum}) are now immediate (in view of Lemma~\ref{lem:elem}).

It remains to establish the bounds on $\beta, \gamma$.  
Let  $A = \sum_{r=1}^{d^2} a_r \lie_r$ be any non-zero element of $\sla(d,\mathbb{C})$.  Then $\sum_{r=1}^{d^2} a_r l_r =0$, so 
in view of Eq.~(\ref{eq:lem9HSinner}) we have
\be
0 < \Tr(A^2) = \beta \sum_{r=1}^{d^2} a^2_r 
\ee 
It follows that $\beta>0$.
Also, using Lemma~\ref{lem:elem} once more, we find
\begin{align}
\liet_r & = \frac{1}{\cls}\sum_{s=1}^{d^2} \liet_s \Tr(\lie_r\lie_s)
\nonumber
\\
& = \frac{\beta\liet_r}{\cls} + \frac{\gamma \liet_r}{\cls}\sum_{s=1}^{d^2} \liet^2_s
\nonumber
\\
& = \liet_r \left(\frac{\beta}{\cls} + d \gamma
\right)
\end{align} 
Since the $\liet_r$ cannot all be zero this implies
\be
\frac{\beta}{\cls} = 1 - d\gamma
\ee
Since $\frac{\beta}{\cls} > 0$ we deduce that $\gamma < \frac{1}{d}$.
\end{proof}
Eq.~(\ref{eq:lem9HSinner}) only depends on the $\str_r$ being Hermitian.  If we make   the assumption that the $\str_r$ have the $Q$-$Q^{\mathrm{T}}$ property we get a stronger statement:
\begin{corollary}
\label{corr:HSforQQTproperties}
Let $\lie_r$,  $\str_{rst}$  and $\str_r$ be as defined in the statement of Theorem~\ref{thm:QQTEqualsSICness}.  Suppose that the $\str_r$ have the spectral decomposition
\be
\str\vpu{T}_r = \liq\vpu{T}_r - \liq^{\mathrm{T}}_r
\ee
where $\liq_r$ is a rank $d-1$ projector which is orthogonal to its own transpose.
Then 
\begin{enumerate}
\item For all $r$
\be
\Tr(\lie_r) = \sign'_r \liet
\ee
\item For all $r,s$
\be
\Tr(\lie_r \lie_s) = \frac{d}{d+1} \delta_{rs} + \frac{\sign'_r\sign'_s}{d} \left( \liet^2 - \frac{1}{d+1} \right)
\label{eq:cor10HSInner}
\ee
\item 
\be
\sum_{r=1}^{d^2} \sign'_r \lie_r = d l I 
\label{eq:cor10Sum}
\ee
\end{enumerate}
for some real constant $\liet>0$ and signs $\sign'_r = \pm 1$. 
\end{corollary}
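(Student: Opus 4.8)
The plan is to reduce the corollary to two scalar equations relating the constants $\beta,\gamma$ of Lemma~\ref{lem:lem9ChermProp} to the traces $\liet_r=\Tr(\lie_r)$, and then to solve them. The $Q$-$Q^{\mathrm T}$ property in particular makes each $\str_r$ Hermitian (it equals $\liq_r-\liq_r^{\mathrm T}$ with $\liq_r$ a Hermitian projector), so Lemma~\ref{lem:lem9ChermProp} applies: we have $\Tr(\lie_r\lie_s)=\beta\delta_{rs}+\gamma\liet_r\liet_s$ with $\beta>0$ and $\gamma<\tfrac1d$, together with $\sum_r\liet_r\lie_r=\tfrac{\beta}{1-d\gamma}I$ and $\sum_r\liet_r^2=\tfrac{d\beta}{1-d\gamma}$ (Eqs.~\eqref{eq:lem9HSinner}--\eqref{eq:lem9SqrdTraceSum}). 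The content of the full $Q$-$Q^{\mathrm T}$ property, beyond Hermiticity, is that $\str_r^2=\liq_r+\liq_r^{\mathrm T}$ is a rank-$2(d-1)$ projector, so $\Tr(\str_r^2)=2(d-1)$ for every $r$.

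The bridge between these two inputs is the identity $\Tr\!\big((\ad_A)^2\big)=2d\,\Tr(A^2)-2(\Tr A)^2$, valid for all $A\in\gl(d,\mathbb C)$: writing $\ad_A=L_A-R_A$ with $L_A,R_A$ the commuting operators of left and right multiplication by $A$ on $\gl(d,\mathbb C)$, one has $\Tr(L_A^2)=\Tr(R_A^2)=d\,\Tr(A^2)$ and $\Tr(L_AR_A)=(\Tr A)^2$ (evaluate in the basis of matrix units). Since $\str_r$ is the matrix of $\ad_{\lie_r}$ in the basis $\{\lie_s\}$, and the trace of a linear operator does not depend on the chosen basis, $\Tr(\str_r^2)=2d\,\Tr(\lie_r^2)-2\liet_r^2$. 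Equating with $2(d-1)$ gives $\Tr(\lie_r^2)=\tfrac1d\big(d-1+\liet_r^2\big)$, and comparing with $\Tr(\lie_r^2)=\beta+\gamma\liet_r^2$ yields $\big(\gamma-\tfrac1d\big)\liet_r^2=\tfrac{d-1}{d}-\beta$ for every $r$. Since $\gamma<\tfrac1d$ the coefficient on the left is nonzero, so $\liet_r^2$ takes a common value, which I call $\liet^2$; moreover $\liet>0$ because the $\liet_r$ are not all zero (Lemma~\ref{lem:elem}). Taking $\sign'_r$ to be the sign of $\liet_r$ proves statement~(1).

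For statement~(2) I would reinsert $\liet^2$. From $\sum_r\liet_r^2=d^2\liet^2$ and $\sum_r\liet_r^2=\tfrac{d\beta}{1-d\gamma}$ one gets $\beta=d\liet^2(1-d\gamma)$, i.e.\ $\gamma\liet^2=\tfrac{\liet^2}{d}-\tfrac{\beta}{d^2}$; substituting into $\beta+\gamma\liet^2=\tfrac1d(d-1+\liet^2)$ the $\liet^2$ terms cancel and one is left with $\beta\big(1-\tfrac1{d^2}\big)=\tfrac{d-1}{d}$, hence $\beta=\tfrac{d}{d+1}$. Then $\gamma\liet^2=\tfrac{\liet^2}{d}-\tfrac1{d(d+1)}=\tfrac1d\big(\liet^2-\tfrac1{d+1}\big)$, and since $\liet_r\liet_s=\sign'_r\sign'_s\liet^2$, Eq.~\eqref{eq:lem9HSinner} becomes exactly the formula in statement~(2). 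Finally, statement~(3) follows from Eq.~\eqref{eq:lem9IdTermsLr}, $\sum_r\liet_r\lie_r=\tfrac{\beta}{1-d\gamma}I$: writing $\liet_r=\sign'_r\liet$ and using $\tfrac{\beta}{1-d\gamma}=d\liet^2$ gives $\liet\sum_r\sign'_r\lie_r=d\liet^2I$, and division by $\liet>0$ finishes it.

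I expect the only real content to be the trace identity in the second paragraph, together with the observation that $\Tr(\str_r^2)$ is the basis-free trace of $(\ad_{\lie_r})^2$ and hence equals $2d\,\Tr(\lie_r^2)-2\liet_r^2$; everything downstream is elementary algebra in the two unknowns $\beta$ and $\gamma\liet^2$. A minor point needing care is that $\str_r^2$ has rank exactly $2(d-1)$, which is where one uses that $\liq_r$ and $\liq_r^{\mathrm T}$ are mutually annihilating rank-$(d-1)$ projectors, so that their sum has trace precisely $2(d-1)$.
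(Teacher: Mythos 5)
Your proposal is correct and follows essentially the same route as the paper: both hinge on the identity $\Tr(\ad_A\ad_A)=2d\,\Tr(A^2)-2(\Tr A)^2$ combined with $\Tr(\str_r^2)=2(d-1)$ from the $Q$-$Q^{\mathrm{T}}$ property, then feed the result into Lemma~\ref{lem:lem9ChermProp} and solve the resulting pair of equations for $\beta$ and $\gamma\liet^2$. The only difference is cosmetic: you derive the Killing-form identity from scratch via the left/right multiplication operators, where the paper simply cites it.
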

\begin{proof}
The proof relies on the fact that the Killing form for $\gl(d,\mathbb{C})$ is related to the Hilbert-Schmidt inner product by~\cite{Lie4}
\be
\Tr(\ad_A \ad_B) = 2 d \Tr(AB) - 2 \Tr(A) \Tr(B)
\ee
Specializing to the case $A = B = \lie_r$ and making use of the $Q$-$Q^{\mathrm{T}}$ property we find
\be
d-1 = d \Tr(\lie^2_r) - \liet^2_r
\ee
Using Lemma~\ref{lem:lem9ChermProp} we deduce
\be
\liet^2_r = \frac{d\beta-d+1}{1-d\gamma}
\label{eq:corr10lSqExp}
\ee
It follows that 
\be
\liet_r = \sign'_r \liet
\ee
for some real constant $\liet\ge 0$ and signs $\sign'_r = \pm 1$. The fact that the $\lie_r$ are a basis for $\gl(d,\mathbb{C})$ means the $\liet_r$ cannot all be zero.  So we must in fact have $\liet>0$.   Using this result in Eq.~(\ref{eq:lem9SqrdTraceSum}) we find
\begin{align}
\beta +d^2 \liet^2 \gamma &=  d \liet^2
\\
\intertext{while Eq.~(\ref{eq:corr10lSqExp}) implies}
d \beta + d \liet^2 \gamma & = d-1 + \liet^2
\end{align}
This gives us a pair of simultaneous equations in $\beta$ and $\gamma$.  Solving them we obtain
\begin{align}
\beta & = \frac{d}{d+1}
\\
\gamma & = \frac{1}{d\liet^2} \left(\liet^2 - \frac{1}{d+1} 
\right)
\end{align}
Substituting these expressions into Eqs.~(\ref{eq:lem9HSinner}) and~(\ref{eq:lem9IdTermsLr}) we deduce Eqs.~(\ref{eq:cor10HSInner}) and~(\ref{eq:cor10Sum}).
\end{proof}
The next lemma shows that each $\lie_r$ is a linear combination of a rank-$1$ projector and the identity:
\begin{lemma}
\label{lem:rankLprime}
Let $\lie$ be any Hermitian matrix $\in \gl(d,\mathbb{C})$ which is not a multiple of the identity. Then
\be
\rnk(\ad_\lie) \ge 2(d-1)
\ee
The lower bound is achieved if and only if $\lie$ is of the form 
\be
\lie = \eta I+ \xi P
\ee
where $P$ is a rank-$1$ projector and $\eta$, $\xi$ are any pair of  real numbers.  The eigenvalues of $\ad_\lie$ are then $\pm \xi$ (each with multiplicity $d-1$) and $0$ (with multiplicity $d^2-2 d+2$).  
\end{lemma}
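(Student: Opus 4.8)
The plan is to diagonalize $\lie$ and observe that its eigenbasis simultaneously diagonalizes $\ad_\lie$. Since $\lie$ is Hermitian, write $\lie=\sum_{i=1}^{d}\lambda_i|i\rangle\langle i|$ with $\{|i\rangle\}$ an orthonormal basis and the $\lambda_i$ real. Then $\ad_\lie(|i\rangle\langle j|)=(\lambda_i-\lambda_j)|i\rangle\langle j|$, so the $d^2$ matrix units $|i\rangle\langle j|$ form an eigenbasis of $\ad_\lie$ with eigenvalues $\lambda_i-\lambda_j$. Consequently $\rnk(\ad_\lie)$ equals the number of ordered pairs $(i,j)$ with $\lambda_i\neq\lambda_j$. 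If $\lie$ has $k$ distinct eigenvalues with multiplicities $m_1,\dots,m_k$ (so $\sum_a m_a=d$), this gives $\rnk(\ad_\lie)=d^2-\sum_{a=1}^{k}m_a^2$, and the hypothesis that $\lie$ is not a multiple of $I$ is precisely the condition $k\geq 2$.

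So the lemma reduces to the elementary estimate: among tuples of positive integers with $\sum_a m_a=d$ and $k\geq 2$, the quantity $\sum_a m_a^2$ is maximized by the multiset $\{d-1,1\}$. Ordering $m_1\geq\cdots\geq m_k$ and using $\sum_{a\geq 2}m_a^2\leq\bigl(\sum_{a\geq 2}m_a\bigr)^2=(d-m_1)^2$ (the cross terms being nonnegative), we get $\sum_a m_a^2\leq m_1^2+(d-m_1)^2$; this is convex in $m_1$ on the interval $1\leq m_1\leq d-1$, hence attains its maximum $(d-1)^2+1$ at an endpoint. Therefore $\sum_a m_a^2\leq d^2-2d+2$ and $\rnk(\ad_\lie)\geq 2(d-1)$, which proves the inequality.

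For the equality clause I would trace back when both estimates above are tight. For $d\geq 2$ this forces $m_1=d-1$, $m_2=1$, $k=2$: $\lie$ has one eigenvalue $a$ of multiplicity $d-1$ and a second eigenvalue $b\neq a$ of multiplicity $1$. Letting $P$ be the rank-$1$ spectral projector onto the $b$-eigenline, this is exactly $\lie=aI+(b-a)P$, i.e.\ $\lie=\eta I+\xi P$ with $\eta=a$, $\xi=b-a$ real and $\xi\neq 0$; conversely any such $\lie$, which fails to be a multiple of $I$ precisely when $\xi\neq 0$, has this spectral shape, so the lower bound is indeed achieved for all such $\lie$. In that case the eigenvalues of $\ad_\lie$ are the differences drawn from the list consisting of $a$ with multiplicity $d-1$ together with $b$ once: the value $0$ occurs $(d-1)^2+1=d^2-2d+2$ times, and $\pm(b-a)=\pm\xi$ occur $d-1$ times each, which is the asserted spectrum. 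There is no serious obstacle here; the only point that deserves a little care is the equality analysis, where one must check that the extremal multiplicity pattern is genuinely unique (handling, e.g., the degenerate dimension $d=2$ and both signs of $\xi$) so that the stated ``if and only if'' is clean.
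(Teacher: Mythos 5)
Your proof is correct and follows essentially the same route as the paper: both diagonalize $\lie$ and read off the eigenvalues of $\ad_\lie$ as the differences $\lambda_i-\lambda_j$ acting on the matrix units $|i\rangle\langle j|$. The only divergence is in the final counting step---the paper bounds the rank below by the $2r(d-r)$ pairs straddling a gap between consecutive distinct eigenvalues, while you write the rank as $d^2-\sum_a m_a^2$ and maximize $\sum_a m_a^2$ by convexity---but both yield the same bound and the same equality analysis.
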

\begin{proof}
Let $\lambda_1 \ge \lambda_2 \ge \dots \ge \lambda_d$ be the eigenvalues of $\lie$ arranged in decreasing order, and let $|b_1\rangle, |b_2\rangle, \dots, |b_d\rangle$ be the corresponding eigenvectors.  We may assume, without loss of generality, that the $|b_r\rangle$ are orthonormal.  We have
\be
\ad_\lie \bigl( |b_r\rangle \langle b_s | \bigr) = \bigl[\lie,|b_r\rangle \langle b_s |\bigr] = (\lambda_r - \lambda_s) |b_r\rangle \langle b_s |
\ee 
So the eigenvalues of  $\ad_\lie$ are $\lambda_r - \lambda_s$. Since $\lie$ is not a multiple of the identity we must have $\lambda_r \neq \lambda_{r+1}$ for some $r$ in the range $1\le r \le d-1$.   We then have that $\lambda_s - \lambda_t \neq 0$ if either  $s\le r < t$ or $t \le r < s$.  There are $2 r(d-r) $ such pairs $s$, $t$.  So
\be
\rnk (\ad_\lie) \ge 2r(d-r) \ge 2(d-1)
\ee
Suppose, now that the lower bound is achieved.  Then  $r(d-r) = d-1$, implying that $r=1$ or $d-1$.  Also we must have $\lambda_s = \lambda_{s+1}$ for all $s \neq r$.  So either 
\begin{align}
\lie & = \lambda_2 I + (\lambda_1-\lambda_2) |b_1\rangle \langle b_1|
\\
\intertext{or}
\lie & = \lambda_{d-1} I -(\lambda_{d-1} - \lambda_d) |b_d\rangle \langle b_d |
\end{align}
Either way $\lie$ and the spectrum of $\ad_L$ are as described.
\end{proof}
The final ingredient needed to complete the proof is 
\begin{lemma}
Let $\lie_r$,  $\str_{rst}$  and $\str_r$ be as defined in the statement of Theorem~\ref{thm:QQTEqualsSICness}.  Suppose that the $\str_r$ have the spectral decomposition
\be
\str\vpu{T}_r = \liq\vpu{T}_r - \liq^{\mathrm{T}}_r
\ee
where $\liq_r$ is a rank $d-1$ projector which is orthogonal to its own transpose.  Let $\liet$, $\sign'_r$ be as in the statement of Corollary~
\ref{corr:HSforQQTproperties}.  Then there is a fixed sign $\sign=\pm 1$ such that 
\be
\Pi_r = \sign \sign'_r \lie_r -\frac{\sign \liet -1}{d} I
\label{eq:lem12res}
\ee
is a rank-$1$ projector for all $r$.  
\end{lemma}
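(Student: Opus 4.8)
The plan is to use the structural lemmas already established to force each $\lie_r$ into the special shape $\eta_r I + \sign''_r F_r$ with $F_r$ a rank-$1$ projector and $\sign''_r = \pm1$, to pin down $\eta_r$ from the trace data of Corollary~\ref{corr:HSforQQTproperties}, and finally to prove that the composite sign $\sign'_r\sign''_r$ is independent of $r$. Once this uniformity is in hand the pairwise-trace relation forced by Corollary~\ref{corr:HSforQQTproperties} collapses to the defining equations of a SIC set, the $F_r$ form such a set, and the operator $\Pi_r$ of the statement is literally $F_r$.

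In detail: the hypothesis $\str_r = \liq_r - \liq_r^{\mathrm{T}}$ with $\liq_r$ a rank-$(d-1)$ Hermitian projector orthogonal to its transpose makes $\str_r$ Hermitian and makes $\str_r^2 = \liq_r + \liq_r^{\mathrm{T}}$ a rank-$2(d-1)$ projector; in particular $\str_r\neq0$, so $\lie_r$ is not a multiple of $I$ and $\rnk(\ad_{\lie_r}) = \rnk(\str_r) = 2(d-1)$. Lemma~\ref{lem:rankLprime} then forces $\lie_r = \eta_r I + \xi_r F_r$ for a rank-$1$ projector $F_r$ and reals $\eta_r,\xi_r$, with the nonzero eigenvalues of $\str_r$ equal to $\pm\xi_r$; since $\str_r^2$ is a projector these are $\pm1$, so $\xi_r = \sign''_r = \pm1$. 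Using $\Tr(\lie_r) = \sign'_r\liet$ from Corollary~\ref{corr:HSforQQTproperties} gives $\eta_r = (\sign'_r\liet - \sign''_r)/d$. Substituting $\lie_r = \eta_r I + \sign''_r F_r$ into the identity $\sum_r\sign'_r\lie_r = d\liet I$ of Corollary~\ref{corr:HSforQQTproperties} collapses it to $\sum_r\sign'_r\sign''_r F_r = (N/d)I$ with $N := \sum_r\sign'_r\sign''_r$, and substituting it into Eq.~(\ref{eq:cor10HSInner}) gives $\Tr(F_r F_s) = \frac{1}{d} - \frac{\sign'_r\sign''_r\sign'_s\sign''_s}{d(d+1)}$ for $r\neq s$ (and $\Tr(F_r^2)=1$).

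The heart of the proof, and the step I expect to be the main obstacle, is showing that $\sign'_r\sign''_r$ is the same for all $r$ when $d\ge3$ (the case $d=2$ is immediate, since there $\frac{2}{d}I - F_r = I - F_r$ is again a rank-$1$ projector). Put $A = \{r : \sign'_r\sign''_r = +1\}$ and $B = \{r : \sign'_r\sign''_r = -1\}$, assume both nonempty, and extract a contradiction. The tools I would try are: (i) the operators $S_A := \sum_{r\in A}F_r$ and $S_B := \sum_{r\in B}F_r$ are positive semidefinite and obey $S_A - S_B = (N/d)I$, constraining their spectra; (ii) within each class the $F_r$ are equiangular rank-$1$ projectors with the SIC overlap $\frac{1}{d+1}$, i.e.\ each class is a partial SIC; (iii) the $F_r$ are realized by unit vectors in $\mathbb{C}^d$, so that for a suitable antisymmetric phase tensor $\theta_{rs}$ the matrix $\bigl[\sqrt{\Tr(F_rF_s)}\,e^{i\theta_{rs}}\bigr]$ must be positive semidefinite of rank $\le d$; and (iv) the completely antisymmetric structure constants satisfy $\str_{rst} = \frac{2i(d+1)}{d}\sign''_r\sign''_s\sign''_t\,\mathrm{Im}\,\Tr(F_rF_sF_t)$, whence the projector condition on $\str_r^2$ yields the normalization $\sum_{s,t}\bigl(\mathrm{Im}\,\Tr(F_rF_sF_t)\bigr)^2 = \frac{d^2(d-1)}{2(d+1)^2}$, which combined with the $3\times3$-Gram positivity bound $\mathrm{Re}\,\Tr(F_rF_sF_t) \ge \frac{1}{2}\bigl(\Tr(F_rF_s)+\Tr(F_sF_t)+\Tr(F_tF_r)-1\bigr)$ should be incompatible with a mixed sign pattern. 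What makes this delicate is that essentially all low-order scalar identities — the traces of $S_A$, $S_B$, $S_A^2$, $S_B^2$, and the $2$-design-type sums — hold identically for any partition into $A$ and $B$, so the contradiction must come from genuine use of either the rank-$\le d$ realizability of the Gram matrix (item (iii)) or of the third-order Bargmann invariants (item (iv)); identifying which of these routes closes cleanly is the crux.

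Finally, once $\sign'_r\sign''_r$ is known to equal a constant, call it $\sign$, we get $\Tr(F_rF_s) = \frac{1}{d+1}$ for all $r\neq s$ and $\Tr(F_r^2)=1$, so $\{F_r\}$ is a SIC set; moreover $\sign''_r = \sign\,\sign'_r$, hence $\eta_r = \sign'_r(\liet - \sign)/d$, and a direct substitution gives $\sign\,\sign'_r\lie_r - \frac{\sign\liet-1}{d}I = F_r$. Thus $\Pi_r := \sign\,\sign'_r\lie_r - \frac{\sign\liet-1}{d}I$ is a rank-$1$ projector for every $r$, which is the assertion of the lemma.
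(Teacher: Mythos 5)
Your reduction is exactly the paper's: you invoke Lemma~\ref{lem:rankLprime} together with the fact that $\str_r^2$ is a rank-$2(d-1)$ projector to force $\lie_r = \eta_r I + \sign''_r F_r$ with $F_r$ a rank-$1$ projector and $\sign''_r=\pm1$, and your computations of $\eta_r$ and of $\Tr(F_rF_s) = \tfrac{1}{d} - \tfrac{\sign'_r\sign''_r\sign'_s\sign''_s}{d(d+1)}$ are correct. But the step you yourself identify as ``the heart of the proof'' --- that $\sign'_r\sign''_r$ is independent of $r$ when $d\ge 3$ --- is not actually proved. You list four candidate tools and conclude that ``identifying which of these routes closes cleanly is the crux''; that is a statement of the problem, not a solution to it. As it stands the proposal establishes only that each $\lie_r$ is an affine image of a rank-$1$ projector, which is strictly weaker than the lemma.

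The paper closes this gap with an integrality argument rather than with the positivity/realizability inequalities you propose. Working with the normalized operators $\lie'_r = \sign'_r\lie_r - \tfrac{\liet-1}{d}I$ (so $\Tr(\lie'_r)=1$, $\Tr(\lie'_r\lie'_s)=\tfrac{d\delta_{rs}+1}{d+1}$, $\sum_r\lie'_r=dI$), one sets $\tilde{T}_{rst}=\Tr(\lie'_r\lie'_s\lie'_t)$ and reruns the argument of Section~\ref{sec:QQTProp1} verbatim: since $\lie'_1$ may be taken to be a genuine projector, the matrix $\tilde{Q}_1 = \tfrac{d+1}{d}\tilde{T}'_1 - 2\|\ev_1\dr\dl\ev_1\|$ is again a projector, hence $\Tr(\tilde{Q}_1)\in\mathbb{Z}$. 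Now the diagonal entries $\tilde{T}_{1rr}=\Tr(\lie'_1(\lie'_r)^2)$ \emph{do} see the partition, because $(\lie'_r)^2=\lie'_r$ when $\xi_r=+1$ but $(\lie'_r)^2=\tfrac{2(d-2)}{d^2}I-\tfrac{d-4}{d}\lie'_r$ when $\xi_r=-1$; summing gives $\Tr(\tilde{Q}_1)=d+1-\tfrac{4d^2+2m(d-2)}{d^3}$ where $m$ is the number of indices with $\xi_r=+1$. For $1\le m\le d^2-1$ and $d>2$ the fraction lies strictly between $\tfrac{4}{d}$ and $2$, so integrality forces it to equal $1$, i.e.\ $m=\tfrac{d^2(d-4)}{2(d-2)}$, whose only integer solution is $d=6$, $m=9$; that case is then killed by applying the same argument to $\lie''_r=\tfrac{2}{d}I-\lie'_{d^2+1-r}$, which yields the incompatible equation $d^2-m=\tfrac{d^2(d-4)}{2(d-2)}$. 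So the missing input is not a Gram-positivity or Bargmann-invariant inequality but the quantization of the rank of a projector built from the triple products --- your observation that ``all low-order scalar identities hold identically for any partition'' is what this construction is designed to evade.
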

\begin{proof}
Define
\be
\lie'_r = \sign'_r \lie_r - \frac{\liet-1}{d} I 
\ee
Then it follows from Corollary~\ref{corr:HSforQQTproperties} that
\begin{align}
\Tr(\lie'_r) & = 1
\\
\intertext{for all $r$,}
\Tr(\lie'_r \lie'_s) & = \frac{d\delta_{rs}+1}{d+1}
\label{eq:lem12interA}
\\
\intertext{for all $r,s$, and}
\sum_{r=1}^{d^2} \lie'_r & = d I 
\label{eq:lem12interB}
\end{align}
It is also easily seen that if we define $\str'_{rst} = \sign'_r\sign'_s\sign'_t \str_{rst}$  then
\begin{align}
[\lie'_r , \lie'_s] & = \sum_{t=1}^{d^2} \str'_{rst} \lie'_t 
\\
\intertext{and} 
\str'_r &= P'_r - {P'_r }^{\mathrm{T}}
\end{align}
where $P'_r$ is a rank-$1$ projector which is orthogonal to its own transpose (see the first part of the proof of Theorem~\ref{thm:QQTEqualsSICness}).  In particular
\be
\rnk \left(\ad_{\lie'_r}\right) = 2(d-1)
\ee
and the eigenvalues of $\ad_{\lie'_r}$  all equal to $\pm 1$ or $0$.  So, taking account of the fact that $\Tr(\lie'_r)=1$, we can use Lemma~\ref{lem:rankLprime} to deduce that there is a family of rank-$1$ projectors $\Pi'_r$ and signs $\xi_r=\pm 1$ such that
\be
\lie'_r =\xi_r \Pi'_r +\frac{1-\xi_r}{d} I
\ee
If $\xi_r=+1$ (respectively $-1$) for all $r$ then Eq.~(\ref{eq:lem12res})  holds with $\Pi_r=\Pi'_r$ and $\sign= +1$ (respectively $-1$).  Also, if $d=2$ then $\lie'_r$ is  a rank-$1$ projector irrespective of the value of $\xi_r$, so Eq.~~(\ref{eq:lem12res}) holds with $\Pi_r = \lie'_r$ and $\sign=+1$. The problem therefore reduces to showing that if $d>2$ it cannot happen that $\xi_r= +1$ for some values of $r$ and $-1$ for others.  We will do this by assuming the contrary and deducing a contradiction.

Let $m$ be the number of values of $r$ for which $\xi_r = +1$.  We are assuming that $m$ is in the range $1\le m \le d^2-1$.  We may also assume, without loss of generality, that the labelling is such that $\xi_r=+1$ for the first $m$ values of $r$, and $-1$ for the rest.  So
\be
\lie'_r = \begin{cases} \Pi'_r \qquad & \text{if $r \le  m$}
\\
\frac{2}{d}I - \Pi'_r \qquad & \text{if $r > m$}
\end{cases}
\label{eq:lem12interC}
\ee 

Now define
\be
\tilde{T}_{rst} = \Tr\bigl(\lie'_r \lie'_s \lie'_t \bigr)
\ee
Eqs.~(\ref{eq:lem12interA}) and~(\ref{eq:lem12interB})
mean that the same argument which led to Eq.~(\ref{eq:PirPisProduct}) can be used to deduce
\be
\lie'_r \lie'_s = \frac{d+1}{d} \left( \sum_{t=1}^{d^2} \tilde{T}_{rst} \lie'_t \right) - K^2_{rs} I 
\ee
Since $\lie'_1$ is a projector it follows that 
\be
\lie'_1 \lie'_s = \bigl(\lie'_1\bigr)^2 \lie'_s = \frac{d+1}{d} \left(\sum_{t=1}^{d^2} \tilde{T}_{1st} \lie'_1 \lie'_t\right) - K^2_{1s} \lie'_1
\ee
By essentially the same argument which led to Eq.~(\ref{eq:TsqRel}) we can use this to infer
\be
\bigl(\tilde{T}'_1\bigr)^2 = \frac{d}{d+1} \tilde{T}\vpu{2}_1 + \frac{2 d^2}{(d+1)^2} \| \ev_1\dr \dl \ev_1 \|
\ee
where $\tilde{T}'_1$ is the matrix with matrix elements $\tilde{T}'_{1rs}$ and $\| e_1\dr$ is the vector defined by Eq.~(\ref{eq:erVecDef}).
As before $\| \ev_1\dr$ is an eigenvector of $\tilde{T}'_1$ with eigenvalue $\frac{2d}{d+1}$.  Consequently the matrix
\be
\tilde{Q}_1 = \frac{d+1}{d} \tilde{T}'_1 - 2 \| \ev_1\dr \dl \ev_1 \|
\ee
is a  projector.  But  that means $\Tr(\tilde{Q}_1)$ must be an integer.  We now use this to derive a contradiction.

It follows from  Eq.~(\ref{eq:lem12interC}) that
\be
(\lie'_r)^2 = 
\begin{cases}
\lie'_r \qquad & r\le m
\\
\frac{2(d-2)}{d^2} I - \frac{d-4}{d} \lie'_r \qquad & r > m
\end{cases}
\ee
Consequently
\be
\tilde{T}_{1rr} = \begin{cases} K^2_{1r} \qquad & r\le m 
\\ \frac{2(d-2)}{d^2} - \frac{d-4}{d} K^2_{1r} \qquad & r > m
\end{cases}
\ee
and so
\begin{align}
\Tr(\tilde{Q}_1) & = \frac{d+1}{d} \sum_{r=1}^{d^2} \tilde{T}_{1rr} - 2
\nonumber
\\
& = d+1 - \frac{4d^2 + 2m(d-2)}{d^3}
\end{align}
So if $\Tr(\tilde{Q}_1)$ is an integer $\left(4d^2 + 2n(d-2)\right)/d^3$ must also be an integer.  But the fact that $1\le m < d^2$, together with the fact that $d>2$ means
\be
\frac{4}{d} < \frac{4d^2 + 2 m(d-2)}{d^3} < 2
\ee
If $d=3$ or $4$ there are no integers in this interval, which gives us a contradiction straight away.  If, on the other hand, $d\ge 5$ there is  the  possibility
\be
\frac{4d^2 + 2 m(d-2)}{d^3}  = 1
\ee
implying 
\be
m = \frac{d^2(d-4)}{2(d-2)}
\label{eq:lem12interD}
\ee
This equation has the solution $d=6$, $m=9$ (this is in fact the only integer solution, as can be seen from an analysis of the possible prime factorizations of the numerator and denominator on the right hand side).  
To eliminate this possibility  define
\be
\lie''_r = \frac{2}{d} I - \lie'_{d^2+1 -r}
\ee
for all $r$.  It is easily verified that
\begin{align}
\Tr(\lie''_r\lie''_s) & = \frac{d\delta_{rs}+1}{d+1}
\\
\sum_{r=1}^{d^2} \lie''_r & = d I 
\intertext{and}
\lie''_r & = 
\begin{cases}
\Pi_r \qquad & r \le d^2 -m 
\\
\frac{2}{d} I - \Pi_r \qquad & r > d^2 - m
\end{cases}
\end{align}
So we can go through  the same argument as before to deduce
\be
d^2 -m = \frac{d^2(d-4)}{2(d-2)}
\label{eq:lem12interE}
\ee
Eqs.~(\ref{eq:lem12interD}) and~(\ref{eq:lem12interE}) have no joint solutions at all with $d\neq 0$, integer or otherwise.
\end{proof}
To complete the proof of Theorem~\ref{thm:QQTEqualsSICness} observe that Eqs.~(\ref{eq:cor10HSInner}) and~(\ref{eq:lem12res}) imply
\be
\Tr(\Pi_r\Pi_s) = \frac{d\delta_{rs}+1}{d+1}
\ee
So the $\Pi_r$ are a SIC-set.  Moreover
\be
L_r = \sign_r \left( \Pi_r + \alpha I\right)
\ee
where $\sign_r = \sign \sign'_r$ and $\alpha=  (\sign\liet - 1)/d$.

\section{The Algebra \texorpdfstring{${\rm sl}(d,\mathbb{C})$}{sl(d,C)}}
\label{eq:SICSandsldC}

The motivation for this paper is the hope that a Lie algebraic perspective may cast some light on the SIC-existence problem, and on the mathematics of SIC-POVMs generally.  We have focused on $\gl(d,\mathbb{C})$ as that is the case where the connection with Lie algebras seems most straightforward.  However, it may be worth mentioning that a SIC-POVM also gives rise to an interesting geometrical structure in $\sla(d,\mathbb{C})$ (the Lie algebra consisting of all trace-zero $d\times d$ complex matrices).  

Let $\Pi_r$ be a SIC-set and define
\be
\slGen_r = \sqrt{\frac{d+1}{2(d^2-1)}} \left( \Pi_r - \frac{1}{d}I\right)
\ee 
So $\slGen_r \in \sla(d,\mathbb{C})$. Let
\be
\langle A, A'\rangle = \Tr(\ad_{A}\ad_{A'}) = 2 d\Tr(A A')
\ee
be the Killing form~\cite{Lie4} on $\sla(d,\mathbb{C})$.  Then 
\be
\langle B_r, B_s\rangle =
\begin{cases}
1 \qquad & r = s
\\
-\frac{1}{d^2-1} \qquad & r\neq s
\end{cases}
\ee
So the $\slGen_r$ form a regular simplex in $\sla(d,\mathbb{C})$.  Since  $\sla(d,\mathbb{C})$ is $d^2-1$ dimensional the $\slGen_r$ are an overcomplete set.  However, the fact that 
\be
\sum_{r=1}^{d^2} \slGen_r = 0
\ee
means that for each $A\in\sla(d,\mathbb{C})$ there is a unique set of numbers $a_r$ such that
\begin{align}
A & = \sum_{r=1}^{d^2} a_r \slGen_r
\\
\intertext{and}
\sum_{r=1}^{d^2} a_r & = 0
\\
\intertext{The $a_r$ can be calculated using}
a_r & = \frac{d^2-1}{d^2} \langle A, \slGen_r\rangle 
\end{align}
Similarly, given any linear transformation $M \colon \sla(d,\mathbb{C}) \to \sla(d,\mathbb{C})$, there is a unique set of numbers $M_{rs}$ such that
\begin{align}
M \slGen_r & = \sum_{s=1}^{d^2} M_{rs} \slGen_s
\\
\intertext{and}
\sum_{s=1}^{d^2} M_{rs}& = \sum_{s=1}^{d^2} M_{sr} = 0
\\
\intertext{for all $r$.  The $M_{rs}$ can be calculated using}
M_{rs} &= \frac{d^2-1}{d^2} \langle B_s , M B_r\rangle
\end{align}
In short, the $\slGen_r$ retain many analogous properties of, and can be used in much the same way as, a basis.  It could be said that they form a simplicial basis.

\section{Further Identities}
\label{sec:AdjRepFurther}

In the preceding pages we have seen that there are five different families of matrices naturally associated with a SIC-POVM:  namely, the projectors $Q^{\vpu{T}}_r$  together with the matrices
\begin{align}
J_r & = Q^{\vpu{T}}_r -Q^{\mathrm{T}}_r
\\
\bar{R}_r & = Q^{\vpu{T}}_r +Q^{\mathrm{T}}_r
\\
R_r & =  Q^{\vpu{T}}_r +Q^{\mathrm{T}}_r + 4 \|e_r \dr \dl e_r \|
\\
T_r & = \frac{d}{d+1} Q_r + \frac{2d}{d+1} \|e_r \dr \dl e_r \|
\end{align}
(see Section~\ref{sec:QQTProp1}).  As we noted previously, it is possible to define everything in terms of the adjoint representation matrices $J_r$ and the rank-$1$ projectors $\|e_r\dr \dl e_r \|$:
\begin{align}
Q_r & = \frac{1}{2} J_r (J_r + I) 
\\
\bar{R}_r & = J^2_r
\\
R_r & = J^2_r + 4 \| e_r \dr \dl e_r \|
\\
T_r & = \frac{d}{2(d+1)} J_r (J_r+I) + \frac{2d}{d+1} \|e_r \dr \dl e_r \|
\end{align}
In that sense the structure constants of the Lie algebra, supplemented with the vectors $\|e_r \dr$, determine everything else.

In the next section we will show that there are some interesting geometrical relationships between the hyperplanes onto which $Q_r$, $Q^{\mathrm{T}}_r$ and $\bar{R}_r$ project.  In this section, as a preliminary to that investigation, we prove a number of identities satisfied by the $Q$, $J$ and $\bar{R}$ matries.
We start  by computing their Hilbert-Schmidt inner products:
\begin{theorem}
For all $r,s$
\begin{align}
\Tr\bigl( Q_r Q_s \bigr) 
& = 
\frac{d^3 \delta_{rs} +d^2-d-1}{(d+1)^2}
\\
\Tr\bigl(Q^{\vpu{T}}_r Q^{\mathrm{T}}_s\bigr)
& =\frac{d^2(1-\delta_{rs})}{(d+1)^2}
\\
\Tr\bigl(J_r J_s \bigr)
& =
\frac{2(d^2\delta_{rs} -1)}{d+1}
\\
\Tr\bigl(\bar{R}_r \bar{R}_s\bigr)
& =
\frac{2(d-1)(d^2\delta_{rs} + 2 d +1)}{(d+1)^2}
\\
\Tr\bigl(J_r \bar{R}_s \bigr) & = 0 
\end{align}
\end{theorem}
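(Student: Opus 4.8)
The plan is to express all five traces in terms of the matrices $T_r$ and the rank-$1$ projectors $P_r := \| e_r\dr\dl e_r\|$, using the identities $Q_r = \frac{d+1}{d}T_r - 2P_r$, $J_r = Q_r - Q^{\mathrm{T}}_r$, $\bar{R}_r = Q_r + Q^{\mathrm{T}}_r$ from Section~\ref{sec:QQTProp1}, together with the fact that $P_r$ is real and symmetric. Two observations cut down the labour. First, $\Tr(J_r J_s)$ can be read off at once from the relation between the Killing form of $\gl(d,\mathbb{C})$ and the Hilbert--Schmidt inner product quoted in the proof of Corollary~\ref{corr:HSforQQTproperties}, $\Tr(\ad_A\ad_B) = 2d\,\Tr(AB) - 2\,\Tr(A)\Tr(B)$: since $J_r = \ad_{\Pi_r}$, $\Tr(\Pi_r) = 1$ and $\Tr(\Pi_r\Pi_s) = K^2_{rs}$, this gives $\Tr(J_rJ_s) = 2dK^2_{rs} - 2 = \frac{2(d^2\delta_{rs}-1)}{d+1}$. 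Second, the diagonal cases ($r=s$) of the four remaining identities follow immediately from $Q_r^2 = Q_r$, $\Tr(Q_r) = d-1$, and the $Q$-$Q^{\mathrm{T}}$ property $Q^{\vphantom{\mathrm{T}}}_r Q^{\mathrm{T}}_r = 0$, so that only the off-diagonal terms really need the Gram matrix.

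Next I would dispose of the elementary pieces. From the definition~(\ref{eq:erVecDef}) of $\| e_r\dr$ and $K^2_{rs} = \frac{d\delta_{rs}+1}{d+1}$, expanding Kronecker deltas gives $\sum_t K^2_{rt}K^2_{st} = \frac{d^2(\delta_{rs}+1)+2d}{(d+1)^2}$, hence $\dl e_r\|e_s\dr = \frac{d+1}{2d}\sum_t K^2_{rt}K^2_{st} = \frac{d(\delta_{rs}+1)+2}{2(d+1)}$ and $\Tr(P_rP_s) = (\dl e_r\|e_s\dr)^2$. For $\Tr(T_rP_s) = \dl e_s\| T_r\| e_s\dr$ I would expand the two $K^2$ factors into Kronecker deltas and evaluate the four resulting sums using $T_{rss} = K^2_{rs}$, $\sum_t T_{rst} = dK^2_{rs}$ (Eq.~(\ref{eq:sec2FinalA})), the cyclic symmetry of $T$, and $\sum_{u,v}T_{ruv} = d\sum_u K^2_{ru} = d^2$; the result is symmetric in $r,s$, equal to $\frac{2d}{d+1}$ when $r=s$ and to $\frac{d(d+4)}{2(d+1)^2}$ otherwise.

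The substantive step is the two quadratic traces in $T$. For $\Tr(T^{\vphantom{\mathrm{T}}}_r T^{\mathrm{T}}_s) = \sum_{u,v}T_{ruv}T_{suv}$, substitute $T_{ruv} = G_{ru}G_{uv}G_{vr}$ (Eq.~(\ref{eq:TripProdDefB})), do the sum over $u$ first, and match it against the $2$-design identity~(\ref{eq:2designProp}) with the summed index in the role of ``$r$'' there; the inner sum collapses to $\frac{2d}{d+1}G_{rv}G_{sv}$, and the outer sum over $v$ then reduces to $\frac{2d}{d+1}\sum_v K^2_{rv}K^2_{sv}$, which is already known. For $\Tr(T_r T_s) = \sum_{u,v}T_{ruv}T^{*}_{suv}$ (using $T_{rst} = T^{*}_{rts}$), write the summand through the Gram matrix as $K^2_{uv}A_u\overline{A_v}$ with $A_u := G_{ru}G_{us}$; then $\sum_u A_u = (G^2)_{rs} = d\,G_{rs}$ because $\sum_u\Pi_u = dI$ forces $G^2 = dG$, while $\sum_u|A_u|^2 = \sum_u K^2_{ru}K^2_{us}$, giving $\Tr(T_rT_s) = \frac{1}{d+1}\bigl(d\sum_u K^2_{ru}K^2_{us} + d^2K^2_{rs}\bigr)$.

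It remains to assemble. Expanding $Q_r = \frac{d+1}{d}T_r - 2P_r$ gives $\Tr(Q_rQ_s) = \frac{(d+1)^2}{d^2}\Tr(T_rT_s) - \frac{4(d+1)}{d}\Tr(T_rP_s) + 4\Tr(P_rP_s)$ and the identical formula with $T_s$ replaced by $T^{\mathrm{T}}_s$ for $\Tr(Q^{\vphantom{\mathrm{T}}}_r Q^{\mathrm{T}}_s)$; substituting the values above, and merging the $r=s$ and $r\neq s$ cases into single $\delta_{rs}$-expressions, yields the first two displayed identities. The remaining three then follow with no further computation from $J_r = Q_r - Q^{\mathrm{T}}_r$, $\bar{R}_r = Q_r + Q^{\mathrm{T}}_r$, $\Tr(Q^{\mathrm{T}}_r Q^{\mathrm{T}}_s) = \Tr(Q_r Q_s)$ and $\Tr(Q^{\mathrm{T}}_r Q_s) = \Tr(Q_r Q^{\mathrm{T}}_s)$, which give $\Tr(J_rJ_s) = 2\Tr(Q_rQ_s) - 2\Tr(Q^{\vphantom{\mathrm{T}}}_rQ^{\mathrm{T}}_s)$, $\Tr(\bar{R}_r\bar{R}_s) = 2\Tr(Q_rQ_s) + 2\Tr(Q^{\vphantom{\mathrm{T}}}_rQ^{\mathrm{T}}_s)$ and $\Tr(J_r\bar{R}_s) = 0$; the value of $\Tr(J_rJ_s)$ so obtained must agree with the Killing-form computation, a convenient internal check. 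The main obstacle is the bookkeeping in the $2$-design reduction of $\Tr(T^{\vphantom{\mathrm{T}}}_r T^{\mathrm{T}}_s)$ --- lining the index pattern up with Eq.~(\ref{eq:2designProp}) --- together with noticing that $G^2 = dG$; everything else is routine manipulation of Kronecker deltas.
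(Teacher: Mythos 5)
Your proposal is correct and follows essentially the same route as the paper's proof: reduce everything to $\Tr(T_rT_s)$, $\Tr(T^{\vphantom{\mathrm{T}}}_rT^{\mathrm{T}}_s)$, $\dl e_r\|T_s\|e_r\dr$ and $\dl e_r\|e_s\dr$ via $Q_r=\frac{d+1}{d}T_r-2\|e_r\dr\dl e_r\|$, evaluating the two quadratic $T$-sums with the projector property of $G/d$ and the $2$-design identity respectively, then assemble. The one genuine addition is your independent derivation of $\Tr(J_rJ_s)$ from the Killing-form relation $\Tr(\ad_A\ad_B)=2d\Tr(AB)-2\Tr(A)\Tr(B)$, which the paper uses only in Corollary~\ref{corr:HSforQQTproperties}; it is a clean shortcut for that identity and a useful consistency check on the rest.
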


\begin{proof}
Let us first calculate some auxiliary quantities.  It follows from the definition of $T_{r}$, and the fact that the matrix $P = \frac{1}{d} G$ defined by Eq.~(\ref{eq:SICGramProjectorDef}) is a rank $d$ projector, that
\begin{align}
\Tr(T_r T_s) 
& = 
\sum_{u,v=1}^{d^2} T_{ruv}T_{svu}
\nonumber
\\
& = 
\sum_{u,v=1}^{d^2} K^2_{uv} G_{ru}G_{us} G_{sv}G_{vr}
\nonumber
\\
& =
\frac{d}{d+1}\sum_{u=1}^{d^2} K^2_{ru}K^2_{su} + \frac{d^4}{d+1} \sum_{u,v=1}^{d^2} P_{ru}P_{us} P_{sv}P_{vr}
\nonumber
\\
& = \frac{d^2(d\delta_{rs} + d + 2)}{(d+1)^3}  + \frac{d^4}{d+1}  \bigl|P_{rs}\bigr|^2
\nonumber
\\
& = \frac{d^2(d\delta_{rs} + d + 2)}{(d+1)^3}+\frac{d^2}{d+1}  K^2_{rs}
\nonumber
\\
& = \frac{d^2\bigl(d(d+2) \delta_{rs} + 2 d + 3  \bigr)}{(d+1)^3}
\end{align}

Also 
\begin{align}
\Tr\bigl(T^{\vpu{T}}_r T^{\mathrm{T}}_s\bigr)
& = 
\sum_{u,v=1}^{d^2} T_{ruv}T_{suv}
\nonumber
\\
& =
\sum_{u=1}^{d^2}G_{ru}G_{su}\left(\sum_{v=1}^{d^2} G_{uv}G_{uv} G_{vr}G_{vs}\right)
\nonumber
\\
& =\frac{2d}{d+1} \sum_{u=1}^{d^2} G_{ru}G_{su} G_{ur}G_{us}
\nonumber
\\
& = \frac{2d^2}{(d+1)^2} \left( 1+ K^2_{rs} \right)
\nonumber
\\
& = \frac{2d^2(d\delta_{rs} + d+2)}{(d+1)^3}
\end{align}
where we made two applications of Eq.~(\ref{eq:2designProp}) (\emph{i.e.}\ the fact that every SIC-POVM is a $2$-design). Finally, it is a straightforward consequence of the definitions of $T^{\vpu{T}}_{r}$, $T^{\mathrm{T}}_r$ and $\|e_r\dr$ that 
\begin{align}
\dl e_r \| T^{\vpu{T}}_s \| e_r \dr & = \dl e_r \| T^{\mathrm{T}}_s \| e_r \dr 
\nonumber
\\
& =
\frac{d+1}{2d}\sum_{u,v=1}^{d^2} T^{\vpu{2}}_{suv} K^2_{ru} K^2_{rv}
\nonumber
\\
& = \frac{1}{2d(d+1)} \left(d^2 T_{srr} + d \sum_{v=1}^{d^2} T_{srv} + d\sum_{u=1}^{d^2} T_{sur} + \sum_{u,v=1}^{d^2} T_{suv}
\right)
\nonumber
\\
& = \frac{d}{2(d+1)} \left(3  K^2_{rs}  + 1
\right)
\nonumber
\\
& = \frac{d(3 d\delta_{rs} + d+ 4)}{2(d+1)^2}
\\
\intertext{and}
\dl e_r \| e_s \dr 
& = 
\frac{d+1}{2d} \sum_{u=1}^{d^2} K^2_{ru} K^2_{su}
\nonumber
\\
& = \frac{d\delta_{rs}+d+2}{2(d+1)}
\end{align}
Using these results in the expressions
\begin{align}
\Tr\bigl(Q_r Q_s\bigr) 
&=
\Tr\Biggl(\left(\frac{d+1}{d} T_r - 2 \| e_r \dr \dl e_r \| 
\right)\left(\frac{d+1}{d} T_s - 2 \| e_s \dr \dl e_s \| 
\right)
\Biggr)
\\
\intertext{and}
\Tr\bigl(Q^{\vpu{T}}_r Q^{\mathrm{T}}_s\bigr) 
&=
\Tr\Biggl(\left(\frac{d+1}{d} T^{\vpu{T}}_r - 2 \| e_r \dr \dl e_r \| 
\right)\left(\frac{d+1}{d} T^{\mathrm{T}}_s - 2 \| e_s \dr \dl e_s \| 
\right)
\Biggr)
\end{align}
the first two statements follow.  The remaining statements are immediate consequences of these and the fact that
\begin{align}
J^{\vpu{T}}_r &= Q^{\vpu{T}}_r - Q^{\mathrm{T}}_r
\\
\bar{R}^{\vpu{T}}_r & = Q^{\vpu{T}}_r + Q^{\mathrm{T}}_r
\end{align}

\end{proof}
Now define 
\be
\| \vv \dr = \frac{1}{d} \sum_{r=1}^{d^2} \| r \dr
\label{eq:vZeroDef}
\ee
where $\|r\dr$ is the basis defined in Eq.~(\ref{eq:hd2StandardBasis}).  The following result shows (among other things) that the subspaces onto which the $Q^{\vpu{T}}_r$ (respectively $Q^{\mathrm{T}}_r$, $R^{\vpu{T}}_r$) project  span the orthogonal complement of $\|\vv\dr$.  
\begin{theorem}
\label{thm:vVecIdentities}
For all $r$
\be
Q^{\vpu{T}}_r \| \vv\dr = Q^{\mathrm{T}}_r \| \vv\dr = J^{\vpu{T}}_r \| \vv\dr = R^{\vpu{T}}_r \| \vv\dr = 0
\label{eq:vVecAnnihalation}
\ee
Moreover
\begin{align}
\sum_{r=1}^{d^2} Q^{\vpu{T}}_r = \sum_{r=1}^{d^2} Q^{\mathrm{T}}_r  & = \frac{d^2}{d+1} \bigl( I - \| \vv\dr \dl \vv \| \bigr)
\label{eq:Qsumrel}
\\
\sum_{r=1}^{d^2} J_r & = 0
\\
\sum_{r=1}^{d^2} \bar{R}_r & = \frac{2d^2}{d+1} \bigl( I - \| \vv\dr \dl \vv \| \bigr)
\label{eq:Rbsumrel}
\end{align}
\end{theorem}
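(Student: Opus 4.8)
The plan is to deduce the four annihilation identities in \eqref{eq:vVecAnnihalation} from the single fact $Q^{\vpu{T}}_r\|\vv\dr = 0$, to establish that fact from the spectral form $Q_r = \frac{d+1}{d}T_r - 2\|e_r\dr\dl e_r\|$ of Eq.~\eqref{eq:Qdef}, and then to obtain the three summation formulas by summing over $r$ the already-known expressions for $T_r$ and for $\|e_r\dr\dl e_r\|$, using that the constant $d^2\times d^2$ matrix with every entry $1$ equals $d^2\|\vv\dr\dl\vv\|$. (In \eqref{eq:vVecAnnihalation} the fourth entry should be read as $\bar{R}_r = J^2_r = Q^{\vpu{T}}_r + Q^{\mathrm{T}}_r$; the full matrix $R_r = \bar{R}_r + 4\|e_r\dr\dl e_r\|$ does not annihilate $\|\vv\dr$, since $R_r\|\vv\dr = 4\,\dl e_r\|\vv\dr\,\|e_r\dr$.)

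For the reduction: $\|\vv\dr$ has all components real and $Q_r$ is Hermitian, so $Q^{\mathrm{T}}_r\|\vv\dr = Q^{*}_r\|\vv\dr = \overline{Q^{\vpu{T}}_r\|\vv\dr}$; hence $Q^{\vpu{T}}_r\|\vv\dr = 0$ forces $Q^{\mathrm{T}}_r\|\vv\dr = 0$, and then $J^{\vpu{T}}_r\|\vv\dr = (Q^{\vpu{T}}_r - Q^{\mathrm{T}}_r)\|\vv\dr = 0$ and $\bar{R}_r\|\vv\dr = (Q^{\vpu{T}}_r + Q^{\mathrm{T}}_r)\|\vv\dr = 0$ by \eqref{eq:JtermsQQT} and \eqref{eq:RbarDef}. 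To get $Q^{\vpu{T}}_r\|\vv\dr = 0$ I would compute $T_r\|\vv\dr$ and $\dl e_r\|\vv\dr$. By Eq.~\eqref{eq:sec2FinalA}, $\dl s\|T_r\|\vv\dr = \frac{1}{d}\sum_t T_{rst} = K^2_{rs}$, so comparison with \eqref{eq:erVecDef} gives $T_r\|\vv\dr = \sqrt{\tfrac{2d}{d+1}}\,\|e_r\dr$; and $\dl e_r\|\vv\dr = \sqrt{\tfrac{d+1}{2d}}\cdot\frac{1}{d}\sum_s K^2_{rs} = \sqrt{\tfrac{d+1}{2d}}$ since $\sum_s K^2_{rs} = d$. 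Feeding these into Eq.~\eqref{eq:Qdef}, the two terms cancel exactly. (Equivalently, $\sum_t Q_{rst} = 0$ follows directly from \eqref{eq:Qexplicit} once one uses Eq.~\eqref{eq:thm2SICcondB} in the form $\sum_t K_{rt}K_{st}e^{i\theta_{rst}} = dK_{rs}$ together with $\sum_t K^2_{rt} = d$.)

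For the summation identities, $\sum_r J_r = 0$ is immediate: the assignment $\Pi_r\mapsto J_r$ is the adjoint representation in the SIC basis, so $\sum_r J_r$ is the adjoint matrix of $\sum_r\Pi_r = dI$, namely $\ad_{dI} = 0$ (equivalently, combine the complete antisymmetry of $J_{rst}$ with $\sum_t J_{rst} = 0$ from Eq.~\eqref{eq:Jsum}). For $\sum_r Q_r$, sum the two pieces of Eq.~\eqref{eq:Qdef} separately. Using the cyclic symmetry $T_{rst} = T_{str}$ and then Eq.~\eqref{eq:sec2FinalA},
\be
\bigl(\textstyle\sum_r T_r\bigr)_{st} = \sum_r T_{rst} = \sum_r T_{str} = dK^2_{st} = \frac{d^2\delta_{st}+d}{d+1},
\ee
so $\sum_r T_r = \frac{d^2}{d+1}\bigl(I + d\,\|\vv\dr\dl\vv\|\bigr)$. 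An elementary evaluation of $\sum_r K^2_{rs}K^2_{rt} = \frac{d^2\delta_{st}+d^2+2d}{(d+1)^2}$ similarly gives $\sum_r\|e_r\dr\dl e_r\| = \frac{d}{2(d+1)}\bigl(I + d(d+2)\,\|\vv\dr\dl\vv\|\bigr)$. Combining,
\be
\sum_r Q_r = \frac{d+1}{d}\sum_r T_r - 2\sum_r\|e_r\dr\dl e_r\| = \frac{d^2}{d+1}\bigl(I - \|\vv\dr\dl\vv\|\bigr).
\ee
The right-hand side is real and symmetric, so $\sum_r Q^{\mathrm{T}}_r = \bigl(\sum_r Q_r\bigr)^{\mathrm{T}}$ equals the same expression, and $\sum_r\bar{R}_r = \sum_r(Q^{\vpu{T}}_r + Q^{\mathrm{T}}_r) = \frac{2d^2}{d+1}\bigl(I - \|\vv\dr\dl\vv\|\bigr)$.

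I do not expect a genuine obstacle: the argument is bookkeeping resting on facts already established. The two points needing care are (i) applying the cyclic symmetry of $T_{rst}$ so that $\sum_r T_{rst}$ reduces to the known contraction $\sum_t T_{rst} = dK^2_{rs}$ rather than to a different sum, and (ii) recognizing the all-ones $d^2\times d^2$ matrix as $d^2\|\vv\dr\dl\vv\|$, which is what converts the index-level computations into the stated operator identities.
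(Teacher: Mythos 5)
Your proof is correct, and the summation identities are established by exactly the computation the paper uses (the paper evaluates $\sum_r Q_{rst}=(d+1)K^2_{st}-\tfrac{d+1}{d}\sum_r K^2_{rs}K^2_{rt}=\tfrac{d^2\delta_{st}-1}{d+1}$ in one step, which is your two-piece sum recombined). The only real divergence is in the annihilation identities, where you run the logic in the opposite direction: you prove $Q^{\vpu{T}}_r\|\vv\dr=0$ directly from Eq.~(\ref{eq:Qdef}) by computing $T_r\|\vv\dr$ and $\dl e_r\|\vv\dr$, and then deduce the statements for $Q^{\mathrm{T}}_r$, $J_r$ and $\bar{R}_r$; the paper instead starts from the Lie-algebraic fact that $J_r$ is the adjoint representative of $\Pi_r$ and $\sum_s\Pi_s=dI$, which gives $J_r\|\vv\dr=0$ and $\sum_r J_r=0$ simultaneously, and then obtains the $Q$-statements from $Q^{\vpu{T}}_r=\tfrac12 J_r(J_r+I)$. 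The paper's route is slightly slicker in that one representation-theoretic observation yields both an annihilation identity and a summation identity at once, while yours requires the separate contraction $\sum_t T_{rst}=dK^2_{rs}$; but both rest on the same previously established facts and the difference is one of ordering rather than substance. Your observation that the fourth entry of Eq.~(\ref{eq:vVecAnnihalation}) must be read as $\bar{R}_r=Q^{\vpu{T}}_r+Q^{\mathrm{T}}_r$ rather than $R_r$ is correct and is confirmed by the paper's own proof, which establishes the identity only for $\bar{R}_r$; as you note, $R_r\|\vv\dr=4\dl e_r\|\vv\dr\,\|e_r\dr\neq 0$.
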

\begin{proof}
Some of this is a straightforward consequence of the fact that $J_r$ is the adjoint representative of $\Pi_r$.  Since
\be
\sum_{s=1}^{d^2} \Pi_s = d I
\ee
we must have
\be
\sum_{s,t=1}^{d^2}  J_{rst} \Pi_t  = \sum_{s=1}^{d^2} \ad_{\Pi_r} \Pi_s = 0
\ee
In view of the antisymmetry of the $J_{rst}$ it follows that
\begin{align}
\sum_{r=1}^{d^2} J_r & = 0
\\
\intertext{and}
J_r \| \vv\dr & = 0
\end{align}
Using the relations 
\begin{align}
Q^{\vpu{T}}_r &= \frac{1}{2} J^{\vpu{T}}_r(J^{\vpu{T}}_r+I)
\\
Q^{\mathrm{T}}_r & = \frac{1}{2} J^{\vpu{T}}_r(J^{\vpu{T}}_r-I)
\\
\bar{R}^{\vpu{T}}_r & = J^{2}_r
\end{align}
we deduce 
\be
Q^{\vpu{T}}_r \|\vv\dr = Q^{\mathrm{T}}_r \| \vv\dr = \bar{R}^{\vpu{T}}_r \| \vv\dr = 0
\ee
It remains to prove Eqs.~(\ref{eq:Qsumrel}) and~(\ref{eq:Rbsumrel}).  It follows from Eq.~(\ref{eq:Qdef}) that
\begin{align}
\sum_{r=1}^{d^2} Q_{rst} &= \frac{d+1}{d} \sum_{r=1}^{d^2} T_{rst} - 2 \sum_{r=1}^{d^2} \dl s \| e_r \dr \dl e_r \| t\dr 
\nonumber
\\
& = (d+1) K^2_{st} - \frac{d+1}{d}\sum_{r=1}^{d^2} K^2_{rs}K^2_{rt}
\nonumber
\\
& = \frac{d^2\delta_{st}-1}{d+1}
\end{align}
from which it follows 
\be
\sum_{r=1}^{d^2} Q^{\vpu{T}}_r = \sum_{r=1}^{d^2} Q^{\mathrm{T}}_r=\frac{d^2}{d+1} \bigl( I - \| \vv \dr \dl \vv \| \bigr)
\ee
Eq.~(\ref{eq:Rbsumrel})  follows from this and the fact that $R^{\vpu{T}}_r = Q^{\vpu{T}}_r + Q^{\mathrm{T}}_r$.

\end{proof}

\section{Geometrical Considerations}
\label{sec:geometry}

In this section we show that  there are some interesting geometrical relationships between the subspaces onto which the operators $Q^{\vpu{T}}_r$, $Q^{\mathrm{T}}_r$ and $\bar{R}_r$ project.  The original motivation for this work was an observation concerning the subspaces onto which the $\bar{R}_r$ project.  $\bar{R}_r$ is a real matrix, and so it defines a $2(d-2)$ subspace in $\mathbb{R}^{d^2}$, which we will denote $\mathcal{R}_r$.  We noticed that for each pair of distinct indices $r$ and $s$ the intersection $\mathcal{R}_r\cap \mathcal{R}_s$ is a $1$-dimensional line.  This led us to speculate that a set of hyperplanes parallel to the $\mathcal{R}_r$ might be the edges of an interesting polytope.  We continue to think that this could  be the case.  Unfortunately we have not been able to prove it.  However, it appears to us that the results we obtained while trying to prove it have an interest which is independent of the truth of the motivating speculation.

We will begin with some terminology.  Let $P$ be any projector (on either $\mathbb{R}^N$ or $\mathbb{C}^N$),  let $\mathcal{P}$ be the subspace onto which $P$ projects, and let $|\psi\rangle$ be any non-zero vector.  Then we define the angle between $|\psi\rangle$ and $\mathcal{P}$ in the usual way, to be
\be
\theta = \cos^{-1} \left( \frac{\bigl\| P |\psi\rangle \bigr\| }{\bigl\| |\psi \rangle \bigr\|}\right)
\ee
(so $\theta$ is the smallest angle between $|\psi \rangle$ and any of the vectors in $\mathcal{P}$).  

Suppose, now, that $P'$ is another projector, and let $\mathcal{P}'$ be the subspace onto which $P'$ projects.  We will say that  $\mathcal{P}'$ is uniformly inclined to $\mathcal{P}$ if every vector in $\mathcal{P}'$ makes the same angle $\theta$ with $\mathcal{P}$.  If $\theta=0$ this means that $\mathcal{P}' \subseteq \mathcal{P}$, while if $\theta=\frac{\pi}{2}$ it means $\mathcal{P}'\perp \mathcal{P}$.  Suppose, on the other hand, that $0< \theta < \frac{\pi}{2}$.  Let $|u'_1\rangle, \dots , |u'_n\rangle$ be any  orthonormal basis for $\mathcal{P}'$, and define $|u_r\rangle = \sec \theta P |u'_r\rangle$.  Then $\langle u_r | u_r \rangle = 1$ for all $r$.  Moreover, if $P, P'$ are complex projectors, 
\be
\langle u'_r + e^{i\phi} u'_s| P |u'_r + e^{i\phi} u'_s\rangle  
 = 
2\cos^2 \theta\Bigl(1 +  \rel\left(  e^{i\phi} \langle u_r | u_s \rangle
\right)\Bigr)
\ee
for all $\phi$ and $r\neq s$.
On the other hand it follows from the assumption that $\mathcal{P}'$ is uniformly inclined to $\mathcal{P}$ that
\be
\langle u'_r + e^{i\phi} u'_s| P |u'_r + e^{i\phi} u'_s\rangle  = 2 \cos^2 \theta 
\ee
for all $\phi$ and $r\neq s$.  Consequently
\be
\langle u_r | u_s \rangle = \delta_{rs}
\ee
for all $r,s$.   It is easily seen that the same is true if $P, P'$ are real projectors.  

Suppose we now make the further assumption that $\dim(\mathcal{P}') = \dim (\mathcal{P}) = n$.  Then $|u_1\rangle, \dots , |u_n\rangle$ is an orthonormal basis for $\mathcal{P}$, and we can write
\begin{align}
P& = \sum_{r=1}^{n} |u_r\rangle \langle u_r |
\label{eq:PtermsBasis}
\\
P'& = \sum_{r=1}^{n} |u'_r\rangle \langle u'_r |
\label{eq:PptermsBasis}
\end{align}
Observe that
\be
\langle u'_r | u_s \rangle = \langle u'_r | P |u_s\rangle = \cos \theta \langle u_r | u_s \rangle  = \cos \theta \delta_{rs}
\ee
for all $r,s$. 
Consequently
\be
P'|u_r\rangle  = \cos \theta | u_r\rangle
\ee
for all $r$.  It follows that
\be
\bigl\| P' |\psi\rangle \bigr\|
=
\left\|\sum_{r=1}^{n} \cos\theta \langle u_r |\psi\rangle |u'_r\rangle\right\|
=
\cos \theta \bigl\| |\psi\rangle \bigr\|
\ee
 for all $|\psi\rangle\in \mathcal{P}$.  So $\mathcal{P}$ is uniformly inclined to $\mathcal{P}'$ at the same angle $\theta$.

It follows from Eqs.~(\ref{eq:PtermsBasis}) and~(\ref{eq:PptermsBasis}) that
\begin{align}
P P' P & = \cos^2\theta P
\label{eq:PPpPcond}
\\
P' P P' & = \cos^2 \theta P'
\label{eq:PpPPpcond}
\end{align}

Eq.~(\ref{eq:PPpPcond}), or equivalently Eq.~(\ref{eq:PpPPpcond}), is not only necessary but also sufficient for the subspaces to be uniformly inclined.  In fact,
let $\mathcal{P}$, $\mathcal{P}'$ be any two subspaces which have the same dimension $n$, but which are not assumed at the outset to be uniformly inclined, and let $P$, $P'$ be the corresponding projectors.  Suppose
\be
P P' P = \cos^2 \theta P
\ee
for some $\theta$ in the range $0 \le \theta \le \frac{\pi}{2}$.  It is immediate that $\mathcal{P} = \mathcal{P}'$ if $\theta=0$, and $\mathcal{P}\perp \mathcal{P}'$ if $\theta= \frac{\pi}{2}$.  Either way, the subspaces are uniformly inclined.  Suppose, on the other hand, that $0 < \theta < \frac{\pi}{2}$.  Let $|u'_1\rangle, \dots , |u'_n\rangle$ be any orthonormal basis for $\mathcal{P}'$, and define $|u_r\rangle = \sec\theta P |u'_r\rangle$.
Eq.~(\ref{eq:PtermsBasis}) then implies
\be
P = \sec^2\theta \sum_{r=1}^{n} P |u'_r\rangle \langle u'_r | P = \sum_{r=1}^{n} |u_r\rangle \langle u_r |
\ee
Given any $|\psi\rangle \in \mathcal{P}$ we have
\be
|\psi \rangle = P | \psi \rangle = \sum_{r=1}^n \langle u_r | \psi \rangle |u_r\rangle
\ee
Since $\dim(\mathcal{P}) = n$ it follows that the $|u_r\rangle$ are linearly independent.  In particular
\be
|u_r\rangle = P |u_r\rangle = \sum_{s=1}^{n} \langle u_s | u_r \rangle |u_s\rangle
\ee
Since the $|u_r\rangle$ are linearly independent this means
\be
\langle u_s | u_r \rangle = \delta_{rs}
\ee
So the $|u_r\rangle$ are an orthonormal basis for $\mathcal{P}$.  
It follows, that if $|\psi'\rangle$ is any vector in $\mathcal{P}'$, then
\be
\bigl\| P |\psi'\rangle \bigr\| = \left\| \sum_{r=1}^{n} \langle u'_r | \psi'\rangle P |u'_r\rangle
\right\|
=\cos \theta \left\| \sum_{r=1}^{n} \langle u'_r | \psi'\rangle |u_r\rangle 
\right\|
= \cos \theta \bigl\| |\psi'\rangle \bigr\|
\ee
implying that $\mathcal{P}'$ is uniformly inclined to $\mathcal{P}$ at angle $\theta$.  

It will be convenient to summarise all this in the form of a lemma:
\begin{lemma}
\label{lem:UniformIncline}
Let $\mathcal{P}$, $\mathcal{P}'$ be any two subspaces, real or complex, having the same dimension $n$.  Let $P$, $P'$ be the corresponding projectors.  Then the following statements are equivalent:
\begin{enumerate}
\item[(a)] $\mathcal{P}'$ is uniformly inclined to $\mathcal{P}$ at angle $\theta$.
\item[(b)]  $\mathcal{P}$ is uniformly inclined to $\mathcal{P}'$ at angle $\theta$.
\item[(c)] 
\be
P P' P = \cos^2 \theta P 
\ee
\item[(d)]
\be
P' P P' = \cos^2 \theta P'
\ee
\end{enumerate}
Suppose these conditions are satisfied for some $\theta$ in the range $0< \theta < \frac{\pi}{2}$, and let $|u_1\rangle, \dots |u_n\rangle$ be any orthonormal basis for $\mathcal{P}$.  Then there exists an orthonormal basis $|u'_1\rangle, \dots, |u'_n\rangle$ for $\mathcal{P}'$ such that
\begin{align}
P'| u_r\rangle &= \cos \theta |u'_r\rangle
\\
P | u'_r\rangle & = \cos \theta | u_r \rangle 
\end{align}
\end{lemma}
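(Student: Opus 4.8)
The plan is to note that nearly all the analytic content is already established in the discussion preceding the statement, so the proof reduces to three tasks: assembling the four implications, exploiting the symmetry under $\mathcal P\leftrightarrow\mathcal P'$, and carrying out one short explicit construction for the concluding assertion. First I would record what the preceding prose actually proves. Starting from (a) it produces an orthonormal basis $|u_1\rangle,\dots,|u_n\rangle$ of $\mathcal P$ and an orthonormal basis $|u'_1\rangle,\dots,|u'_n\rangle$ of $\mathcal P'$ with $P'|u_r\rangle=\cos\theta\,|u'_r\rangle$ and $P|u'_r\rangle=\cos\theta\,|u_r\rangle$, and from these Eqs.~(\ref{eq:PPpPcond}) and~(\ref{eq:PpPPpcond}) follow immediately; thus (a) implies both (c) and (d). The subsequent paragraph (``Eq.~(\ref{eq:PPpPcond})\dots is not only necessary but also sufficient\dots'') shows the reverse, that (c) implies (a), with the degenerate endpoints $\theta=0$ (forcing $\mathcal P=\mathcal P'$) and $\theta=\frac{\pi}{2}$ (forcing $\mathcal P\perp\mathcal P'$) handled separately there.

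Second, I would invoke the manifest symmetry of the hypotheses under the simultaneous interchange $\mathcal P\leftrightarrow\mathcal P'$, $P\leftrightarrow P'$. This interchange carries (a) to (b), (c) to (d), and (d) to (c), so the implications already in hand, namely $(a)\Rightarrow(c)$, $(a)\Rightarrow(d)$ and $(c)\Rightarrow(a)$, yield at once $(b)\Rightarrow(d)$, $(b)\Rightarrow(c)$ and $(d)\Rightarrow(b)$. Chaining these, $(a)\Rightarrow(d)\Rightarrow(b)\Rightarrow(c)\Rightarrow(a)$, so all four statements are equivalent.

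Third, for the final assertion suppose $0<\theta<\frac{\pi}{2}$ and let $|u_1\rangle,\dots,|u_n\rangle$ be an arbitrary orthonormal basis of $\mathcal P$. I would set $|u'_r\rangle=\sec\theta\,P'|u_r\rangle$ and verify the claimed properties using (c). Orthonormality follows from $\langle u'_r|u'_s\rangle=\sec^2\theta\,\langle u_r|P'|u_s\rangle=\sec^2\theta\,\langle u_r|PP'P|u_s\rangle=\langle u_r|u_s\rangle=\delta_{rs}$, where the middle step uses $P|u_r\rangle=|u_r\rangle$ and the last uses $PP'P=\cos^2\theta\,P$; since $\dim\mathcal P'=n$, these $n$ orthonormal vectors form a basis of $\mathcal P'$. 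The relation $P'|u_r\rangle=\cos\theta\,|u'_r\rangle$ is then the definition, and $P|u'_r\rangle=\sec\theta\,PP'P|u_r\rangle=\cos\theta\,|u_r\rangle$ again by (c). The closest thing to an obstacle is purely organizational: keeping straight which implications come from the prose and which from the $P\leftrightarrow P'$ symmetry, and confining the endpoints $\theta\in\{0,\frac{\pi}{2}\}$ (where $\sec\theta$ is trivial or undefined) to the equivalence of (a)--(d), which is legitimate since the moreover clause explicitly excludes them. There is no substantive difficulty beyond this; the real work has all been done in the material preceding the lemma.
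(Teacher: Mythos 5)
Your proposal is correct and follows essentially the same route as the paper: the equivalence of (a)--(d) is exactly the content of the discussion preceding the lemma (with the $\mathcal{P}\leftrightarrow\mathcal{P}'$ symmetry supplying the primed versions), and your construction $|u'_r\rangle=\sec\theta\,P'|u_r\rangle$ for the final assertion is just the mirror image of the paper's construction $|u_r\rangle=\sec\theta\,P|u'_r\rangle$, verified the same way via $PP'P=\cos^2\theta\,P$. Nothing is missing.
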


We are now in a position to state the main results of this section.  Let $\mathcal{Q}_r$ (respectively $\bar{\mathcal{Q}}_r$) be the subspace onto which $Q_r$ (respectively $Q^{\mathrm{T}}_r$) projects. We then have
\begin{theorem}
\label{thm:Qgeometry}
For each pair of distinct indices $r$, $s$ the subspaces $\mathcal{Q}_r$, $\bar{\mathcal{Q}}_r$ have the orthogonal decomposition
\begin{align}
\mathcal{Q}^{\vpu{0}}_r & = \mathcal{Q}^{0}_{rs} \oplus  \mathcal{Q}^{\vpu{0}}_{rs} 
\\
\bar{\mathcal{Q}}^{\vpu{0}}_r & = \bar{\mathcal{Q}}^{0}_{rs} \oplus  \bar{\mathcal{Q}}^{\vpu{0}}_{rs} 
\end{align}
where 
\begin{align}
\mathcal{Q}^{0}_{rs} &\perp \mathcal{Q}^{\vpu{0}}_{rs} & \dim(\mathcal{Q}^{0}_{rs}) &=1 & \dim(\mathcal{Q}^{\vpu{0}}_{rs}) & = d-2
\nonumber
\\
\bar{\mathcal{Q}}^{0}_{rs}& \perp \bar{\mathcal{Q}}^{\vpu{0}}_{rs} & \dim(\bar{\mathcal{Q}}^{0}_{rs}) &= 1 & \dim(\bar{\mathcal{Q}}^{\vpu{0}}_{rs}) &= d-2
\nonumber
\end{align}
We have
\begin{enumerate}
\item[(a)] Relation of the subspaces $\mathcal{Q}_r$ and $\mathcal{Q}_s$:
\begin{enumerate}
\item[(1)] $\mathcal{Q}^0_{rs} \perp \mathcal{Q}^{\vpu{0}}_{sr}$ and $\mathcal{Q}^{\vpu{0}}_{rs} \perp \mathcal{Q}^0_{sr}$.
\item[(2)] $\mathcal{Q}^0_{rs}$ and $\mathcal{Q}^0_{sr}$ are inclined at angle $\cos^{-1} \bigl(\frac{1}{d+1}\bigr)$.
\item[(3)]  $\mathcal{Q}_{rs}$ and $\mathcal{Q}_{sr}$ are uniformly inclined at angle $\cos^{-1} \Bigl(\frac{1}{\sqrt{d+1}}\Bigr)$.
\end{enumerate}
\item[(b)] Relation of the subspaces $\bar{\mathcal{Q}}_r$ and $\bar{\mathcal{Q}}_s$:
\begin{enumerate}
\item[(1)] $\bar{\mathcal{Q}}^0_{rs} \perp \bar{\mathcal{Q}}^{\vpu{0}}_{sr}$ and $\bar{\mathcal{Q}}^{\vpu{0}}_{rs} \perp \bar{\mathcal{Q}}^0_{sr}$.
\item[(2)] $\bar{\mathcal{Q}}^0_{rs}$ and $\bar{\mathcal{Q}}^0_{sr}$ are inclined at angle $\cos^{-1} \bigl(\frac{1}{d+1}\bigr)$.
\item[(3)]  $\bar{\mathcal{Q}}_{rs}$ and $\bar{\mathcal{Q}}_{sr}$ are uniformly inclined at angle $\cos^{-1} \Bigl(\frac{1}{\sqrt{d+1}}\Bigr)$.
\end{enumerate}
\item[(c)] Relation of the subspaces $\mathcal{Q}_r$ and $\bar{\mathcal{Q}}_s$:
\begin{enumerate}
\item[(1)] $\mathcal{Q}^0_{rs} \perp \bar{\mathcal{Q}}^{\vpu{0}}_{sr}$,  $\mathcal{Q}^{\vpu{0}}_{rs} \perp \bar{\mathcal{Q}}^0_{sr}$ and $\mathcal{Q}^{\vpu{0}}_{rs} \perp \bar{\mathcal{Q}}^{\vpu{0}}_{sr}$.
\item[(2)] $\mathcal{Q}^0_{rs}$ and $\bar{\mathcal{Q}}^0_{sr}$ are inclined at angle $\cos^{-1} \bigl(\frac{d}{d+1}\bigr)$.
\end{enumerate}
\end{enumerate}
\end{theorem}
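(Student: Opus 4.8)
The plan is to transport the whole statement to operators on $\mathcal{H}_d$ via the SIC expansion, where $\mathcal{Q}_r$ and $\bar{\mathcal{Q}}_r$ become elementary ``block'' subspaces of $\gl(d,\mathbb{C})$ and every clause of the theorem collapses to a one‑line Hilbert–space computation. I use the linear bijection $\|A\dr\leftrightarrow A=\sum_t a_t\Pi_t$ between $\mathcal{H}_{d^2}=\mathbb{C}^{d^2}$ and $\gl(d,\mathbb{C})$, and record three facts. (i) From $\Tr(\Pi_r\Pi_s)=K^2_{rs}=\tfrac{d}{d+1}\delta_{rs}+\tfrac1{d+1}$ one gets $\dl A\|B\dr=\tfrac{d+1}{d}\Tr(A^{\dagger}B)$ whenever $\Tr A=0$ or $\Tr B=0$. (ii) Entrywise conjugation of coefficient vectors corresponds to $\dagger$ of operators (immediate from~(\ref{eq:Aexpansion})), so $Q^{\mathrm T}_r=\overline{Q_r}$ acts by $\|A\dr\mapsto\|A^{\dagger}\dr$. (iii) The key identity: using Eq.~(\ref{eq:Qexplicit}), the reconstruction formula~(\ref{eq:Aexpansion}), and $\Pi_r\Pi_s\Pi_r=K^2_{rs}\Pi_r$, a short calculation gives, for all $r,s$,
\be
Q_r\|s\dr=\big\|\,(I-\Pi_r)\Pi_s\Pi_r\,\big\dr,\qquad Q^{\mathrm T}_r\|s\dr=\big\|\,\Pi_r\Pi_s(I-\Pi_r)\,\big\dr .
\ee
Since $\mathcal{Q}_r=\mathrm{range}(Q_r)$ is spanned by the $Q_r\|t\dr$, and $\{(I-\Pi_r)\Pi_t\Pi_r:t\neq r\}=\{G_{tr}(I-\Pi_r)|\psi_t\rangle\langle\psi_r|\}$ spans the whole off‑diagonal block $\bar r\otimes r:=\{(I-\Pi_r)X\Pi_r\}=\{|\phi\rangle\langle\psi_r|:\phi\perp\psi_r\}$ (the $G_{tr}$ are nonzero and the $|\psi_t\rangle$ span $\mathcal{H}_d$), we obtain $\mathcal{Q}_r\leftrightarrow\bar r\otimes r$ and, by conjugation, $\bar{\mathcal{Q}}_r\leftrightarrow r\otimes\bar r=\{|\psi_r\rangle\langle\chi|:\chi\perp\psi_r\}$. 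All these operators are traceless, so on each of these subspaces $\dl\cdot\|\cdot\dr$ equals $\tfrac{d+1}{d}$ times the Hilbert–Schmidt form, and both ``orthogonal'' and ``uniformly inclined'' may be computed with $\Tr(A^{\dagger}B)$.

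\textbf{The decomposition.} Write $|\psi_s^{\perp r}\rangle:=(I-\Pi_r)|\psi_s\rangle$, so $\||\psi_s^{\perp r}\rangle\|^2=1-|\langle\psi_r|\psi_s\rangle|^2=\tfrac d{d+1}$ for $r\neq s$. Then $\mathcal{Q}^0_{rs}:=\mathrm{span}(Q_r\|s\dr)\leftrightarrow\mathrm{span}\!\big(|\psi_s^{\perp r}\rangle\langle\psi_r|\big)$ is one‑dimensional and nonzero (its squared HS‑norm is $\tfrac d{(d+1)^2}$), and $|\phi\rangle\langle\psi_r|\in\bar r\otimes r$ is orthogonal to it iff $\phi\perp|\psi_s^{\perp r}\rangle$; hence $\mathcal{Q}_{rs}:=\mathcal{Q}_r\ominus\mathcal{Q}^0_{rs}\leftrightarrow\{|\phi\rangle\langle\psi_r|:\phi\in\mathrm{span}(\psi_r,\psi_s)^{\perp}\}$, of dimension $d-2$. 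The barred spaces $\bar{\mathcal{Q}}^0_{rs}$, $\bar{\mathcal{Q}}_{rs}$ are the Hermitian conjugates of $\mathcal{Q}^0_{sr}$-type operators, e.g.\ $\bar{\mathcal{Q}}^0_{sr}\leftrightarrow\mathrm{span}\!\big(\Pi_s\Pi_r(I-\Pi_s)\big)$ and $\bar{\mathcal{Q}}_{sr}\leftrightarrow\{|\psi_s\rangle\langle\chi|:\chi\in\mathrm{span}(\psi_r,\psi_s)^{\perp}\}$.

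\textbf{The verifications.} Everything is now a computation with rank‑one operators via $\Tr\!\big((|\phi\rangle\langle\chi|)^{\dagger}|\phi'\rangle\langle\chi'|\big)=\langle\phi|\phi'\rangle\langle\chi'|\chi\rangle$. The orthogonalities (a)(1),(b)(1),(c)(1) hold because the bra/ket factors that must be paired land in $\mathrm{span}(\psi_r,\psi_s)^{\perp}$ and vanish; e.g.\ $\mathcal{Q}^0_{rs}\perp\mathcal{Q}_{sr}$ since $\Tr\!\big((|\psi_s^{\perp r}\rangle\langle\psi_r|)^{\dagger}|\chi\rangle\langle\psi_s|\big)=\langle\psi_s^{\perp r}|\chi\rangle\langle\psi_s|\psi_r\rangle$ and $\chi\perp\psi_r,\psi_s$, while in (c) $\mathcal{Q}_{rs}\perp\bar{\mathcal{Q}}_{sr}$ is immediate because one factor is a $\langle\psi_r|$ bra and the other a $|\psi_s\rangle$ ket. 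The one‑dimensional inclinations (a)(2),(b)(2),(c)(2) are single ratios of an inner product to two norms; for (a)(2) the generators are $\langle\psi_s|\psi_r\rangle|\psi_s^{\perp r}\rangle\langle\psi_r|$ and $\langle\psi_r|\psi_s\rangle|\psi_r^{\perp s}\rangle\langle\psi_s|$, giving $|\langle\psi_r|\psi_s\rangle|^{4}\cdot\tfrac d{d+1}$ over $\tfrac d{(d+1)^2}$, i.e.\ $\cos^{-1}\tfrac1{d+1}$, and the analogous mixed computation in (c)(2) gives $\cos^{-1}\tfrac d{d+1}$. For (a)(3),(b)(3) one checks that the HS‑orthogonal projection of $|\phi\rangle\langle\psi_r|\in\mathcal{Q}_{rs}$ onto $\mathcal{Q}_{sr}=\{|\chi\rangle\langle\psi_s|:\chi\in V\}$ equals $\langle\psi_r|\psi_s\rangle\,|\phi\rangle\langle\psi_s|$, whose norm ratio $|\langle\psi_r|\psi_s\rangle|=\tfrac1{\sqrt{d+1}}$ is independent of $\phi$ — exactly uniform inclination at $\cos^{-1}\tfrac1{\sqrt{d+1}}$. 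Part~(b) needs no separate work, being the conjugate of part~(a) ($Q^{\mathrm T}_r=\overline{Q_r}$ intertwines the two pictures).

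\textbf{The main obstacle} is not in the geometry, which is forced once the dictionary is fixed, but in making that dictionary airtight: establishing $Q_r\|s\dr=\|(I-\Pi_r)\Pi_s\Pi_r\dr$ and the resulting identification $\mathcal{Q}_r\leftrightarrow\bar r\otimes r$, and keeping track of the normalization (the $\tfrac{d+1}{d}$) together with the tracelessness of every subspace, so that orthogonality in $\mathcal{H}_{d^2}$ and Hilbert–Schmidt orthogonality genuinely coincide on them. As an internal check the formulas must reproduce the traces of Section~\ref{sec:AdjRepFurther}: writing $h_{rs}$ for a unit generator of $\mathcal{Q}^0_{rs}$, part~(a) forces $\Tr(Q_rQ_s)=|\langle h_{rs}|h_{sr}\rangle|^2+\Tr(P_{\mathcal{Q}_{rs}}P_{\mathcal{Q}_{sr}})=\tfrac1{(d+1)^2}+\tfrac{d-2}{d+1}=\tfrac{d^2-d-1}{(d+1)^2}$ for $r\neq s$, and similarly $\Tr(Q_rQ^{\mathrm T}_s)=\tfrac{d^2}{(d+1)^2}$. (A brute‑force alternative is to prove $Q_rT_sQ_r=\tfrac d{(d+1)^2}Q_r$ and $Q_rT^{\mathrm T}_sQ_r=\tfrac{d^2}{d+1}Q_r\|s\dr\dl s\|Q_r$ for $r\neq s$ by evaluating $T_rT_sT_r$ and $T_rT^{\mathrm T}_sT_r$ with two applications of the $2$‑design identity~(\ref{eq:2designProp}) and reading off the spectral data; this works but is far more laborious, and the operator picture explains why precisely these matrices appear.)
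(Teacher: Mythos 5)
Your proposal is correct, but it takes a genuinely different route from the paper's proof. The paper works entirely at the level of the $d^2\times d^2$ matrices: it introduces the unit vectors $\| f^{\vpu{*}}_{rs}\dr = i\sqrt{d+1}\,Q^{\vpu{T}}_r\|s\dr$ and $\|f^{*}_{rs}\dr$, proves a list of identities for them (its Lemma~\ref{lem:fVecIds}), and then establishes the triple products $Q^{\vpu{T}}_rQ^{\vpu{T}}_sQ^{\vpu{T}}_r=\tfrac{1}{d+1}Q^{\vpu{T}}_r-\tfrac{d}{(d+1)^2}\|f^{\vpu{*}}_{rs}\dr\dl f^{\vpu{*}}_{rs}\|$ and $Q^{\vpu{T}}_rQ^{\mathrm{T}}_sQ^{\vpu{T}}_r=\tfrac{d^2}{(d+1)^2}\|f^{\vpu{*}}_{rs}\dr\dl f^{\vpu{*}}_{rs}\|$ by a fairly heavy computation of $T_rT_sT_r$ and $T^{\vpu{T}}_rT^{\mathrm{T}}_sT^{\vpu{T}}_r$ using two applications of the $2$-design identity~(\ref{eq:2designProp}) — precisely the ``brute-force alternative'' you mention in your last paragraph — before invoking Lemma~\ref{lem:UniformIncline}. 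You instead push everything through the coefficient-vector dictionary $\|A\dr\leftrightarrow A=\sum_t a_t\Pi_t$, and I have checked the load-bearing claims: $\dl A\|B\dr=\tfrac{d+1}{d}\Tr(A^{\dagger}B)$ on traceless operators follows from $K^2_{rs}=\tfrac{d}{d+1}\delta_{rs}+\tfrac{1}{d+1}$; the key identity $Q_r\|s\dr=\|(I-\Pi_r)\Pi_s\Pi_r\dr$ follows from Eq.~(\ref{eq:Qexplicit}) together with Eq.~(\ref{eq:Aexpansion}) and $\Tr\bigl((I-\Pi_r)\Pi_s\Pi_r\bigr)=0$; and the resulting identification $\mathcal{Q}_r\leftrightarrow\{|\phi\rangle\langle\psi_r|:\phi\perp\psi_r\}$ is forced because the $G_{tr}(I-\Pi_r)|\psi_t\rangle$ span $(I-\Pi_r)\mathcal{H}_d$. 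From there your rank-one computations reproduce every clause, and your consistency check against $\Tr(Q_rQ_s)$ comes out right. What your approach buys is conceptual transparency — it makes it obvious \emph{why} the distinguished one-dimensional subspaces exist and why the angles are $\tfrac1{d+1}$, $\tfrac1{\sqrt{d+1}}$ and $\tfrac{d}{d+1}$ (they are just powers of $|G_{rs}|$), and it disposes of part (b) by conjugation rather than by repeating the argument; what the paper's computation buys is that it stays entirely inside the adjoint-representation picture that the rest of the paper is built on, and it produces the explicit $f$-vector identities that are reused in Theorem~\ref{thm:Rgeometry} and the ``Further Identities'' subsection. If you write this up, the one step to make fully explicit is the verification of $Q_r\|s\dr=\|(I-\Pi_r)\Pi_s\Pi_r\dr$ (a three-line computation from $T_{rts}=\Tr(\Pi_t\Pi_s\Pi_r)$, $\Pi_r\Pi_s\Pi_r=K^2_{rs}\Pi_r$ and $\dl t\|e_r\dr=\sqrt{\tfrac{d+1}{2d}}K^2_{rt}$), since the entire dictionary rests on it.
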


The relations between these subspaces are, perhaps, easier to assimilate if presented pictorially.  In the following  diagrams the line joining each pair of subspaces is labelled with the cosine of the angle between them.  In particular  a $0$ on the line joining two subspaces indicates that they are orthogonal.
\begin{center}
\begin{picture}(355,110)
\put(20,95){$\mathcal{Q}^0_{rs}$}
\put(115,95){$\mathcal{Q}_{rs}$}
\put(20,10){$\mathcal{Q}^0_{sr}$}
\put(115,10){$\mathcal{Q}_{sr}$}
\put(40,98){\line(1,0){70}}
\put(72,100){$0$}
\put(40,12){\line(1,0){70}}
\put(72,3){$0$}
\put(23.5,25){\line(0,1){62}}
\put(6,55){$\frac{1}{d+1}$}
\put(119,25){\line(0,1){62}}
\put(121,55){$\frac{1}{\sqrt{d+1}}$}
\put(40,25){\line(1,1){65}}
\put(88,65){0}
\put(40,90){\line(1,-1){65}}
\put(50,65){0}
\put(200,95){$\bar{\mathcal{Q}}^0_{rs}$}
\put(295,95){$\bar{\mathcal{Q}}_{rs}$}
\put(200,10){$\bar{\mathcal{Q}}^0_{sr}$}
\put(295,10){$\bar{\mathcal{Q}}_{sr}$}
\put(220,98){\line(1,0){70}}
\put(252,100){$0$}
\put(220,12){\line(1,0){70}}
\put(252,3){$0$}
\put(203.5,25){\line(0,1){62}}
\put(186,55){$\frac{1}{d+1}$}
\put(298,25){\line(0,1){62}}
\put(301,55){$\frac{1}{\sqrt{d+1}}$}
\put(220,25){\line(1,1){65}}
\put(268,65){0}
\put(220,90){\line(1,-1){65}}
\put(230,65){0}
\end{picture}
\end{center}

\begin{center}
\begin{picture}(155,110)
\put(20,95){$\mathcal{Q}^0_{rs}$}
\put(115,95){$\mathcal{Q}_{rs}$}
\put(20,10){$\bar{\mathcal{Q}}^0_{sr}$}
\put(115,10){$\bar{\mathcal{Q}}_{sr}$}
\put(40,98){\line(1,0){70}}
\put(72,100){$0$}
\put(40,12){\line(1,0){70}}
\put(72,3){$0$}
\put(23.5,25){\line(0,1){62}}
\put(6,55){$\frac{d}{d+1}$}
\put(119,25){\line(0,1){62}}
\put(121,55){$0$}
\put(40,25){\line(1,1){65}}
\put(88,65){0}
\put(40,90){\line(1,-1){65}}
\put(50,65){0}
\end{picture}
\end{center}

We will prove this theorem below.  Before doing so, however, let us state the other main result of this section.
Let  $\mathcal{R}_r$ be the subspace onto which the $\bar{R}_{r}$ project.  Since $\bar{R}_{r}$ is a real matrix we regard $\mathcal{R}_r$ as a subspace of $\mathbb{R}^{d^2}$.  We have
\begin{theorem}
\label{thm:Rgeometry}
For each pair of distinct indices $r,s$ the subspace $\mathcal{R}_r$ has the decomposition
\be
\mathcal{R}_r = \mathcal{R}^0_{rs} \oplus \mathcal{R}^1_{rs} \oplus \mathcal{R}^{\vpu{0}}_{rs}
\ee
where $\mathcal{R}^0_{rs}$, $\mathcal{R}^1_{rs}$, $\mathcal{R}_{rs}$ are pairwise orthogonal and
\begin{align}
\dim(\mathcal{R}^0_{rs}) & = 1 & \dim(\mathcal{R}^1_{rs}) & = 1 & \dim(\mathcal{R}^{\vpu{0}}_{rs}) & = 2d -4
\end{align}
We have
\begin{enumerate}
\item $\mathcal{R}^0_{rs} = \mathcal{R}^0_{sr}$.
\item
$\mathcal{R}^1_{rs} \perp  \mathcal{R}^{\vpu{1}}_{sr}$ and $\mathcal{R}^{\vpu{1}}_{rs} \perp \mathcal{R}^1_{sr}$.
\item $\mathcal{R}^1_{rs}$ and $\mathcal{R}^1_{sr}$ are inclined at angle $\cos^{-1} \bigl(\frac{d-1}{d+1}\bigr)$.
\item $\mathcal{R}^{\vpu{1}}_{rs}$ and $\mathcal{R}^{\vpu{1}}_{sr}$ are uniformly inclined at angle $\cos^{-1} \Bigl(\sqrt{\frac{1}{d+1}}\Bigr)$
\end{enumerate}
In particular, the subspaces $\bar{\mathcal{R}}_r$ and $\bar{\mathcal{R}}_s$ intersect in a line.
\end{theorem}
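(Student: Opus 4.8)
The plan is to deduce the final ``in particular'' clause directly from the three-part orthogonal decomposition $\mathcal{R}_r = \mathcal{R}^0_{rs}\oplus\mathcal{R}^1_{rs}\oplus\mathcal{R}_{rs}$ (and its counterpart with $r$ and $s$ interchanged) established in the body of Theorem~\ref{thm:Rgeometry}. By part~(1) the line $\mathcal{R}^0_{rs}=\mathcal{R}^0_{sr}$ lies inside both $\mathcal{R}_r$ and $\mathcal{R}_s$, so $\mathcal{R}^0_{rs}\subseteq\mathcal{R}_r\cap\mathcal{R}_s$. It therefore suffices to prove the reverse inclusion $\mathcal{R}_r\cap\mathcal{R}_s\subseteq\mathcal{R}^0_{rs}$, which then forces the intersection to be precisely the one-dimensional space $\mathcal{R}^0_{rs}$.

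To prove the reverse inclusion I would take an arbitrary $v\in\mathcal{R}_r\cap\mathcal{R}_s$ and expand it along the pairwise-orthogonal decomposition of $\mathcal{R}_r$, writing $v = v_0 + v_1 + v'$ with $v_0\in\mathcal{R}^0_{rs}$, $v_1\in\mathcal{R}^1_{rs}$, $v'\in\mathcal{R}_{rs}$. Since $\bar{R}_s = Q_s+Q^{\mathrm{T}}_s$ is a real orthogonal projector with range $\mathcal{R}_s$ and $v\in\mathcal{R}_s$, we have $\bar{R}_s v = v$. I would then compute $\bar{R}_s$ on each summand, using the cross-relations of the theorem to locate the pieces: $\bar{R}_s v_0 = v_0$ because $\mathcal{R}^0_{rs}=\mathcal{R}^0_{sr}\subseteq\mathcal{R}_s$; the $\mathcal{R}^0_{sr}$- and $\mathcal{R}_{sr}$-components of $v_1$ vanish (the first by pairwise orthogonality inside $\mathcal{R}_r$ together with part~(1), the second by part~(2)), so $\bar{R}_s v_1$ is the projection of $v_1$ onto the line $\mathcal{R}^1_{sr}$, with $\|\bar{R}_s v_1\|^2 = \frac{(d-1)^2}{(d+1)^2}\|v_1\|^2$ by part~(3); and the $\mathcal{R}^0_{sr}$- and $\mathcal{R}^1_{sr}$-components of $v'$ vanish (again by pairwise orthogonality and part~(2)), so $\bar{R}_s v'$ is the projection of $v'$ onto $\mathcal{R}_{sr}$, with $\|\bar{R}_s v'\|^2 = \frac{1}{d+1}\|v'\|^2$ by part~(4) and Lemma~\ref{lem:UniformIncline}.

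The identity $\bar{R}_s v = v$ then collapses to $\bar{R}_s v_1 + \bar{R}_s v' = v_1 + v'$. Taking squared norms of both sides --- the left-hand terms being orthogonal since $\mathcal{R}^1_{sr}\perp\mathcal{R}_{sr}$, the right-hand terms since $\mathcal{R}^1_{rs}\perp\mathcal{R}_{rs}$ --- yields $\frac{(d-1)^2}{(d+1)^2}\|v_1\|^2 + \frac{1}{d+1}\|v'\|^2 = \|v_1\|^2 + \|v'\|^2$, which rearranges to $\frac{4d}{(d+1)^2}\|v_1\|^2 + \frac{d}{d+1}\|v'\|^2 = 0$. Both coefficients are strictly positive, so $v_1 = v' = 0$ and hence $v = v_0\in\mathcal{R}^0_{rs}$. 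This gives the reverse inclusion and completes the argument.

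I expect the genuinely demanding part to be the main body of the theorem --- constructing the decomposition $\mathcal{R}_r=\mathcal{R}^0_{rs}\oplus\mathcal{R}^1_{rs}\oplus\mathcal{R}_{rs}$ and verifying the angle relations~(1)--(4) --- which would be carried out by writing $\bar{R}_r = J^2_r = Q_r+Q^{\mathrm{T}}_r$, expanding the relevant operator products (such as $\bar{R}_r\bar{R}_s\bar{R}_r$) into sums of triple products $T_{rst}$, and simplifying with the $2$-design identity~(\ref{eq:2designProp}). Within the short deduction above the one step needing care is that $\mathcal{R}^1_{rs}$ is \emph{not} orthogonal to the whole of $\mathcal{R}_s$ --- its projection onto $\mathcal{R}^1_{sr}$ is nonzero, by part~(3) --- so one cannot simply discard the $v_1$ term, and the conclusion comes instead from the strict positivity of the two coefficients in the rearranged norm identity.
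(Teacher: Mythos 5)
Your derivation of the final ``in particular'' clause from parts (1)--(4) is correct, and it is in fact a detail the paper leaves implicit: writing $v=v_0+v_1+v'\in\mathcal{R}_r\cap\mathcal{R}_s$, applying $\bar{R}_s$, and using the angle data to get $\tfrac{4d}{(d+1)^2}\|v_1\|^2+\tfrac{d}{d+1}\|v'\|^2=0$ cleanly forces $\mathcal{R}_r\cap\mathcal{R}_s=\mathcal{R}^0_{rs}$. You are also right to flag that $\mathcal{R}^1_{rs}$ is not orthogonal to all of $\mathcal{R}_s$, so the positivity of the two coefficients is doing real work.

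The genuine gap is that everything you deduce is conditional on the decomposition and on properties (1)--(4), and your plan for establishing those is a one-sentence sketch that omits the essential construction. The subspaces $\mathcal{R}^0_{rs}$, $\mathcal{R}^1_{rs}$, $\mathcal{R}_{rs}$ are not given in the theorem statement; the proof must exhibit them. The paper does this by introducing the real vectors $\|\gv_{rs}\dr=\tfrac{1}{\sqrt2}(\|f^{*}_{rs}\dr+\|f^{\vpu{*}}_{rs}\dr)$ and $\|\bar{\gv}_{rs}\dr=\tfrac{i}{\sqrt2}(\|f^{*}_{rs}\dr-\|f^{\vpu{*}}_{rs}\dr)$, built from the vectors $\|f^{\vpu{*}}_{rs}\dr=i\sqrt{d+1}\,Q^{\vpu{T}}_r\|s\dr$ of the proof of Theorem~\ref{thm:Qgeometry}, verifying they are orthonormal eigenvectors of $\bar{R}_r$, and setting $R_{rs}=\bar{R}_r-\|\gv_{rs}\dr\dl\gv_{rs}\|-\|\bar{\gv}_{rs}\dr\dl\bar{\gv}_{rs}\|$. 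The crucial fact behind part (1) --- the shared line --- is the antisymmetry $\|\gv_{rs}\dr=-\|\gv_{sr}\dr$, which follows from the identity $\|f_{rs}\dr+\|f^{*}_{sr}\dr=i\sqrt{2/d}\,(\|e_s\dr-\|e_r\dr)$ of Lemma~\ref{lem:fVecIds}; parts (2)--(4) then come from the overlaps $\dl\bar{\gv}_{rs}\|\bar{\gv}_{sr}\dr$ and from $R_{rs}R_{sr}R_{rs}=\tfrac{1}{d+1}R_{rs}$, the latter reducing via $R_{rs}=Q_{rs}+Q^{\mathrm{T}}_{rs}$ and $Q^{\vpu{T}}_{rs}Q^{\mathrm{T}}_{sr}=0$ to the computations already done for Theorem~\ref{thm:Qgeometry}. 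Your proposed route of ``expanding $\bar{R}_r\bar{R}_s\bar{R}_r$ into triple products and using the $2$-design identity'' could in principle recover the principal angles, but it does not by itself produce the distinguished vectors or the identification $\mathcal{R}^0_{rs}=\mathcal{R}^0_{sr}$, so as it stands the core of the theorem remains unproved.
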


In diagrammatic form the relations between these subspaces are
\begin{center}
\begin{picture}(355,230)
\put(20,115){$\mathcal{R}^0_{rs}=\mathcal{R}^0_{sr}$}
\put(120,155){$\mathcal{R}^1_{rs}$}
\put(215,155){$\mathcal{R}^{\vpu{1}}_{rs}$}
\put(120,70){$\mathcal{R}^1_{sr}$}
\put(215,70){$\mathcal{R}^{\vpu{1}}_{sr}$}
\put(140,158){\line(1,0){70}}
\put(172,160){$0$}
\put(140,72){\line(1,0){70}}
\put(172,63){$0$}
\put(123.5,85){\line(0,1){62}}
\put(106,115){$\frac{d-1}{d+1}$}
\put(218,85){\line(0,1){62}}
\put(221,115){$\sqrt{\frac{1}{d+1}}$}
\put(138,83){\line(1,1){65}}
\put(188,125){$0$}
\put(140,150){\line(1,-1){65}}
\put(150,125){$0$}
\put(60,130){\line(2,1){50}}
\put(80,145){$0$}
\put(60,105){\line(2,-1){50}}
\put(80,83){$0$}
\qbezier(45,130)(150,240)(215,167)
\put(150,200){$0$}
\qbezier(45,100)(150,-10)(215,64)
\put(150,25){$0$}
\end{picture}
\end{center}
where, as before, each line is labelled with the cosine of the angle between the two subspaces it connects.

\subsection*{Proof of Theorem~\ref{thm:Qgeometry}}
Let $\| 1\dr , \dots , \| d^2\dr$ be the standard basis for $\mathcal{H}_{d^2}$, as defined by Eq.~(\ref{eq:hd2StandardBasis}).  For each pair of distinct indices $r,s$ define
\begin{align}
\| f^{\vpu{*}}_{rs} \dr &= i \sqrt{d+1} Q^{\vpu{T}}_r \|s\dr
\\
\| f^{*}_{rs} \dr & = - i \sqrt{d+1} Q^{\mathrm{T}}_r\| s\dr
\end{align}
The significance of these vectors is that $\| f^{\vpu{*}}_{rs} \dr \dl f^{\vpu{*}}_{rs} \|$ (respectively $\|f^{*}_{rs}\dr \dl f^{*}_{rs} \|$) will turn out to be the projector onto the $1$-dimensional subspace $\mathcal{Q}^0_{rs}$ (respectively $\bar{\mathcal{Q}}^0_{rs}$). 

Note  that the fact that $Q_r$ is Hermitian means
\be
Q^{\mathrm{T}}_r = Q^{*}_r
\ee
(where $Q^{*}_r$ is the matrix whose elements are the complex conjugates of the corresponding elements of $Q^{\vpu{*}}_r$).  
Consequently
\be
\dl t \| f^{*}_{rs} \dr = \Bigl( \dl t \| f^{\vpu{*}}_{rs}\dr\Bigr)^{*}
\label{eq:fstarIsConjOff}
\ee 
for all $r,s,t$. 
 
It is easily seen that $\| f^{\vpu{*}}_{rs} \dr$, $\| f^{*}_{rs} \dr$ are normalized.  In fact, it follows from Eqs.~(\ref{eq:erVecDef}) and~(\ref{eq:Qdef}) that
\begin{align}
\dl f_{rs}\| f_{rs} \dr & = 
(d+1) \dl s \| Q_r \| s \dr 
\nonumber
\\
& = \frac{(d+1)^2}{d} T_{rss} - 2 (d+1)\dl s \| e_r \dr \dl e_r \| s\dr 
\nonumber
\\
& = \frac{(d+1)^2}{d} \bigl(K^2_{rs} - K^4_{rs}\bigr)
\nonumber
\\
& = 1
\label{eq:frsNormDeriv}
\end{align}
for all $r\neq s$.  In view of Eq.~(\ref{eq:fstarIsConjOff}) we then have
\be
\dl f^{*}_{rs} \| f^{*}_{rs} \dr 
=
\Bigl(\dl f^{\vpu{*}}_{rs} \| f^{\vpu{*}}_{rs} \dr
\Bigr)^{*}
= 1
\label{eq:fStarrsNormDeriv}
\ee
for all $r\neq s$.
The fact that $Q_r Q^{\mathrm{T}}_r=0$ means we also have
\be
\dl f^{\vpu{*}}_{rs} \| f^{*}_{rs} \dr = 0 
\ee
for all $r\neq s$.

Note that, although we required that $r\neq s$ in the definitions of $\| f^{\vpu{*}}_{rs}\dr$, $\| f^{*}_{rs}\dr$, the definitions continue to make sense when $r=s$.  However, the vectors are then zero (as can be seen by setting $r=s$ in Eq.~(\ref{eq:Qexplicit})).

The vectors $\|f^{\vpu{*}}_{rs}\dr$, $\| f^{*}_{rs}\dr$ satisfy a number of identities, which it will be convenient to collect in a lemma:
\begin{lemma}
\label{lem:fVecIds}
For all $r\neq s$
\begin{align}
\| f_{rs}\dr & = -\| f^{*}_{sr}\dr + i \sqrt{\frac{2}{d}} \Bigl( \| e_s\dr - \|e_r\dr
\Bigr)
\label{eq:frsTermsfStsr}
\\
\| f^{*}_{rs} \dr & = - \|f_{sr}\dr - i \sqrt{\frac{2}{d}} \Bigl(\| e_s\dr - \| e_r \dr
\Bigr)
\label{eq:fStrsTermsfsr}
\end{align}
(where $\|e_r\dr$ is the vector defined by Eq.~(\ref{eq:erVecDef}))
\begin{align}
Q^{\vpu{T}}_r \| f^{\vpu{*}}_{rs} \dr &= \| f^{\vpu{*}}_{rs}\dr 
& 
Q^{\mathrm{T}}_r \| f^{*}_{rs} \dr &= \| f^{*}_{rs} \dr
\label{eq:QfrsIdA}
\\
Q^{\mathrm{T}}_r \| f^{\vpu{*}}_{rs} \dr & = 0 
& 
Q^{\vpu{T}}_r \| f^{*}_{rs} \dr & = 0
\label{eq:QfrsIdB}
\\
Q^{\vpu{T}}_s \| f^{\vpu{*}}_{rs} \dr & = - \frac{1}{d+1} \| f^{\vpu{*}}_{sr}\dr
&
Q^{\mathrm{T}}_s \| f^{*}_{rs} \dr & = - \frac{1}{d+1} \| f^{*}_{sr} \dr
\label{eq:QfrsIdC}
\\
Q^{\mathrm{T}}_s \| f^{\vpu{*}}_{rs} \dr & = - \frac{d}{d+1} \| f^{*}_{sr}\dr
&
Q^{\vpu{T}}_s \| f^{*}_{rs}\dr & = - \frac{d}{d+1} \| f^{\vpu{*}}_{sr} \dr
\label{eq:QfrsIdD}
\end{align}
\begin{align}
\dl f^{\vpu{*}}_{rs} \| f^{\vpu{*}}_{sr} \dr 
& =
\dl f^{*}_{rs} \| f^{*}_{sr} \dr
=
-\frac{1}{d+1}
\\
\dl f^{\vpu{*}}_{rs} \| f^{*}_{sr} \dr
&=
\dl f^{*}_{rs} \| f^{\vpu{*}}_{sr} \dr
=
-\frac{d}{d+1}
\end{align}
\end{lemma}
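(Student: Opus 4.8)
The plan is to extract the whole lemma from identity~(\ref{eq:frsTermsfStsr}), which is the one step that needs a genuine idea; the rest will fall out by short manipulations using the facts established in Section~\ref{sec:QQTProp1} ($Q^2_r=Q_r$; $Q^{\vpu{T}}_rQ^{\mathrm{T}}_r=Q^{\mathrm{T}}_rQ^{\vpu{T}}_r=0$, the second equality by Hermitian conjugation; $Q^{\vpu{T}}_r\|e_r\dr=Q^{\mathrm{T}}_r\|e_r\dr=0$), Theorem~\ref{thm:vVecIdentities} ($Q^{\vpu{T}}_r\|\vv\dr=Q^{\mathrm{T}}_r\|\vv\dr=0$), and the Hermiticity of $Q_r$. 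The Hermiticity is worth isolating up front: it gives $Q^{\mathrm{T}}_r=Q^*_r$, so by Eq.~(\ref{eq:fstarIsConjOff}) taking the componentwise complex conjugate of any identity among the $\|f^{\vpu{*}}_{rs}\dr$ yields the matching identity among the $\|f^*_{rs}\dr$, interchanging $f\leftrightarrow f^*$ throughout and conjugating scalar coefficients (the $\|e_r\dr$ being real). Consequently~(\ref{eq:fStrsTermsfsr}), the second columns of~(\ref{eq:QfrsIdA})--(\ref{eq:QfrsIdD}), and the second halves of the two inner-product identities are free consequences of their counterparts, so I only need to treat the first column/half of each.

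First I would establish two elementary facts. From $K^2_{rs}=(d\delta_{rs}+1)/(d+1)$ and the definitions~(\ref{eq:erVecDef}),~(\ref{eq:vZeroDef}) one has $\|s\dr=\sqrt{\tfrac{2(d+1)}{d}}\,\|e_s\dr-\|\vv\dr$ (equivalently $\|e_s\dr=\sqrt{\tfrac{d}{2(d+1)}}\,(\|s\dr+\|\vv\dr)$) and $\dl e_r\|s\dr=\tfrac{1}{\sqrt{2d(d+1)}}$ for $r\ne s$. Since $Q^{\vpu{T}}_r$ and $Q^{\mathrm{T}}_r$ annihilate $\|\vv\dr$, the first fact gives $Q^{\vpu{T}}_r\|s\dr=\sqrt{\tfrac{2(d+1)}{d}}\,Q^{\vpu{T}}_r\|e_s\dr$ and its transpose; combined with $Q_r=\tfrac{d+1}{d}T_r-2\|e_r\dr\dl e_r\|$ (Eq.~(\ref{eq:Qdef})) and the value of $\dl e_r\|s\dr$, this yields for $r\ne s$
\be
Q^{\vpu{T}}_r\|s\dr=\tfrac{d+1}{d}\,T^{\vpu{T}}_r\|s\dr-\sqrt{\tfrac{2}{d(d+1)}}\,\|e_r\dr ,
\ee
together with the same formula with $T^{\vpu{T}}_r$, $Q^{\vpu{T}}_r$ replaced by their transposes. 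From the defining relations $\|f^{\vpu{*}}_{sr}\dr=i\sqrt{d+1}\,Q^{\vpu{T}}_s\|r\dr$ and $\|f^*_{sr}\dr=-i\sqrt{d+1}\,Q^{\mathrm{T}}_s\|r\dr$ and the second form of the vector identity (again $Q_s\|\vv\dr=0$), I would likewise record $Q^{\vpu{T}}_s\|e_r\dr=-\tfrac{i\sqrt{d}}{\sqrt{2}\,(d+1)}\|f^{\vpu{*}}_{sr}\dr$ and $Q^{\mathrm{T}}_s\|e_r\dr=\tfrac{i\sqrt{d}}{\sqrt{2}\,(d+1)}\|f^*_{sr}\dr$ for $r\ne s$.

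Now comes~(\ref{eq:frsTermsfStsr}). The single observation needed is that cyclic symmetry of the triple products, $T_{rts}=T_{srt}$, says exactly $(T^{\vpu{T}}_r\|s\dr)_t=(T_r)_{ts}=T_{rts}=T_{srt}=(T_s)_{rt}=(T^{\mathrm{T}}_s\|r\dr)_t$, i.e.\ $T^{\vpu{T}}_r\|s\dr=T^{\mathrm{T}}_s\|r\dr$. Subtracting the displayed formula above (the transposed one evaluated at the pair $s,r$), the $T$-terms cancel and $Q^{\vpu{T}}_r\|s\dr-Q^{\mathrm{T}}_s\|r\dr=\sqrt{\tfrac{2}{d(d+1)}}\,(\|e_s\dr-\|e_r\dr)$; multiplying by $i\sqrt{d+1}$ gives precisely~(\ref{eq:frsTermsfStsr}), and~(\ref{eq:fStrsTermsfsr}) is its conjugate. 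Identities~(\ref{eq:QfrsIdA}) and~(\ref{eq:QfrsIdB}) are then immediate from idempotency and orthogonality: $Q^{\vpu{T}}_r\|f^{\vpu{*}}_{rs}\dr=i\sqrt{d+1}\,Q^2_r\|s\dr=\|f^{\vpu{*}}_{rs}\dr$ and $Q^{\mathrm{T}}_r\|f^{\vpu{*}}_{rs}\dr=i\sqrt{d+1}\,Q^{\mathrm{T}}_rQ^{\vpu{T}}_r\|s\dr=0$. For~(\ref{eq:QfrsIdC}) and~(\ref{eq:QfrsIdD}) I would apply $Q^{\vpu{T}}_s$ and $Q^{\mathrm{T}}_s$ to~(\ref{eq:frsTermsfStsr}): using $Q^{\vpu{T}}_s\|f^*_{sr}\dr=0$ (from~(\ref{eq:QfrsIdB}) with indices swapped), $Q^{\vpu{T}}_s\|e_s\dr=0$, and the expression for $Q^{\vpu{T}}_s\|e_r\dr$ above gives $Q^{\vpu{T}}_s\|f^{\vpu{*}}_{rs}\dr=-\tfrac{1}{d+1}\|f^{\vpu{*}}_{sr}\dr$; using instead $Q^{\mathrm{T}}_s\|f^*_{sr}\dr=\|f^*_{sr}\dr$ (from~(\ref{eq:QfrsIdA}) with indices swapped), $Q^{\mathrm{T}}_s\|e_s\dr=0$, and the expression for $Q^{\mathrm{T}}_s\|e_r\dr$ gives $Q^{\mathrm{T}}_s\|f^{\vpu{*}}_{rs}\dr=-\|f^*_{sr}\dr+\tfrac{1}{d+1}\|f^*_{sr}\dr=-\tfrac{d}{d+1}\|f^*_{sr}\dr$. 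Finally, the two inner-product identities follow from~(\ref{eq:QfrsIdC}),~(\ref{eq:QfrsIdD}), the Hermiticity of $Q_s$ and $Q^{\mathrm{T}}_s$, the relations $\|f^{\vpu{*}}_{sr}\dr=Q^{\vpu{T}}_s\|f^{\vpu{*}}_{sr}\dr$ and $\|f^*_{sr}\dr=Q^{\mathrm{T}}_s\|f^*_{sr}\dr$ (Eq.~(\ref{eq:QfrsIdA})), and the normalizations~(\ref{eq:frsNormDeriv}),~(\ref{eq:fStarrsNormDeriv}): for instance $\dl f^{\vpu{*}}_{rs}\|f^{\vpu{*}}_{sr}\dr=\dl f^{\vpu{*}}_{rs}\|Q^{\vpu{T}}_s\|f^{\vpu{*}}_{sr}\dr=-\tfrac{1}{d+1}\dl f^{\vpu{*}}_{sr}\|f^{\vpu{*}}_{sr}\dr=-\tfrac{1}{d+1}$, and similarly $\dl f^{\vpu{*}}_{rs}\|f^*_{sr}\dr=-\tfrac{d}{d+1}$ using~(\ref{eq:QfrsIdD}).

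The hard part is spotting the identity $T^{\vpu{T}}_r\|s\dr=T^{\mathrm{T}}_s\|r\dr$ --- the fact that cyclic symmetry of the triple products by itself makes the $T$-contributions drop out of~(\ref{eq:frsTermsfStsr}). Once that is noticed the lemma is pure bookkeeping, driven entirely by $Q^2_r=Q_r$, $Q^{\vpu{T}}_rQ^{\mathrm{T}}_r=0$, $Q^{\vpu{T}}_r\|\vv\dr=Q^{\vpu{T}}_r\|e_r\dr=0$ (and their transposes), and the Hermiticity of $Q_r$.
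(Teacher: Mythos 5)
Your proposal is correct and follows essentially the same route as the paper: the key cancellation behind Eq.~(\ref{eq:frsTermsfStsr}) is the cyclic symmetry $T_{rts}=T_{srt}$ (your $T^{\vpu{T}}_r\|s\dr=T^{\mathrm{T}}_s\|r\dr$ is exactly the identity the paper invokes in component form), and the remaining identities are obtained just as in the paper from $Q^2_r=Q_r$, $Q^{\vpu{T}}_rQ^{\mathrm{T}}_r=0$, $Q_r\|e_r\dr=Q_r\|\vv\dr=0$, the decomposition $\|e_r\dr=\sqrt{\tfrac{d}{2(d+1)}}(\|r\dr+\|\vv\dr)$, and componentwise complex conjugation to pass between the $\|f^{\vpu{*}}_{rs}\dr$ and $\|f^{*}_{rs}\dr$ identities. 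All coefficients check out, so this is the paper's argument in a slightly more systematic packaging.
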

\begin{proof}
It follows from Eqs.~(\ref{eq:erVecDef}) and~(\ref{eq:Qdef}) that
\begin{align}
\dl t \| f^{\vpu{*}}_{rs} \dr+ \dl t \| f^{*}_{sr}\dr
&=i \sqrt{d+1} \bigl( Q_{rts} - Q_{srt} \bigr)
\nonumber
\\
& = i \sqrt{d+1} \left(
\frac{d+1}{d} \bigl(T_{rts}-T_{srt}\bigr) 
\right.
\nonumber
\\
& \hspace{1 in} \left. - 2\bigl(\dl t \| e_r\dr \dl e_r \| s\dr
-
\dl r\| e_s\dr \dl e_s \| t\dr
\vphantom{\frac{d}{d+1}}
\right)
\nonumber
\\
& = i \sqrt{\frac{2}{d}} \bigl( \dl t \| e_s \dr - \dl t \| e_r\dr\bigr)
\end{align}
where we used the fact that $T_{rts}=T_{srt}$ in the third step, and the fact that  $\dl t \| e_s\dr$ is real in the last.  This establishes Eq.~(\ref{eq:frsTermsfStsr}).  Eq.~(\ref{eq:fStrsTermsfsr}) is obtained by taking complex conjugates on both sides, and using the fact that the vectors $\| e_s \dr$ are real.  

Eqs.~(\ref{eq:QfrsIdA}) and~(\ref{eq:QfrsIdB}) are immediate consequences of the definitions, and the fact that $Q^{\vpu{T}}_{r}Q^{\mathrm{T}}_r=0$.  Turning to the proof of Eqs.~(\ref{eq:QfrsIdC}) and~(\ref{eq:QfrsIdD}), it follows from Eqs.~(\ref{eq:erEvecOfQr}) and~(\ref{eq:Qdef}) that
\be
Q_s \| e_s \dr = 0
\ee
Using this and the fact that $Q_s \| f^{*}_{sr}\dr = 0$ in Eq.~(\ref{eq:frsTermsfStsr}) we find
\be
Q^{\vpu{*}}_s \| f^{\vpu{*}}_{rs} \dr = - i \sqrt{\frac{2}{d}} Q^{\vpu{*}}_s \| e_r \dr
\ee
Since
\be
\| e_r \dr =\sqrt{\frac{d}{2(d+1)}} \Bigl( \| r\dr + \| \vv \dr\Bigr)
\label{eq:erTermsv0}
\ee
and taking account of the fact that $Q_s\| \vv\dr=0$ (see Eq.~(\ref{eq:vVecAnnihalation})) we deduce
\be
Q^{\vpu{*}}_s \| f^{\vpu{*}}_{rs} \dr = - i \sqrt{\frac{1}{d+1}} Q^{\vpu{*}}_s\| r\dr = 
-\frac{1}{d+1} \| f^{\vpu{*}}_{sr}\dr 
\ee
Taking complex conjugates on both sides of this equation we deduce the second identity in Eq.~(\ref{eq:QfrsIdC}).

In the same way, acting on both sides of Eq.~(\ref{eq:frsTermsfStsr}) with $Q^{\mathrm{T}}_s$ we find
\begin{align}
Q^{\mathrm{T}}_s \| f^{\vpu{*}}_{rs}\dr
& = 
- \| f^{*}_{sr}\dr -i \sqrt{\frac{2}{d}} Q^{\mathrm{T}}_s \| e_r \dr
\nonumber
\\
& = - \| f^{*}_{sr}\dr - i \sqrt{\frac{1}{d+1}} Q^{\mathrm{T}}_s \| r\dr
\nonumber
\\
& = - \frac{d}{d+1} \| f^{*}_{sr}\dr 
\end{align} 
Taking complex conjugates on both sides of this equation we deduce the second identity in Eq.~(\ref{eq:QfrsIdD}).

Turning to the last group of identities we have
\be
\dl f_{rs} \| f_{sr}\dr = \dl f_{rs}\| Q_r \| f_{sr} \dr = - \frac{1}{d+1} \dl f_{rs} \| f_{rs}\dr = - \frac{1}{d+1}
\ee
and
\be
\dl f^{\vpu{*}}_{rs} \| f^{*}_{sr} \dr = 
\dl f^{\vpu{*}}_{rs} \| Q^{\vpu{*}}_r \| f^{*}_{sr}\dr
=-\frac{d}{d+1} \dl f^{\vpu{*}}_{rs} \| f^{\vpu{*}}_{rs} \dr 
= - \frac{d}{d+1}
\ee
The other two identities are obtained by taking complex conjugates on both sides of the two just derived.
\end{proof}
This lemma provides a substantial part of what we need to prove the theorem.  The remaining part is provided by
\begin{lemma}
\label{lem:QQTids}
For all $r\neq s$
\begin{align}
Q^{\vpu{T}}_r Q{\vpu{T}}_s Q{\vpu{T}}_r &= \frac{1}{d+1} Q^{\vpu{T}}_{r} - \frac{d}{(d+1)^2} \| f^{\vpu{*}}_{rs}\dr \dl f^{\vpu{*}}_{rs} \|
\\
Q^{\vpu{T}}_r Q^{\mathrm{T}}_s Q^{\vpu{T}}_r & = \frac{d^2}{(d+1)^2} \| f^{\vpu{*}}_{rs} \dr \dl f^{\vpu{*}}_{rs} \|
\end{align}
\end{lemma}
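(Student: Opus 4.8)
The plan is to prove both identities by passing to the defining representation, where the matrices $Q_r$, $Q_s$, $Q_s^{\mathrm{T}}$ become transparently simple operators on $\gl(d,\mathbb{C})$. The first step is a reduction: both sides of each identity are unchanged by conjugation by $Q_r$ — the left-hand sides because $Q_r$ is a Hermitian projector, and the right-hand sides because $\tfrac1{d+1}Q_r$ manifestly is and because $Q_r\|f_{rs}\dr=\|f_{rs}\dr$ by Lemma~\ref{lem:fVecIds} (Eq.~(\ref{eq:QfrsIdA})), so that $\|f_{rs}\dr\dl f_{rs}\| = Q_r\|f_{rs}\dr\dl f_{rs}\|Q_r$. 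Hence it suffices to verify each identity as an equality of operators on the range $\mathcal{Q}_r$ of $Q_r$, i.e.\ to compute $Q_rQ_s\|v\dr$ and $Q_rQ_s^{\mathrm{T}}\|v\dr$ for an arbitrary $\|v\dr\in\mathcal{Q}_r$ (using $Q_r\|v\dr=\|v\dr$) and match them against the stated right-hand sides.

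The key device is the identification of $\mathcal{H}_{d^2}$ with $\gl(d,\mathbb{C})$ via $\|s\dr\leftrightarrow\Pi_s$ (a linear isomorphism, not an isometry). Since $J_r$ is the adjoint representative of $\Pi_r$ and $\Pi_r$ is idempotent, the elementary identities $\ad_{\Pi_r}X = \Pi_r X - X\Pi_r$ and $\ad^2_{\Pi_r}X = \Pi_r X + X\Pi_r - 2\Pi_r X\Pi_r$, together with $Q_r=\tfrac12 J_r(J_r+I)$, show that $Q_r$ acts as $X\mapsto \Pi_r X\Pi_r^\perp$ and $Q_r^{\mathrm{T}}$ as $X\mapsto \Pi_r^\perp X\Pi_r$, where $\Pi_r^\perp = I-\Pi_r$ (the assignment is interchanged under the other sign convention for $J_r$, which is immaterial below). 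A vector $\|v\dr\in\mathcal{Q}_r$ then corresponds to $V$ with $V=\Pi_r V\Pi_r^\perp$, and composing the operators gives
\be
Q_rQ_sV = \Pi_r\Pi_s\Pi_r\;V\;\Pi_r^\perp\Pi_s^\perp\Pi_r^\perp,\qquad Q_rQ_s^{\mathrm{T}}V = \Pi_r\Pi_s^\perp\Pi_r\;V\;\Pi_r^\perp\Pi_s\Pi_r^\perp .
\ee
Because $r\neq s$ and $\Pi_r,\Pi_s$ are rank-one with $\Tr(\Pi_r\Pi_s)=\tfrac1{d+1}$, we have $\Pi_r\Pi_s\Pi_r=\tfrac1{d+1}\Pi_r$, hence $\Pi_r\Pi_s^\perp\Pi_r=\tfrac d{d+1}\Pi_r$; and $\Pi_r^\perp\Pi_s\Pi_r^\perp$ is a rank-one positive operator of trace $\tfrac d{d+1}$, so $\Pi_r^\perp\Pi_s\Pi_r^\perp = \tfrac d{d+1}|\chi\rangle\langle\chi|$ with $|\chi\rangle=\sqrt{\tfrac{d+1}{d}}\,\Pi_r^\perp|\psi_s\rangle$ a unit vector, whence $\Pi_r^\perp\Pi_s^\perp\Pi_r^\perp = \Pi_r^\perp - \tfrac d{d+1}|\chi\rangle\langle\chi|$. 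Substituting and using $\Pi_r V\Pi_r^\perp=V$ yields
\be
Q_rQ_sV = \tfrac1{d+1}V - \tfrac{d}{(d+1)^2}\langle\psi_r|V|\chi\rangle\,|\psi_r\rangle\langle\chi|,\qquad Q_rQ_s^{\mathrm{T}}V = \tfrac{d^2}{(d+1)^2}\langle\psi_r|V|\chi\rangle\,|\psi_r\rangle\langle\chi| .
\ee

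It remains to identify the rank-one operator $\Phi\colon V\mapsto \langle\psi_r|V|\chi\rangle\,|\psi_r\rangle\langle\chi|$ on $\mathcal{Q}_r$ with the matrix $\|f_{rs}\dr\dl f_{rs}\|$. Put $E=|\psi_r\rangle\langle\chi|\in\mathcal{Q}_r$; since $\langle\chi|\psi_r\rangle=0$ we have $\Tr E=0$, so by Eq.~(\ref{eq:Aexpansion}) its SIC-coordinate vector $\|E\dr$ has components $(E)_t=\tfrac{d+1}{d}\langle\chi|\psi_t\rangle\langle\psi_t|\psi_r\rangle$; and the functional $V\mapsto\langle\psi_r|V|\chi\rangle=\Tr(E^\dagger V)$ is represented by the vector $\|g\dr$ with $g_t=\langle\chi|\psi_t\rangle\langle\psi_t|\psi_r\rangle=\tfrac{d}{d+1}(E)_t$, so $\Phi$ is the matrix $\|E\dr\dl g\|=\tfrac{d}{d+1}\|E\dr\dl E\|$. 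On the other hand $\|f_{rs}\dr = i\sqrt{d+1}\,Q_r\|s\dr$ corresponds to $i\sqrt{d+1}\,\Pi_r\Pi_s\Pi_r^\perp = i\sqrt d\,\langle\psi_r|\psi_s\rangle\,E$, so that $\|f_{rs}\dr\dl f_{rs}\| = d|\langle\psi_r|\psi_s\rangle|^2\|E\dr\dl E\| = \tfrac{d}{d+1}\|E\dr\dl E\|$; combining, $\Phi=\|f_{rs}\dr\dl f_{rs}\|$ as matrices on $\mathcal{Q}_r$. Substituting this back into the last display gives exactly the two asserted identities on $\mathcal{Q}_r$, hence everywhere by the first paragraph. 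The step I expect to be the real obstacle is precisely this final bookkeeping: $\|f_{rs}\dr\dl f_{rs}\|$ is an outer product with respect to the standard inner product on $\mathcal{H}_{d^2}$, whereas $\Phi$ is most naturally expressed through the Hilbert–Schmidt pairing on $\gl(d,\mathbb{C})$, and since the identification $\|s\dr\leftrightarrow\Pi_s$ is not an isometry one must carefully propagate the scalar factors (here a $\tfrac{d}{d+1}$ twice, which cancel) through the non-orthonormal SIC basis; the rest of the argument is routine.
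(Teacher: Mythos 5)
Your proof is correct, and it takes a genuinely different route from the paper's. The paper proves the lemma by brute-force matrix-element computation: it writes $Q_r = \tfrac{d+1}{d}T_r - 2\|e_r\dr\dl e_r\|$, reduces the claim to evaluating $Q_rT_sQ_r$ and $Q^{\vpu{T}}_rT^{\mathrm{T}}_sQ^{\vpu{T}}_r$, and then grinds out quantities such as $\dl a\|T_rT_sT_r\|b\dr$ using two applications of the $2$-design identity, Eq.~(\ref{eq:2designProp}). You instead transfer the whole problem to the defining representation: after the (correct) reduction to the range of $Q_r$, the identification of $Q_r$ with $X\mapsto\Pi_rX(I-\Pi_r)$ via $Q_r=\tfrac12 J_r(J_r+I)$ collapses both identities to the one-line facts $\Pi_r\Pi_s\Pi_r=\tfrac{1}{d+1}\Pi_r$ and $(I-\Pi_r)\Pi_s(I-\Pi_r)=\tfrac{d}{d+1}|\chi\rangle\langle\chi|$, with the only residual work being the bookkeeping of scalar factors through the non-isometric identification $\|s\dr\leftrightarrow\Pi_s$ so as to recognize the rank-one remainder as $\|f^{\vpu{*}}_{rs}\dr\dl f^{\vpu{*}}_{rs}\|$ (your two factors of $\tfrac{d}{d+1}$ do match, since $\dl E\|E\dr=\tfrac{d+1}{d}$ is consistent with $\dl f_{rs}\|f_{rs}\dr=1$). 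Your handling of the sign ambiguity in the convention for $J_r$ is also sound: the opposite convention proves the complex-conjugate identities, which are equivalent because $Q^{\mathrm{T}}_r=Q^{*}_r$. What your approach buys is transparency --- it explains where the constants $\tfrac{1}{d+1}$, $\tfrac{d}{(d+1)^2}$ and $\tfrac{d^2}{(d+1)^2}$ come from, and it sidesteps any fresh appeal to the $2$-design property (which enters only indirectly, through the already-established spectral decomposition $J^{\vpu{T}}_r=Q^{\vpu{T}}_r-Q^{\mathrm{T}}_r$ with $Q^{\vpu{T}}_rQ^{\mathrm{T}}_r=0$ of Section~\ref{sec:QQTProp1} that underlies $Q_r=\tfrac12 J_r(J_r+I)$); the paper's computation, by contrast, is self-contained at the level of the triple products but conceals the mechanism.
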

\begin{proof}
It follows from Eq.~(\ref{eq:Qdef}) that
\begin{align}
Q_r Q_s Q_r & = \frac{d+1}{d}Q_r T_s Q_r - 2 Q_r \| e_s\dr \dl e_s \| Q_r 
\label{eq:QtripA}
\\
Q^{\vpu{T}}_r Q^{\mathrm{T}}_s Q^{\vpu{T}}_r & = \frac{d+1}{d} Q^{\vpu{T}}_r T^{\mathrm{T}}_s Q^{\vpu{T}}_r - 2 Q^{\vpu{T}}_r \| e^{\vpu{*}}_s\dr \dl e^{\vpu{*}}_s \| Q^{\vpu{T}}_r  
\label{eq:QtripB}
\end{align}
In view of Eqs.~(\ref{eq:erTermsv0}), (\ref{eq:vVecAnnihalation}) and the definition of $\| f_{rs}\dr$ we have
\be
Q_r \| e_s \dr  =\sqrt{\frac{d}{2(d+1)}} Q_r \|s \dr = - i \frac{\sqrt{d}}{\sqrt{2}(d+1)} \|f_{rs}\dr
\ee
Substituting this expression into Eqs.~(\ref{eq:QtripA}) and~(\ref{eq:QtripB}) we obtain 
\begin{align}
Q_r Q_s Q_r & = \frac{d+1}{d}Q_r T_s Q_r -\frac{d}{(d+1)^2} \| f_{rs}\dr \dl f_{rs} \|
\\
Q^{\vpu{T}}_r Q^{\mathrm{T}}_s Q^{\vpu{T}}_r & = \frac{d+1}{d} Q^{\vpu{T}}_r T^{\mathrm{T}}_s Q^{\vpu{T}}_r-\frac{d}{(d+1)^2} \| f_{rs}\dr \dl f_{rs} \|
\end{align}
The problem therefore reduces to showing
\begin{align}
Q_r T_s Q_r & = \frac{d}{(d+1)^2} Q_r 
\label{eq:QTQcalcA}
\\
Q^{\vpu{T}}_r T^{\mathrm{T}}_s Q^{\vpu{T}}_r & = \frac{d^2}{(d+1)^2} \| f^{\vpu{*}}_{rs}\dr \dl f^{\vpu{*}}_{rs}\| 
\label{eq:QTQcalcB}
\end{align}
Using Eq.~(\ref{eq:Qdef}) we find
\begin{align}
\dl a\|Q_r T_s Q_r\| b\dr & = \frac{(d+1)^2}{d^2} \dl a \| T_r T_s T_r\| b \dr
\nonumber
\\
&\hspace{0.25 in} - \frac{1}{2} \left(\frac{2(d+1)}{d}\right)^{\frac{3}{2}}\Bigl( K^2_{ra} \dl e_r \| T_s T_r \| b\dr+ K^2_{rb} \dl a \|  T_r T_s \| e_r \dr \Bigr)
\nonumber
\\
& \hspace{0.5 in}
  + \frac{2(d+1)}{d}K^2_{ra}K^2_{rb} \dl e_r \| T_s \| e_r  \dr 
\label{eq:QTQcalcC}
\\
\dl a\|Q^{\vpu{T}}_r T^{\mathrm{T}}_s Q^{\vpu{T}}_r\| b\dr & = \frac{(d+1)^2}{d^2} \dl a \| T^{\vpu{T}}_r T^{\mathrm{T}}_s T^{\vpu{T}}_r\| b \dr
\nonumber
\\
&\hspace{0.25 in} - \frac{1}{2} \left(\frac{2(d+1)}{d}\right)^{\frac{3}{2}}\Bigl( K^2_{ra} \dl e_r \| T^{\mathrm{T}}_s T^{\vpu{T}}_r \| b\dr+ K^2_{rb} \dl a \|  T^{\vpu{T}}_r T^{\mathrm{T}}_s \| e_r \dr \Bigr)
\nonumber
\\
& \hspace{0.5 in}
  + \frac{2(d+1)}{d}K^2_{ra}K^2_{rb} \dl e_r \| T^{\mathrm{T}}_s \| e_r  \dr 
\label{eq:QTQcalcD}
\end{align}
Using the definitions of $T_r$, $\|e_r\dr$ and Eq.~(\ref{eq:2designProp}) (the $2$-design property) we find, after some algebra,
\begin{align}
\dl a \| T_r T_s T_r \| b \dr 
&=
\frac{d^2}{(d+1)^2} \left(
K^2_{ra} T^{\vpu{2}}_{rsb} + K^2_{rb} T^{\vpu{2}}_{ras} + K^2_{rs} T^{\vpu{2}}_{rab} + K^2_{ra}K^2_{rb} 
\right)
\\
\dl e_r \| T_s T_r \| b\dr& = 2\left( \frac{d}{2(d+1)}\right)^{\frac{3}{2}}
\left(2K^2_{rs}K^2_{rb} + K^2_{rb} + T^{\vpu{2}}_{rsb} 
\right)
\\
\dl a\| T_r T_s \| e_r \dr & = 2\left( \frac{d}{2(d+1)}\right)^{\frac{3}{2}}
\left(2 K^2_{rs} K^2_{ra} + K^2_{ra} + T^{\vpu{2}}_{ras}
\right)
\\
\dl e_r \| T_s\| e_r \dr & = \frac{d}{2(d+1)} \left( 3 K^2_{rs} + 1
\right)
\end{align}
and
\begin{align}
\dl a \| T^{\vpu{T}}_r T^{\mathrm{T}}_s T^{\vpu{T}}_r \| b\dr
&=
\frac{d^2}{(d+1)^2} \Bigl(G_{ra}G_{as}G_{sb}G_{br}
\nonumber
\\
&\hspace{1.5 in} +K^2_{ra} T_{rsb} +K^2_{rb} T_{ras} + K^2_{ra}K^2_{rb} \Bigr)
\nonumber
\\
&=
\frac{d^2}{(d+1)^2} \Bigl( (d+1) T^{\vpu{2}}_{ras} T^{\vpu{2}}_{rsb} 
\nonumber
\\
& \hspace{1.5 in}  +
K^2_{ra} T^{\vpu{2}}_{rsb} + K^2_{rb}T^{\vpu{2}}_{ras}+K^2_{ra} K^2_{rb} 
\Bigr)
\label{eq:QTQcalcE}
\\
\dl e_r \|T^{\mathrm{T}}_s T^{\vpu{T}}_r  \| b\dr
&=
2\left(\frac{d}{2(d+1)}
\right)^{\frac{3}{2}}
\left( K^2_{rs} K^2_{rb} + K^2_{rb} + 2 T^{\vpu{2}}_{rsb}
\right)
\\
\dl a \| T^{\vpu{T}}_r T^{\mathrm{T}}_s \| e_r \dr 
& = 
2\left(\frac{d}{2(d+1)}
\right)^{\frac{3}{2}}
\left(K^2_{rs} K^2_{ra} + K^2_{ra} + 2 T^{\vpu{2}}_{ras}
\right)
\\
\dl e_r \| T^{\mathrm{T}}_s \| e_r \dr &= \frac{d}{2(d+1)} \left( 3 K^2_{rs}+1\right)
\end{align}
where in deriving  Eq.~(\ref{eq:QTQcalcE}) we used the fact that $G_{ra}G_{as}G_{sb}G_{br}$ $ =(d+1) T^{\vpu{2}}_{ras} T^{\vpu{2}}_{rsb}$ (in view of the fact that $r\neq s$). 
Substituting these expressions into Eqs.~(\ref{eq:QTQcalcC}) and~(\ref{eq:QTQcalcD}) we deduce Eqs.~(\ref{eq:QTQcalcA}) and~(\ref{eq:QTQcalcB}).
\end{proof}
Now define the rank $d-1$ projectors
\begin{align}
Q_{rs} & = Q_{r} - \| f_{rs}\dr \dl f_{rs} \|
\\
Q^{\mathrm{T}}_{rs} & = Q^{\mathrm{T}}_{r} - \| f^{*}_{rs} \dr \dl f^{*}_{rs} \|
\end{align}
and let $\mathcal{Q}^0_{rs}$, $\mathcal{Q}^{\vpu{0}}_{rs}$, $\bar{\mathcal{Q}}^0_{rs}$ and $\bar{\mathcal{Q}}^{\vpu{0}}_{rs}$ be, respectively, the subspaces onto which $\|f^{\vpu{*}}_{rs}\dr \dl f^{\vpu{*}}_{rs} \|$, $Q^{\vpu{*}}_{rs}$, $\| f^{*}_{rs}\dr \dl f^{*}_{rs} \|$ and $Q^{*}_{rs}$ project.  It is  immediate that we have the orthogonal decompositions
\begin{align}
\mathcal{Q}^{\vpu{0}}_r & = \mathcal{Q}^0_{rs} \oplus \mathcal{Q}^{\vpu{0}}_{rs}
\\
\bar{\mathcal{Q}}^{\vpu{0}}_r &= \bar{\mathcal{Q}}^0_{rs} \oplus \bar{\mathcal{Q}}^{\vpu{0}}_{rs}
\end{align} 
Using Lemma~\ref{lem:fVecIds}  we find
\be
Q_{sr} \| f_{rs} \dr = Q_{rs} \| f_{sr} \dr = 0
\ee
implying that $\mathcal{Q}^0_{rs} \perp \mathcal{Q}^{\vpu{0}}_{sr}$ and $\mathcal{Q}^{\vpu{0}}_{rs} \perp \mathcal{Q}^0_{sr}$, and  
\be
\bigl| \dl f_{rs} \| f_{sr} \dr \bigr| = \frac{1}{d+1}
\ee
implying that $\mathcal{Q}^0_{rs}$ and $\mathcal{Q}^0_{sr}$ are inclined at angle $\cos^{-1} \bigl( \frac{1}{d+1}\bigr)$.  Using  Lemma~\ref{lem:fVecIds} together with Lemma~\ref{lem:QQTids} we find
\begin{align}
Q_{rs} Q_{sr} Q_{rs} 
& = Q_{rs}Q_s Q_{rs} 
\nonumber
\\
& = Q_r Q_s Q_r - \| f_{rs} \dr \dl f_{rs} \| Q_s Q_r - Q_r Q_s \| f_{rs} \dr \dl f_{rs} \| 
\nonumber
\\
& \hspace{1.5 in} +
\dl f_{rs} \| Q_s \| f_{rs} \dr \| f_{rs} \dr \dl f_{rs} \|
\nonumber
\\
& = \frac{1}{d+1} Q_r -\frac{1}{d+1} \| f_{rs} \dr \dl f_{rs}\|
\nonumber
\\
& = \frac{1}{d+1} Q_{rs}
\end{align}
which in view of Lemma~\ref{lem:UniformIncline} implies that $\mathcal{Q}_{rs}$ and $\mathcal{Q}_{sr}$ are uniformly inclined at angle $\cos^{-1} \bigl( \frac{1}{\sqrt{d+1}}\bigr)$.  This proves part (a) of the theorem.  Parts (b) and (c) are proved similarly.

\subsection*{Proof of Theorem~\ref{thm:Rgeometry}}
Define 
\begin{align}
\| \gv^{\vpu{*}}_{rs} \dr & = \frac{1}{\sqrt{2}} \bigl(\| f^{*}_{rs} \dr + \| f^{\vpu{*}}_{rs} \dr \bigr)
\\
\| \bar{\gv}^{\vpu{*}}_{rs} \dr & = \frac{i}{\sqrt{2}} \bigl( \| f^{*}_{rs}\dr - \| f^{\vpu{*}}_{rs} \dr \bigr)
\end{align}
By construction the components of  $\| \gv_{rs} \dr$, $ \| \bar{\gv}_{rs} \dr$ in the standard basis are real, so we can regard them as $\in \mathbb{R}^{d^2}$.  They are orthonormal:
\be
\dl \gv_{rs} \| \gv_{rs} \dr = \dl \bar{\gv}_{rs} \| \bar{\gv}_{rs} \dr = 1 \qquad \text{and} \qquad \dl \gv_{rs} \| \bar{\gv}_{rs} \dr = 0 
\ee
It is also readily verified, using Lemma~\ref{lem:fVecIds}, that
\begin{align}
\bar{R}_r \| \gv_{rs} \dr &= \| \gv_{rs} \dr
\\
\bar{R}_r \| \bar{\gv}_{rs} \dr & = \| \bar{\gv}_{rs} \dr
\end{align}
So 
\be
R_{rs} = \bar{R}_r - \| \gv_{rs} \dr \dl \gv_{rs} \| - \| \bar{\gv}_{rs} \dr \dl \bar{\gv}_{rs}\|
\ee
is a rank $2d-4$ projector.  If we define $\mathcal{R}^0_{rs}$, $\mathcal{R}^1_{rs}$ and $\mathcal{R}^{\vpu{0}}_{rs}$ to be, respectively, the subspaces onto which $\| \gv_{rs} \dr \dl \gv_{rs} \|$, $\| \bar{\gv}_{rs} \dr \dl \bar{\gv}_{rs} \|$ and $R_{rs}$ project we have the orthogonal decomposition
\be
\mathcal{R}^{\vpu{0}}_{r} = \mathcal{R}^{0}_{rs} \oplus \mathcal{R}^{1}_{rs} \oplus \mathcal{R}^{\vpu{0}}_{rs}
\ee
It follows from Eqs.~(\ref{eq:frsTermsfStsr}) and~(\ref{eq:fStrsTermsfsr}) that
\be
\| \gv_{rs} \dr = - \| \gv_{sr} \dr
\ee
 implying that $\mathcal{R}^0_{rs} = \mathcal{R}^0_{sr}$ for all $r\neq s$.  It is also easily verified, using Lemma~\ref{lem:fVecIds}, that
\be
\bigl| \dl \bar{\gv}_{rs} \| \bar{\gv}_{sr} \dr \bigr| = \frac{d-1}{d+1}
\ee
from which it follows that $\mathcal{R}^1_{rs}$ and $\mathcal{R}^1_{sr}$ are inclined at angle $\cos^{-1} \bigl( \frac{d-1}{d+1} \bigr)$.  We next observe that
\be
R^{\vpu{T}}_{rs} = Q^{\vpu{T}}_{rs} + Q^{\mathrm{T}}_{rs} 
\ee
Using Lemma~\ref{lem:fVecIds} once again we deduce 
\be
R_{rs} \|\bar{\gv}_{sr}\dr  = R_{sr} \| \bar{\gv}_{rs} \dr = 0 
\ee
from which it follows that $\mathcal{R}^1_{rs} \perp \mathcal{R}^{\vpu{1}}_{sr}$ and $\mathcal{R}^{\vpu{1}}_{rs} \perp \mathcal{R}^{1}_{sr}$.  Finally, we know from Theorem~\ref{thm:Qgeometry} that $Q^{\mathrm{T}}_{rs}Q^{\vpu{T}}_{sr}= Q^{\vpu{T}}_{rs} Q^{\mathrm{T}}_{sr} = 0$.  Consequently
\begin{align}
R^{\vpu{T}}_{rs} R^{\vpu{T}}_{sr} R^{\vpu{T}}_{rs} &= Q^{\vpu{T}}_{rs} Q^{\vpu{T}}_{sr} Q^{\vpu{T}}_{rs} + Q^{\mathrm{T}}_{rs} Q^{\mathrm{T}}_{sr} Q^{\mathrm{T}}_{rs}
\nonumber
\\
& =\frac{d}{d+1} Q^{\vpu{T}}_{rs} +\frac{d}{d+1} Q^{\mathrm{T}}_{rs} 
\nonumber
\\
& = \frac{1}{d+1} R^{\vpu{T}}_{rs}
\end{align}
In view of   Lemma~\ref{lem:UniformIncline} it follows that $\mathcal{R}_{rs}$ and $\mathcal{R}_{sr}$ are uniformly inclined at angle $\cos^{-1} \bigl( \frac{1}{\sqrt{d+1}}\bigr)$.

\subsection*{Further Identities}
We conclude this section with another set of identities involving the vectors $\| f^{\vpu{*}}_{rs}\dr$, $\|f^{*}_{rs}\dr$, $\|\gv^{\vpu{*}}_{rs}\dr$ and $\|\bar{\gv}^{\vpu{*}}_{rs}\dr$.

Define
\be
\| \ebv_r \dr = \sqrt{\frac{2d}{d-1} } \| \ev_r\dr - \sqrt{\frac{d+1}{d-1}} \| \vv \dr 
\ee
where $\| \vv \dr$ is the vector defined by Eq.~(\ref{eq:vZeroDef}). It is readily verified that
\be
\dl \ebv_r \| \ebv_r \dr  = 0
\qquad \text{and} \qquad
\dl \ebv_r \| \vv \dr  = 0
\ee
So $\|\ebv_r\dr$, $\| \vv \dr$ is an orthonormal basis for the $2$-dimensional subspace spanned by $\|\ev_r\dr$, $\|\vv\dr$.  Note that
\be
Q^{\vpu{T}}_r \| \ebv_r \dr = Q^{\mathrm{T}}_r \| \ebv_r \dr = \bar{R}^{\vpu{T}}_r\| \ebv_r \dr = 0
\ee
We then have
\begin{theorem}
For all $r$
\begin{align}
\frac{1}{d+1} \sum_{\substack{s=1\\ (s\neq r)}}^{d^2} \| f_{rs} \dr \dl f_{rs} \| &= Q_r
\label{eq:fvecFinalIdsA}
\\
\frac{1}{d+1} \sum_{\substack{s=1\\ (s\neq r)}}^{d^2} \| f^{*}_{rs} \dr \dl f^{*}_{rs} \| &= Q^{\mathrm{T}}_r
\label{eq:fvecFinalIdsB}
\\
\frac{2}{d+1} \sum_{\substack{s=1\\ (s\neq r)}}^{d^2} \| \gv_{rs} \dr \dl \gv_{rs} \| &= \bar{R}_r
\label{eq:fvecFinalIdsC}
\\
\frac{2}{d+1} \sum_{\substack{s=1\\ (s\neq r)}}^{d^2} \| \bar{\gv}_{rs} \dr \dl \bar{\gv}_{rs} \| &= \bar{R}_r
\label{eq:fvecFinalIdsD}
\end{align}
and
\begin{align}
\frac{1}{d-1} \sum_{\substack{s=1\\ (s\neq r)}}^{d^2} \| f^{\vpu{*}}_{sr} \dr \dl f^{\vpu{*}}_{sr} \| &= Q^{\mathrm{T}}_r + \| \ebv_r\dr \dl \ebv_r \| +\frac{1}{d^2-1}\Bigl(I - \| \vv \dr \dl \vv \|
\Bigr)
\label{eq:fvecFinalIdsE}
\\
\frac{1}{d-1} \sum_{\substack{s=1\\ (s\neq r)}}^{d^2} \| f^{*}_{sr} \dr \dl f^{*}_{sr} \| &= Q^{\vpu{*}}_r + \| \ebv^{\vpu{*}}_r\dr \dl \ebv^{\vpu{*}}_r \| +\frac{1}{d^2-1}\Bigl(I - \| \vv \dr \dl \vv \|
\Bigr)
\label{eq:fvecFinalIdsF}
\\
\frac{2}{d+1}\sum_{\substack{s=1\\ (s\neq r)}}^{d^2} \| \gv_{sr}\dr \dl \gv_{sr} \| & = \bar{R}_{r}
\label{eq:fvecFinalIdsG}
\\
\frac{2}{d-3}\sum_{\substack{s=1\\ (s\neq r)}}^{d^2} \| \bar{\gv}_{sr}\dr \dl \bar{\gv}_{sr} \| & =
 \bar{R}_r + \frac{4(d-1)}{d-3} \| \ebv_r \dr \dl \ebv_r \| +\frac{4}{(d+1)(d-3)}\Bigl(I - \| \vv \dr \dl \vv \| \Bigr)
\label{eq:fvecFinalIdsH}
\end{align}
\end{theorem}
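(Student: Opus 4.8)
The plan is to reduce each of the eight identities to the inner operators $Q^{\vpu{T}}_r$, $Q^{\mathrm{T}}_r$ and $\bar{R}_r = Q^{\vpu{T}}_r + Q^{\mathrm{T}}_r = J^2_r$, and then to collapse the sums on $s$ using the completeness relation $\sum_{s=1}^{d^2}\|s\dr\dl s\| = I$ for the standard basis of Eq.~(\ref{eq:hd2StandardBasis}), together with $Q^2_r = Q_r$, $\bar{R}^2_r = \bar{R}_r$, the $Q$-$Q^{\mathrm{T}}$ property $Q^{\vpu{T}}_r Q^{\mathrm{T}}_r = 0$, and the annihilation identities $Q^{\vpu{T}}_r\|\vv\dr = Q^{\mathrm{T}}_r\|\vv\dr = \bar{R}_r\|\vv\dr = 0$ of Eq.~(\ref{eq:vVecAnnihalation}). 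Since each of the vectors $\|f^{\vpu{*}}_{rs}\dr$, $\|f^{*}_{rs}\dr$, $\|\gv_{rs}\dr$, $\|\bar{\gv}_{rs}\dr$ vanishes when $s=r$ (by Eq.~(\ref{eq:Qexplicit})), I shall freely replace every $\sum_{s\neq r}$ below by an unrestricted $\sum_s$. Two immediate consequences of the definitions will be used throughout: substituting $\|f^{\vpu{*}}_{rs}\dr = i\sqrt{d+1}\,Q^{\vpu{T}}_r\|s\dr$ and $\|f^{*}_{rs}\dr = -i\sqrt{d+1}\,Q^{\mathrm{T}}_r\|s\dr$ into the definitions of $\|\gv_{rs}\dr$ and $\|\bar{\gv}_{rs}\dr$, and using $J_r = Q^{\vpu{T}}_r - Q^{\mathrm{T}}_r$ together with Eq.~(\ref{eq:RbarDef}), yields $\|\gv_{rs}\dr = \frac{i\sqrt{d+1}}{\sqrt{2}}\,J^{\vpu{T}}_r\|s\dr$ and $\|\bar{\gv}_{rs}\dr = \sqrt{\frac{d+1}{2}}\,\bar{R}^{\vpu{T}}_r\|s\dr$.

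For the first four identities, Eqs.~(\ref{eq:fvecFinalIdsA})--(\ref{eq:fvecFinalIdsD}), I would observe that, using $Q^{\mathrm{T}}_r = Q^{*}_r$ to form the bras and the Hermiticity of $J_r$ and $\bar{R}_r$, one has $\|f^{\vpu{*}}_{rs}\dr\dl f^{\vpu{*}}_{rs}\| = (d+1)\,Q^{\vpu{T}}_r\|s\dr\dl s\|Q^{\vpu{T}}_r$, $\|\gv_{rs}\dr\dl\gv_{rs}\| = \frac{d+1}{2}\,J^{\vpu{T}}_r\|s\dr\dl s\|J^{\vpu{T}}_r$ and $\|\bar{\gv}_{rs}\dr\dl\bar{\gv}_{rs}\| = \frac{d+1}{2}\,\bar{R}^{\vpu{T}}_r\|s\dr\dl s\|\bar{R}^{\vpu{T}}_r$. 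Summing on $s$ and collapsing $\sum_s\|s\dr\dl s\| = I$ turns these into $(d+1)Q^2_r = (d+1)Q_r$, $\frac{d+1}{2}J^2_r = \frac{d+1}{2}\bar{R}_r$ and $\frac{d+1}{2}\bar{R}^2_r = \frac{d+1}{2}\bar{R}_r$, which are exactly Eqs.~(\ref{eq:fvecFinalIdsA}), (\ref{eq:fvecFinalIdsC}) and~(\ref{eq:fvecFinalIdsD}); Eq.~(\ref{eq:fvecFinalIdsB}) is the entrywise complex conjugate of Eq.~(\ref{eq:fvecFinalIdsA}), since $\bigl(\|f^{\vpu{*}}_{rs}\dr\dl f^{\vpu{*}}_{rs}\|\bigr)^{*} = \|f^{*}_{rs}\dr\dl f^{*}_{rs}\|$ and $Q^{*}_r = Q^{\mathrm{T}}_r$.

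For the second group, Eqs.~(\ref{eq:fvecFinalIdsE})--(\ref{eq:fvecFinalIdsH}), the summation index now occupies the \emph{first} slot, so I would first use Lemma~\ref{lem:fVecIds} to move it back. Combining Eqs.~(\ref{eq:frsTermsfStsr})--(\ref{eq:fStrsTermsfsr}) with Eq.~(\ref{eq:erTermsv0})---which gives $\|\ev_s\dr - \|\ev_r\dr = \sqrt{\frac{d}{2(d+1)}}\bigl(\|s\dr - \|r\dr\bigr)$ and hence reduces the $\|\ev\dr$-terms to $\|w_{rs}\dr := \frac{i}{\sqrt{d+1}}\bigl(\|s\dr - \|r\dr\bigr)$---yields $\|f_{sr}\dr = -\|f^{*}_{rs}\dr - \|w_{rs}\dr$ and $\|f^{*}_{sr}\dr = -\|f_{rs}\dr + \|w_{rs}\dr$. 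From these, $\|\gv_{sr}\dr = -\|\gv_{rs}\dr$ (the $\|w_{rs}\dr$-terms cancel), so that Eq.~(\ref{eq:fvecFinalIdsG}) follows at once from Eq.~(\ref{eq:fvecFinalIdsC}), while $\|\bar{\gv}_{sr}\dr = \|\bar{\gv}_{rs}\dr + \sqrt{2}\,\|u_{rs}\dr$ with $\|u_{rs}\dr := i\|w_{rs}\dr = \frac{1}{\sqrt{d+1}}\bigl(\|r\dr - \|s\dr\bigr)$ real. Expanding $\|f_{sr}\dr\dl f_{sr}\|$ and $\|\bar{\gv}_{sr}\dr\dl\bar{\gv}_{sr}\|$ and summing on $s$ then produces three kinds of term: the diagonal pieces $\sum_s\|f^{*}_{rs}\dr\dl f^{*}_{rs}\| = (d+1)Q^{\mathrm{T}}_r$ and $\sum_s\|\bar{\gv}_{rs}\dr\dl\bar{\gv}_{rs}\| = \frac{d+1}{2}\bar{R}_r$ from the first group; the cross terms $\sum_s\|f^{*}_{rs}\dr\dl w_{rs}\|$, $\sum_s\|\bar{\gv}_{rs}\dr\dl u_{rs}\|$ and their conjugate transposes, which after collapsing $\sum_s\|s\dr\dl s\| = I$, using $\sum_s\|s\dr = d\|\vv\dr$, and applying $Q^{\mathrm{T}}_r\|\vv\dr = \bar{R}_r\|\vv\dr = 0$ equal $-Q^{\mathrm{T}}_r$ and $-\frac{1}{\sqrt{2}}\bar{R}_r$ respectively; and the quadratic pieces $\sum_s\|w_{rs}\dr\dl w_{rs}\| = \sum_s\|u_{rs}\dr\dl u_{rs}\| = \frac{1}{d+1}\bigl(d\|r\dr - \|\vv\dr\bigr)\bigl(d\dl r\| - \dl\vv\|\bigr) + \frac{1}{d+1}\bigl(I - \|\vv\dr\dl\vv\|\bigr)$. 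Collecting these gives $\sum_s\|f_{sr}\dr\dl f_{sr}\| = (d-1)Q^{\mathrm{T}}_r + \frac{1}{d+1}\bigl(d\|r\dr - \|\vv\dr\bigr)\bigl(d\dl r\| - \dl\vv\|\bigr) + \frac{1}{d+1}\bigl(I - \|\vv\dr\dl\vv\|\bigr)$, and the analogous expression for $\sum_s\|\bar{\gv}_{sr}\dr\dl\bar{\gv}_{sr}\|$ with $\frac{d-3}{2}\bar{R}_r$ in place of $(d-1)Q^{\mathrm{T}}_r$ and the remaining two terms doubled. The final step is the elementary identity $\frac{1}{d+1}\bigl(d\|r\dr - \|\vv\dr\bigr)\bigl(d\dl r\| - \dl\vv\|\bigr) = (d-1)\|\ebv_r\dr\dl\ebv_r\|$, which follows from $\|\ebv_r\dr = \frac{1}{\sqrt{d^2-1}}\bigl(d\|r\dr - \|\vv\dr\bigr)$---a one-line computation from the definition of $\|\ebv_r\dr$ and Eq.~(\ref{eq:erTermsv0}). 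Substituting this and dividing by $d-1$ (respectively by $d-3$) yields Eqs.~(\ref{eq:fvecFinalIdsE}) and~(\ref{eq:fvecFinalIdsH}); Eq.~(\ref{eq:fvecFinalIdsF}) is once more the entrywise complex conjugate of Eq.~(\ref{eq:fvecFinalIdsE}).

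The only genuinely delicate part---what I expect to be the main obstacle---is the bookkeeping in the second group: one must keep careful track of which portions of the cross terms survive (only those not annihilated by $Q^{\mathrm{T}}_r\|\vv\dr = 0$ or $\bar{R}_r\|\vv\dr = 0$) and then match the polynomial in $\|r\dr$, $\|\vv\dr$ that emerges against the combination $(d-1)\|\ebv_r\dr\dl\ebv_r\| + \frac{1}{d+1}\bigl(I - \|\vv\dr\dl\vv\|\bigr)$. Everything else is a routine application of Lemma~\ref{lem:fVecIds}, the spectral decompositions of Section~\ref{sec:QQTProp1}, and the completeness relation for the standard basis.
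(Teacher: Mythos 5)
Your proof is correct. For the first four identities your argument is essentially the paper's: write $\| f^{\vpu{*}}_{rs}\dr \dl f^{\vpu{*}}_{rs}\| = (d+1)\,Q^{\vpu{T}}_r\|s\dr\dl s\|Q^{\vpu{T}}_r$, collapse $\sum_s \|s\dr\dl s\| = I$ using $Q_r\|r\dr = 0$, and invoke idempotency; the only cosmetic difference is that you package $\|\gv_{rs}\dr$ and $\|\bar{\gv}_{rs}\dr$ directly as multiples of $J_r\|s\dr$ and $\bar{R}_r\|s\dr$, where the paper instead assembles the four $f$--$f^{*}$ cross sums and uses $Q^{\vpu{T}}_rQ^{\mathrm{T}}_r=0$. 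The genuine divergence is in the second group. The paper proves Eq.~(\ref{eq:fvecFinalIdsE}) by a direct and fairly heavy computation: it expands $\dl a\|f_{sr}\dr\dl f_{sr}\|b\dr$ in terms of triple products and evaluates four separate sums such as $\sum_s T_{sar}T_{srb}$ and $\sum_s K^2_{sa}K^2_{sr}T_{srb}$ via the $2$-design property. You instead use the reflection identities of Lemma~\ref{lem:fVecIds} to write $\|f_{sr}\dr = -\|f^{*}_{rs}\dr - \|w_{rs}\dr$ with $\|w_{rs}\dr$ an explicit combination of $\|r\dr$ and $\|s\dr$, reducing the first-index sums to the already-established second-index sums plus cross and quadratic terms that collapse by completeness and $Q^{\mathrm{T}}_r\|\vv\dr = \bar{R}_r\|\vv\dr = 0$. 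I checked the bookkeeping: the cross terms contribute $-2Q^{\mathrm{T}}_r$ (resp.\ $-2\bar{R}_r$), the quadratic piece equals $(d-1)\|\ebv_r\dr\dl\ebv_r\| + \tfrac{1}{d+1}\bigl(I-\|\vv\dr\dl\vv\|\bigr)$, and your simplification $\|\ebv_r\dr = \tfrac{1}{\sqrt{d^2-1}}\bigl(d\|r\dr - \|\vv\dr\bigr)$ does follow from the paper's definition together with Eq.~(\ref{eq:erTermsv0}). Your route is shorter and avoids any further appeal to the $2$-design property, at the cost of leaning entirely on Lemma~\ref{lem:fVecIds}; the paper's computation for Eq.~(\ref{eq:fvecFinalIdsE}) is more laborious but self-contained, and it then derives Eq.~(\ref{eq:fvecFinalIdsH}) from the earlier identities much as you do.
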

\begin{proof}
It follows from the definition of $\| f^{\vpu{*}}_{rs}\dr$ that
\begin{align}
\frac{1}{d+1}\sum_{\substack{s=1\\ (s\neq r)}}^{d^2} \| f_{rs} \dr \dl f_{rs} \| 
&=
\sum_{\substack{s=1\\ (s\neq r)}}^{d^2} Q_r \| s\dr \dl s \| Q_r
\nonumber
\\
& = Q_r \left(\sum_{s=1}^{d^2} \| s\dr \dl s \| \right) Q_r
\nonumber
\\
& = Q_r
\end{align}
where in the second step we used the fact that $Q_r \| r\dr=0$ (as can be seen by setting $r=s$ in  Eq.~(\ref{eq:Qexplicit})).  Eq.~(\ref{eq:fvecFinalIdsB}) is obtained by taking the complex conjugate on both sides.

We also have
\begin{align}
\frac{1}{d+1}\sum_{\substack{s=1\\ (s\neq r)}}^{d^2} \| f^{\vpu{*}}_{rs} \dr \dl f^{*}_{rs} \| 
&=
-\sum_{\substack{s=1\\ (s\neq r)}}^{d^2} Q^{\vpu{T}}_r \| s\dr \dl s \| Q^{\mathrm{T}}_r
\nonumber
\\
& = - Q^{\vpu{T}}_r \left(\sum_{s=1}^{d^2} \| s\dr \dl s \| \right)Q^{\mathrm{T}}_r
\nonumber
\\
& = - Q^{\vpu{T}}_r Q^{\mathrm{T}}_r
\nonumber
\\
& = 0
\end{align}
Taking the complex conjugate on both sides we find
\be
\frac{1}{d+1}\sum_{\substack{s=1\\ (s\neq r)}}^{d^2} \| f^{*}_{rs}  \dr \dl f^{\vpu{*}}_{rs}\| = 0
\ee
Consequently
\begin{align}
\frac{2}{d+1} \sum_{\substack{s=1\\ (s\neq r)}}^{d^2} \| \gv_{rs} \dr \dl \gv_{rs} \| &= 
\frac{1}{d+1} \sum_{\substack{s=1\\ (s\neq r)}}^{d^2} \Bigl(\| f_{rs} \dr \dl f_{rs} \|
+\| f^{*}_{rs} \dr \dl f^{*}_{rs} \|
\nonumber
\\
& \hspace{ 1 in }+ \| f^{\vpu{*}}_{rs} \dr \dl f^{*}_{rs} \| +\| f^{*}_{rs}  \dr \dl f^{\vpu{*}}_{rs}\|
\Bigr)
\nonumber
\\
& = \bar{R}_r
\end{align}
Eq.~(\ref{eq:fvecFinalIdsD}) is proved similarly.

To prove the second group of identities we have to work a little harder.  Using Eqs.~(\ref{eq:erVecDef}) and~(\ref{eq:Qdef}) we find
\begin{align}
\frac{1}{d-1} \sum_{\substack{s=1\\ (s\neq r)}}^{d^2} \dl a \| f_{sr} \dr \dl f_{sr} \| b \dr
& = \frac{d+1}{d-1}\sum_{s=1}^{d^2} \dl a \| Q_s\| r\dr \dl r \| Q_s \| b\dr
\nonumber
\\
& = \frac{(d+1)^3}{d^2(d-1)}\sum_{s=1}^{d^2} \Bigl(T_{sar}T_{srb}- K^2_{sa}K^2_{sr}T^{\vpu{2}}_{srb}
\nonumber
\\
& \hspace{1 in} -K^2_{sr}K^2_{sb}T^{\vpu{2}}_{sar}+K^2_{sa}K^4_{sr}K^2_{sb}
\Bigr)
\label{eq:fvecFinalIdInterA}
\end{align}
(where we used the fact that $Q_s \| s\dr = 0$ in the first step).  After some algebra we find
\begin{align}
\sum_{s=1}^{d^2} T_{sar}T_{srb} & = \frac{d}{d+1} \Biggr(
\left(\sqrt{\frac{d-1}{d+1}  }\dl a \| \ebv_r \dr+\frac{1}{d} 
\right)
\left(\sqrt{\frac{d-1}{d+1}  }\dl \ebv_r\| b \dr+\frac{1}{d} 
\right)
\nonumber
\\
& \hspace{2.5 in}
 + T^{\vpu{2}}_{rba}\Biggl)
\\
\sum_{s=1}^{d^2} K^2_{sa} K^2_{sr}T^{\vpu{2}}_{srb}&=
\frac{d}{d+1}
\Biggl(
\left(\sqrt{\frac{d-1}{d+1}}\dl a\| \ebv_r \dr +\frac{2d+1}{d(d+1)}
\right)
\left(\sqrt{\frac{d-1}{d+1}}\dl \ebv_r\| b\dr +\frac{1}{d} \right)
\nonumber
\\
& \hspace{2.5 in} +\frac{1}{d+1} T_{rba}
\Biggr)
\\
\sum_{s=1}^{d^2} K^2_{sr} K^2_{sb}T^{\vpu{2}}_{sar}&=
\frac{d}{d+1}
\Biggl(
\left(\sqrt{\frac{d-1}{d+1}}\dl a\| \ebv_r \dr +\frac{1}{d}
\right)
\left(\sqrt{\frac{d-1}{d+1}}\dl \ebv_r\| b\dr +\frac{2d+1}{d(d+1)} \right)
\nonumber
\\
& \hspace{2.5 in} +\frac{1}{d+1} T_{rba}
\Biggr)
\\
\sum_{s=1}^{d^2} K^2_{sa}K^4_{sr}K^2_{sb}
&=
\frac{d}{(d+1)} \Biggl(
\frac{d+2}{d+1} \left(
\sqrt{\frac{d-1}{d+1}}\dl a \| \ebv_r \dr + \frac{1}{d} 
\right)\left(
\sqrt{\frac{d-1}{d+1}}\dl \ebv_r \| b \dr + \frac{1}{d} 
\right)
\nonumber
\\
&  \hspace{1.65 in} + \frac{d}{(d+1)^3} \delta_{ab} + \frac{d+2}{(d+1)^3} \Biggr)
\end{align}
where we used Eq.~(\ref{eq:2designProp}) to derive the first  expression.  Substituting these expressions into Eq.~(\ref{eq:fvecFinalIdInterA}) and using 
\be
\dl a \| Q^{\mathrm{T}}_r \| b\dr = \frac{d+1}{d}\left( T_{rba} - 
\left(\sqrt{\frac{d-1}{d+1}}\dl a \| \ebv_r\dr + \frac{1}{d}  
\right) 
\left(\sqrt{\frac{d-1}{d+1}}\dl \ebv_r \| b\dr + \frac{1}{d}  
\right)
\right)
\ee
we deduce Eq.~(\ref{eq:fvecFinalIdsE}).  Taking  complex conjugates on both sides we obtain Eq.~(\ref{eq:fvecFinalIdsF}).

Eq.~(\ref{eq:fvecFinalIdsG}) is an immediate consequence of Eq.~(\ref{eq:fvecFinalIdsC}) and the fact that $\| \gv_{sr} \dr = -\| \gv_{rs} \dr$ for all $r,s$.

To prove Eq.~(\ref{eq:fvecFinalIdsH}) observe that it follows from Eqs.~(\ref{eq:fvecFinalIdsE})--(\ref{eq:fvecFinalIdsG}) that
\begin{align}
\sum_{\substack{s=1\\(s\neq r)}}^{d^2} \Bigl(\| f^{\vpu{*}}_{sr} \dr \dl f^{*}_{sr} \|
+\| f^{*}_{sr} \dr \dl f^{\vpu{*}}_{sr} \| \Bigr)
& = \sum_{\substack{s=1\\(s\neq r)}}^{d^2}\Bigr( 2 \| \gv_{sr} \dr \dl \gv_{sr} \|
-\| f^{\vpu{*}}_{sr} \dr \dl f^{\vpu{*}}_{sr} \| - \| f^{*}_{sr} \dr \dl f^{*}_{sr} \|\Bigr)
\nonumber
\\
&= 2 \left(\bar{R}_r -(d-1) \| \ebv_r \dr \dl \ebv_r \| \vpu{\frac{1}{d+1}} \right.
\nonumber
\\
& \hspace{1 in} \left. - \frac{1}{d+1}\Bigl(I - \| \vv \dr \dl \vv \| \Bigr) \right)
\end{align}
Hence
\begin{align}
\frac{2}{d-3} \sum_{\substack{s=1\\(s\neq r)}}^{d^2} \| \bar{\gv}_{sr}\dr \dl \bar{\gv}_{sr} \|
&=\frac{1}{d-3} \sum_{\substack{s=1\\(s\neq r)}}^{d^2} \Bigl(\| f^{\vpu{*}}_{sr} \dr \dl f^{\vpu{*}}_{sr} \| + \| f^{*}_{sr} \dr \dl f^{*}_{sr} \| 
\nonumber
\\
& \hspace{1.25 in} - \| f^{\vpu{*}}_{sr}\dr \dl f^{*}_{sr}\|
- \| f^{*}_{sr}\dr \dl f^{*}_{sr} \| \Bigr)
\nonumber
\\
& =\bar{R}_r + \frac{4(d-1)}{d-3}\|\ebv_r \dr \dl \ebv_r \| + \frac{4}{(d+1)(d-3)}\Bigl( I - \| \vv \dr \dl \vv \| \Bigr)
\end{align}

\end{proof}

\section{The \texorpdfstring{$\ggrm$-$\ggrm^{\mathrm{T}}$}{P-PT} Property}
\label{sec:GGstarProp}

In the preceding sections the $Q$-$Q^{\mathrm{T}}$ property has played a prominent role.  In this section we show that in the particular case of a  Weyl-Heisenberg covariant SIC-POVM, and with the appropriate choice of gauge, the Gram projector (defined in Eq.~(\ref{eq:SICGramProjectorDef})) has an analogous property, which we call the $\ggrm$-$\ggrm^{T}$ property.  Specifically one has
\be
\ggrm \ggrm^{\mathrm{T}} = \ggrm^{\mathrm{T}}\ggrm  = \| \hv \dr \dl \hv \|
\ee
where $\| \hv\dr$ is a normalized vector whose components in the standard basis are all real.  In odd dimensions the components of $\| h \dr $ in the standard basis can be simply expressed in terms of the Wigner function of the fiducial vector.   It could be said that the projectors $\ggrm$ and $\ggrm^{\mathrm{T}}$ are almost orthogonal (by contrast with the projectors $Q^{\vpu{T}}_r$ and $Q^{\mathrm{T}}_r$ which are completely orthogonal).  More precisely $\ggrm$ has the spectral decomposition
\be
\ggrm = \bar{\ggrm} + \| \hv \dr \dl \hv \|
\ee
where $\bar{\ggrm}$ is a rank $(d-1)$ projector with the property
\be
 \bar{\ggrm}^{\vpu{T}} \bar{\ggrm}^{\mathrm{T}}= 0
\ee
This means that the matrix
\be
\grmJ= \ggrm-\ggrm^{\mathrm{T}}
\ee
is a pure imaginary Hermitian matrix with the property that $\grmJ^2$ is a real rank $2d-2$ projector (\emph{c.f.} the discussion in  Section~\ref{sec:QQTpropGen}). 

Although we are mainly interested in the $\ggrm$-$\ggrm^{\mathrm{T}}$ property as it applies to SIC-POVMs, it should be noted that it actually holds for any Weyl-Heisenberg covariant POVM (with the appropriate choice of gauge).  So we will prove the above propositions for this more general case.  

Let us begin by fixing notation.  Let $|0\rangle, \dots, |d-1\rangle$ be an orthonormal basis for $d$-dimensional Hilbert space and let $X$ and $Z$ be the operators
whose action on the $|r\rangle$ is
\begin{align}
X | a\rangle & = |a+1\rangle
\\
Z |a\rangle & = \omega^{a} |a\rangle
\end{align}
where $\omega=e^{\frac{2  \pi i}{d}}$ and the addition of indices in the first equation is \emph{mod}  $d$.  We then define the Weyl-Heisenberg displacement operators by
(adopting the convention used in, for example, ref.~\cite{selfA})
\be
D_{\mathbf{p}} = \tau^{p_1p_2} X^{p_1} Z^{p_2}
\label{eq:whDisOpDef}
\ee
where $\mathbf{p}$ is the  vector $(p_1,p_2)$ ($p_1$, $p_2$ being integers) and $\tau= e^{\frac{(d+1)\pi i}{d}}$.  Generally speaking the decision to insert the phase $\tau^{p_1p_2}$ is a matter of convention, and many authors define it differently, or else omit altogether.  However, for the purposes of this section it is essential, as a different choice of phase at this stage would lead to a different gauge in the class of POVMs to be defined below, and the Gram projector would then typically not have the $\ggrm$-$\ggrm^{\mathrm{T}}$ property.    

Note that $\tau^2 = \tau^{d^2}= \omega$ in every dimension.  If the dimension is odd we can write $\tau = \omega^{\frac{d+1}{2}}$.  So $\tau$ is a $d^{\mathrm{th}}$ root of unity.  However, if the dimension is even $\tau^d = -1$. This  has the consequence that 
\be
D_{\mathbf{p}+d\mathbf{u}} = (-1)^{u_1 p_2 + u_2 p_1} D_{\mathbf{p}}
\ee
So in even dimension $\mathbf{p}=\mathbf{q} \text{ (mod $d$)}$ does not necessarily imply $D_{\mathbf{p}}=D_{\mathbf{q}}$ (although the  operators are, of course, equal if $\mathbf{p}=\mathbf{q} \text{ (mod $2d$)}$)

In every dimension  (even or odd) we have
\begin{align}
D^{\dagger}_{\mathbf{p}} &= D^{\vpu{\dagger}}_{-\mathbf{p}}
\label{eq:whPropertyA}
\\
\intertext{for all $\mathbf{p}$}
\left(D_{\mathbf{p}}\right)^{n} & = D_{n\mathbf{p}}
\label{eq:whPropertyB}
\\
\intertext{for all $\mathbf{p}$, $n$ and}
D_{\mathbf{p}} D_{\mathbf{q}} & = \tau^{\langle \mathbf{p},\mathbf{q}\rangle} D_{\mathbf{p}+\mathbf{q}}
\label{eq:whPropertyC}
\end{align}
for all $\mathbf{p},\mathbf{q}$.  In the last expression $\langle \mathbf{p},\mathbf{q} \rangle$ is the symplectic form
\be
\langle \mathbf{p},\mathbf{q} \rangle = p_2 q_1 - p_1 q_2
\ee

Now let $|\psi\rangle$ be any normalized vector (not necessarily a SIC-fiducial vector), and define
\be
|\psi_{\mathbf{p}}\rangle = D_{\mathbf{p}}|\psi\rangle
\ee
Let 
\be
L = \sum_{\mathbf{p}\in\mathbb{Z}_d^2} |\psi_\mathbf{p}\rangle \langle \psi_{\mathbf{p}}|
\ee
It is easily seen that
\be
\bigl[D_{\mathbf{p}} , L\bigr] = 0
\ee
for all $\mathbf{p}$.  

We now appeal to the fact that there is no non-trivial subspace of $\mathcal{H}_d$ which the displacement operators leave invariant.   To see this assume the contrary.  Then there would exist non-zero vectors $|\phi\rangle$, $|\chi\rangle$ such that
\be
\langle \phi | D_{\mathbf{p}} |\chi\rangle  = 0
\ee
for all $\mathbf{p}$.  Writing the left-hand side out in full this gives
\be
\sum_{a=0}^{d-1} \omega^{p_2 a} \langle \phi | a+ p_1 \rangle \langle a |\chi\rangle = 0 
\ee
for all $p_1, p_2$.  Taking the discrete Fourier transform with respect to $p_2$, we have
\be
\langle \phi | a + p_1 \rangle \langle a | \chi \rangle = 0
\ee
for all $a$, $p_1$, implying that either $|\phi\rangle = 0$ or $|\chi \rangle = 0$---contrary to assumption.
We can therefore use Schur's lemma~\cite{Lie4} to deduce  that 
\be
L = k I
\ee
for some constant $k$.  Taking the trace on both sides of this equation we infer that $k = d$.  We conclude that $\frac{1}{d} |\psi_{\mathbf{p}}\rangle \langle \psi_{\mathbf{p}}|$ is  a POVM.  We refer to POVMs of this general class as Weyl-Heisenberg covariant POVMs.  We refer to the vector $|\psi\rangle$ which generates the POVM as the fiducial vector (with no implication that it is necessarily a SIC-fiducial).

Now consider the Gram projector 
\begin{align}
\ggrm & = \sum_{\mathbf{p},\mathbf{q} \in \mathbb{Z}^2_d} P_{\mathbf{p},\mathbf{q}} \| \mathbf{p} \dr \dl \mathbf{q} \| 
\\
\intertext{where}
\ggrm_{\mathbf{p},\mathbf{q}}& = \frac{1}{d} \langle\psi_{\mathbf{p}}|\psi_{\mathbf{q}}\rangle 
\end{align}
and where we label the matrix elements of $\ggrm$ and the standard basis kets with  the vectors $\mathbf{p}$, $\mathbf{q}$ rather than with the single integer indices $r,s$ as in the rest of this paper.  We know from Theorem~\ref{thm:GramSICcondition} that $\ggrm$ is a rank $d$ projector.

 In view of Eqs.~(\ref{eq:whPropertyA}) and~(\ref{eq:whPropertyC}) we have
\begin{align}
 \dl \mathbf{p} \| P \|\mathbf{q} \dr & = \ggrm_{\mathbf{p},\mathbf{q}}
\nonumber
\\
& =  \frac{1}{d}\tau^{-\langle \mathbf{p},\mathbf{q}\rangle}\langle \psi | D_{\mathbf{q}-\mathbf{p}}|\psi \rangle
\nonumber
\\
& = \frac{1}{d}  \sum_{a=0}^{d-1} \tau^{p_1p_2 + q_1 q_2}\omega^{a q_2 - (q_1+a)p_2 }\langle \psi | a+ q_1-p_1 \rangle \langle a | \psi\rangle
\end{align}
Hence
\begin{align}
\dl \mathbf{p} \| P P^{\mathrm{T}} \| \mathbf{q} \dr
& = \sum_{\mathbf{u}\in\mathbb{Z}_d} \dl \mathbf{p} \| P \| \mathbf{u} \dr \dl \mathbf{q} \| P \| \mathbf{u} \dr
\nonumber
\\
& =\frac{1}{d^2}  \sum_{a,b,u_1,u_2=0}^{d-1} \tau^{p_1p_2+q_1q_2}\omega^{u_2(u_1+a+b) - (u_1+a)p_2-(u_1+b)q_2}
\nonumber
\\
& \hspace{1 in} \times
\langle \psi | a+u_1 -p_1\rangle \langle \psi | b+ u_1 - q_1 \rangle \langle a | \psi \rangle \langle b | \psi \rangle
\nonumber
\\
& = \frac{1}{d} \sum_{a,b=0}^{d-1} \tau^{p_1p_2+q_1q_2} \omega^{p_2 b+q_2 a}\langle \psi| -b - p_1\rangle \langle b |\psi\rangle \langle \psi | -a - q_1\rangle \langle a|\psi\rangle 
\nonumber
\\
& = \dl \mathbf{p} \| \hv \dr \dl \hv \| \mathbf{q} \dr
\end{align}
where $\| \hv \dr$ is the vector with components
\be
\dl \mathbf{p} \| \hv \dr = \frac{1}{\sqrt{d}} \sum_{a=0}^{d-1} \tau^{p_1p_2} \omega^{p_2 a} \langle \psi | - a - p_1 \rangle \langle a| \psi \rangle 
\ee
It is easily verified that $\| h \dr$ is normalized, and that $\dl \mathbf{p} \|\hv\dr$ is real.  

Finally, suppose that the dimension is odd.  Then the Wigner function of the state $|\psi\rangle$ is~\cite{Wootters,Vourdas}
\be
W(\mathbf{p}) = \frac{1}{d} \langle \psi | D^{\vpu{\dagger}}_{\mathbf{p}} \ptyD D^{\dagger}_{\mathbf{p}} |\psi\rangle = 
\frac{1}{d} \langle \psi | D_{2 \mathbf{p}} \pty | \psi \rangle 
\ee
where $\pty$ is the parity operator, whose action on the standard basis is $ \pty |a\rangle = |-a \rangle $.
It is straightforward to show
\be
\dl \mathbf{p} \| \hv \dr = \sqrt{d} W(-2^{-1} \mathbf{p}) 
\ee
where $2^{-1}=(d+1)/2$ is the multiplicative inverse of $2$ considered as an element of $\mathbb{Z}_d$:  \emph{i.e.} the unique integer  $0\le m <d$ such that $2 m = 1 \text{ (mod $d$)}$.

\section{Conclusion}

A curious fact about SIC-POVMs is that, although they are characterized by their being highly symmetric structures, they do not wear this property on their sleeve (so to speak).  If one casually inspects the components of a SIC-fiducial, without knowing in advance that that is what they are, there does not seem to be anything special about them at all.  Indeed, so far from there being any obvious pattern to the components, they seem, to a casual inspection, like a completely random collection of numbers.  Moreover, this is just as true of an exact fiducial as it is of a numerical one (see, for instance, the tabulations in Scott and Grassl~\cite{ScottGrassl}).  It is only when one looks at them through the right pair of spectacles, and takes the trouble to calculate the overlaps $\Tr(\Pi_r \Pi_s)$, that the symmetry becomes apparent.  The situation is a little reminiscent of a hologram, which only takes on the aspect of a meaningful image when it is viewed in the right way.  If one wanted to summarize the content of this paper  in a nutshell it could be said that we have exhibited some other pairs of spectacles---other ways of looking at a SIC---which cause its inner secrets (or at any rate some of its inner secrets) to become manifest. 

Rather than focusing on the SIC-vectors $|\psi_r\rangle$, as is usually done, we have focused on the angle tensors $\theta_{rs}$ and $\theta_{rst}$, and on the $T$, $J$ and $R$ matrices defined in terms of them.  This is an important change of emphasis because, rather than being tied to any particular SIC, these quantities characterize entire families of unitarily equivalent SICs.   Like the components of a SIC-fiducial, the angle tensors appear, to a casual inspection, like a random collection of numbers.  However, if one examines the spectra of the $T$, $J$ and $R$ matrices one realizes that, underlying the appearance of randomness, there is a high degree of order.  If one then goes on to examine the geometrical relationships between the subspaces onto which the $Q$, $Q^{\mathrm{T}}$ and $\bar{R}$ matrices project, as we did in Section~\ref{sec:geometry}, one finds yet more instances of structure and order.  To our minds what is particularly interesting about all of this is that none of it is obviously suggested by the defining property of a SIC, that $\Tr(\Pi_r\Pi_s) = 1/(d+1)$ for $r\neq s$.  

In the course of this paper we have several times expressed the hope that the Lie algebraic perspective on a SIC will lead to a solution to the existence problem.  Of course, that is only a hope, and it may not be fulfilled.  However, we feel on rather safer ground when we suggest that the solution is likely to come, if not from this  investigation, then from one which is like it to the extent that it focuses on a feature of a SIC which is not immediately apparent to the eye.  

Specializing to the case of a Weyl-Heisenberg covariant SIC, a fiducial vector $|\psi\rangle$ is a solution to the equations
\be
\bigl| \langle \psi | D_{\mathbf{p}} | \psi \rangle \bigr|^2 = \frac{d\delta_{\mathbf{p},\boldsymbol{0}}+1}{d+1}
\ee
Allowing for the arbitrariness of the overall phase of $|\psi\rangle$, and taking $|\psi\rangle$ to be normalized, this gives us  $d^2-1$ conditions on only $2d-2$ independent real parameters.  The equations are thus over-determined, and very highly over-determined when $d$ is large.  Nevertheless, they have turned out to be soluble in every case which has been investigated to date. It seems likely that progress will depend on finding  the structural feature which is responsible for this remarkable fact.   The motivation for this paper is the belief that it may be structural features of the Lie algebra $\gl(d,\mathbb{C})$ which are responsible.  That suggestion may  or may not be correct.  But if it turns out to be incorrect, the amount of effort which has been expended on this problem over a period of more than ten years, so far without fruit, suggests to us that the solution will depend on finding  some other structural feature of a SIC, which is not obvious, and which has hitherto escaped attention. 
}

\section{Acknowledgements}

The authors thank I. Bengtsson for discussions.  Two authors, DMA and CAF, were supported in part by the U.~S. Office of Naval Research (Grant No.\ N00014-09-1-0247). Research at Perimeter Institute is supported by the Government of Canada through Industry Canada and by the Province of Ontario through the Ministry of Research \& Innovation.


\end{document}